\newtheorem*{definition}{Definition}
\newtheorem{theorem}{Theorem}
\newtheorem{lemma}[theorem]{Lemma}
\newtheorem{proposition}[theorem]{Proposition}
\newtheorem{corollary}{Corollary}[theorem]
\global\long\def\bfn{{\bf n}}
\global\long\def\bfJ{{\bf J}}
\global\long\def\mA{\mathcal{A}}%
\global\long\def\mC{\mathcal{C}}%
\global\long\def\mH{\mathcal{H}}%
\global\long\def\mL{\mathcal{L}}%
\global\long\def\mM{\mathcal{M}}%
\global\long\def\mX{\mathcal{X}}%
\global\long\def\choose#1#2{\begin{pmatrix}#1\\
#2
\end{pmatrix}}
\global\long\def\e{\epsilon}%
\global\long\def\ra{\rightarrow}%
\global\long\def\avg#1{\left\langle #1\right\rangle }%
\global\long\def\f#1#2{\frac{#1}{#2}}%
\global\long\def\del{\partial}%
\global\long\def\t{\theta}%
\global\long\def\a{\alpha}%
\global\long\def\b{\beta}%
\global\long\def\g{\gamma}%
\global\long\def\G{\Gamma}%
\global\long\def\s{\sigma}%
\global\long\def\r{\rho}%
\global\long\def\d{\delta}%
\global\long\def\tr{\text{tr}}%
\global\long\def\bra#1{\left\langle #1\right|}%
\global\long\def\ket#1{\left|#1\right\rangle }%
\global\long\def\N{\mathbb{N}}%
\global\long\def\I{\mathbb{I}}%
\global\long\def\Z{\mathbb{Z}}%
\global\long\def\E{\mathbb{E}}%
\global\long\def\R{\mathbb{R}}%
\global\long\def\C{\mathbb{C}}%
\global\long\def\w{\omega}%
\global\long\def\D{\Delta}%
\global\long\def\app{\approx}%
\global\long\def\vp{\varphi}%
\global\long\def\lam{\lambda}%
\global\long\def\i{\text{i}}%
 \newcommand{\be}{\begin{equation}}
\newcommand{\beq}{\begin{equation}}
 \newcommand{\ee}{\end{equation}}
 \newcommand{\bea}{\begin{align}}
 \newcommand{\eea}{\end{align}}
\def\nn{\nonumber}
\begin{document}

{ \flushright ~~~~~~~~~~~~~~~~~~~~~~~~~~~~~~~~~~~~~~~ USTC-ICTS/PCFT-24-43 }

\title{D-commuting SYK model: building quantum chaos from integrable blocks }

\author{Ping Gao}
\affiliation{Department of Physics and NHETC, Rutgers University, Piscataway, NJ 08854, USA}
\author{Han Lin}
\affiliation{Kavli Institute for Theoretical Sciences (KITS), University of Chinese Academy of Sciences, Beijing 100190, China}%
\author{Cheng Peng}
\affiliation{Kavli Institute for Theoretical Sciences (KITS), University of Chinese Academy of Sciences, Beijing 100190, China}
\affiliation{Peng Huanwu Center for Fundamental Theory, Hefei, Anhui 230026, China}

\date{\today}

\begin{abstract}
We construct a new family of quantum chaotic models by combining multiple copies of integrable commuting SYK models. As each copy of the commuting SYK model does not commute with others, this construction breaks the integrability of each commuting SYK and the family of models demonstrates the emergence of quantum chaos. We study the spectrum of this model analytically in the double-scaled limit. As the number of copies tends to infinity, the spectrum becomes compact and equivalent to the regular SYK model. For finite $d$ copies, the spectrum is close to the regular SYK model in UV but has an exponential tail $e^{E/T_c}$ in the IR. We identify the reciprocal of the exponent in the tail as a critical temperature $T_c$, above which the model should be quantum chaotic. $T_c$ monotonically decreases as $d$ increases, which expands the chaotic regime over the non-chaotic regime. We propose the existence of a new phase around $T_c$, and the dynamics should be very different in two phases. We further carry out numeric analysis at finite $d$, which supports our proposal. 

Given any finite dimensional local Hamiltonian, by decomposing it into $d$ groups, in which all terms in one group commute with each other but terms from different groups may not, our analysis can give an estimate of the critical temperature for quantum chaos based on the decomposition. We also comment on the implication of the critical temperature to future quantum simulations of quantum chaos and quantum gravity. 
\end{abstract}

\maketitle


\section{Introduction} \label{sec:intr}

One interesting question regarding holography is understanding how holographic duality could emerge in quantum systems. Since the birth of AdS/CFT correspondence, researchers have identified various models exhibiting holographic features. Beyond these explicit constructions, a universal microscopic mechanism underlying holography remains under exploration. In recent years, the SYK model~\cite{sachdev1993gapless,kitaev2015simple,Maldacena:2016hyu} has emerged as one of the simplest toy models possessing holographic properties. This quantum mechanical model describes $N$ Majorana fermions in (0+1) dimensions with a random $q$-local Hamiltonian. Owing to its simplified structure, it allows detailed examination of holographic emergence.

One signature of holography in the SYK model is quantum chaos~\cite{Shenker:2013pqa,Shenker:2014cwa,Roberts:2014isa,kitaev2015simple,Maldacena:2016hyu,Kobrin:2020xms}, which is dual to the boost symmetry of the AdS$_2$ black hole spacetime near the horizon. At low temperatures, the SYK model has an effective action described by a Schwarzian derivative~\cite{Maldacena:2016upp}, matching the boundary description of reparameterization modes~\cite{Maldacena:2016hyu} in Jackiw-Teitelboim gravity with negative cosmological constant~\cite{Teitelboim:1983ux,Jackiw:1984je}. The partition function derived from the Schwarzian derivative is 1-loop exact and yields the IR spectrum $\rho(E)\sim \sinh C\sqrt{E-E_0}$ for some $C>0$ \cite{Stanford:2017thb}. The $e^{C\sqrt{E-E_0}}$ scaling in the IR regime coincides with the spectrum of microscopic states of a black hole, suggesting a holographic interpretation. Moreover, a finite $E_0$ implies a lower bound for black hole energy. Since the Schwarzian derivative leads to exponential growth $e^{\lambda t}$ in out-of-time-ordered correlators (OTOCs) with $\lambda=1$~\cite{Maldacena:2016hyu}, this spectrum is strongly related to maximal quantum chaos\footnote{$\lambda$ represents the quantum Lyapunov exponent characterizing how fast the local information is scrambled over the system. A theorem \cite{Maldacena:2015waa} establishes an upper bound $\lambda\leq 1$ due to unitarity, with $\lambda=1$ corresponding to maximal chaos – a saturation condition achieved by gravitational systems.}.

 Another example of a bounded continuous spectrum in the ensemble average appears in random matrix theory (e.g., the GUE ensemble). Treating each random Hermitian matrix sample as as the Hamiltonian of some system, the system exhibits explicitly chaotic dynamics. Notably, a double-scaled matrix model has been shown dual to JT gravity~\cite{Saad:2019lba}, which is in spirit analogous to the duality between random matrices and string theories \cite{Ginsparg:1993is}.

Furthermore, SYK type models possess another appealing feature: potential experimental realizations in various platforms~\cite{Danshita:2016xbo,Garcia-Alvarez:2016wem,PhysRevLett.121.036403,Pikulin:2017mhj,Chew:2017xuo,Luo:2017bno,Brzezinska:2022mhj,Uhrich:2023ddx}. Intriguingly, a recent proposal~\cite{Jafferis:2022crx} suggests that certain SYK-like models implemented on quantum processors could exhibit signatures of traversable wormholes. However, the specific model proposed is "commuting" – all Hamiltonian terms mutually commute – resulting in numerous conserved quantities that preclude quantum chaos.

The commuting SYK model is further analyzed in~\cite{Gao:2023gta}. As a random yet integrable SYK variant, it displays an unbounded Gaussian-distributed spectrum. It has been demonstrated in~\cite{Gao:2023gta} that this integrable system does not exhibit chaotic behavior despite superficial similarities. Interestingly, this provides a new angle to explore the mechanism of quantum chaos (or perhaps holography) in the regular SYK model. Specifically, we can decompose the regular SYK Hamiltonian into multiple groups of mutually commuting terms (and is thus integrable), but different groups do not commute with each other. If we imagine the SYK Hamiltonian as the limit of adding these groups one by one, the chaotic nature emerges as we include huge numbers of these groups, and the integrability is dramatically violated.

To understand this process from integrability to chaos in more detail, we explicitly constructed $d$ copies of the commuting SYK Hamiltonians $\tilde H_a$ for $a=1,\cdots,d$. For each $\tilde H_a$, we used different commuting basis so that $\tilde H_a$ do not commute with each other. We may use a parameter $q=e^{-\lam}\in[0,1]$ to characterize strength of the non-commutativity between two $\tilde H_a$: if $q\ra 0$ the non-commutativity is extremely strong and $q\ra 1$ means $\tilde H_a$ all commute. The total Hamiltonian is given by
\be 
\tilde{H}=\f 1{\sqrt{d}}\sum_{a=1}^{d}\tilde{H}_{a}
\ee
Since each $\tilde H_a$ has an ensemble of random couplings, we are interested in the spectrum of $\tilde H$ under ensemble average
\be 
\r(E)=\E \left[\sum_{\tilde\lambda} \d(E-\tilde \lambda)\right]
\ee
where $\tilde \lambda$ is the eigenvalue of $\tilde H$ for each instantiation in the random ensemble. For convenience, let us call this model as ``d-commuting SYK model" (dcSYK). 

We show that the spectrum is non-compact unless $d\ra \infty$. In the latter case, the spectrum (in the double-scaled limit) is identical to the regular SYK model. In particular, there is an emergent finite lower bound of the spectrum $E_0$ as $d\ra\infty$. Our model gives a natural interpolation between the integrable $d=1$ commuting SYK model to an equivalently (in the spectrum sense) chaotic SYK model. 

Tentatively, we regard a compact spectrum with a square-root form of edge $\r(E)\sim \sqrt{E-E_0}$ as a characteristic of quantum chaos as it is the common feature of SYK model and random matrices. For the dcSYK model with finite $d$, the spectrum is close to the $d\ra\infty$ form above some energy scale and has an infinite exponential tail $e^{E/T_c}$ in the low energy regime. Equivalently, for the canonical ensemble of temperature $T$, there is a critical temperature $T_c$ so that we can approximately regard the spectrum as chaotic for $T>T_c$ and non-chaotic for $T<T_c$. We compute $T_c$ in two nontrivial cases and find that it is a monotonically decreasing function of $d$. This shows how the chaotic regime emerges and extends as we include more components $\tilde H_a$ in the Hamiltonian.

There is a recent work \cite{Berkooz:2024evs,Berkooz:2024ofm} constructing an interpolation between integrability and quantum chaos with a tunable linear combination of the commuting SYK model and the regular SYK model. They also find a critical temperature separating the integrable and chaotic regimes. Our model is probably a more natural and nontrivial interpolation and shows how quantum chaos emerges as $d$ increases. Moreover, our model has an operational significance in analyzing a generic random model. Following the aforementioned idea of decomposing the regular SYK model into integrable groups, this decomposition can be done for any given local random Hamiltonian. Minimizing the number of groups $d$ in the decomposition, we can compute an averaged non-commutativity parameter $\lam$, and use the dcSYK model to estimate the critical temperature $T_c$. This gives a rapid diagnosis of how chaotic a given model is. In this sense, we propose that our model is a prototype of a universality of random models, which is parameterized by $d$ and $\lam$.

The paper is organized as follows. We define the model and explain the chord diagram formalism in Sec. \ref{sec:setup}. Then we compute the exact spectrum in the $q=0$ case using free probability in Sec. \ref{sec:free}, and compute the approximate IR spectrum for $q\ra 1$ in the Schwarzian limit in Sec. \ref{sec:coarse}. In Sec. \ref{sec:5} we discuss the thermodynamical meaning of the critical temperature $T_c$. In Sec. \ref{sec:num} we use exact diagonalization to numerically compute the spectrum and spectral form factor of the dcSYK model. In Sec. \ref{sec:concl} we summarize the results and list a few future directions.

\section{Setup} \label{sec:setup}

Consider a system with $2N$ Majorana fermions $\psi_{i}$ for $i=1,\cdots,2N$
obeying $\{\psi_{i},\psi_{j}\}=2\d_{ij}$. We choose the representation
to be $2^{N}$-dimensional. The Hamiltonian is the sum over $d$ copies
of commuting SYK models \cite{Gao:2023gta}
\begin{equation}
\tilde{H}=\f 1{\sqrt{d}}\sum_{a=1}^{d}\tilde{H}_{a},~\tilde{H}_{a}=\sum_{i_{1}<\cdots<i_{p}}J_{i_{1}\cdots i_{p}}^{a}\mX_{i_{1}}^{a}\cdots\mX_{i_{p}}^{a} \label{1-dcsyk}
\end{equation}
where $\mX_{j}^{a}$ are bilinear in Majorana fermions and hermitian
\begin{equation}
\mX_{j}^{a}\equiv i \psi_{2j-1}\psi_{[(2j-4+2a)\mod (2N)]+2},~j=1,\cdots,N
\end{equation}

For each $a$, different $\mX_{i}^{a}$ commutes with each other and
thus each $\tilde{H}_{a}$ is a commuting SYK model. The commuting SYK model is indeed the $q$-local version \cite{derrida1980random,gardner1985spin} of the well-known Sherrington-Kirkpatrick (SK) model \cite{sherrington1975solvable,thouless1977solution,parisi1979infinite} for spin glass. It is an integrable model because we can simultaneously diagonalize $\mX_i^a$ for all $i$ inside $\tilde H_a$ due to their commuting nature. However, for
different $a$ they do not commute when $\mX_{i}^{a}$ and $\mX_{j}^{b}$
have one overlap of Majorana fermions. Therefore, we can regard the number of copies $d$ as a parameter deforming away from the integrability. The random coupling cofficient $J_{i_{1}\cdots i_{p}}^{a}$
has Gaussian distribution
\begin{equation}
\E[J_{i_{1}\cdots i_{p}}^{a}]=0,\quad\E[(J_{i_{1}\cdots i_{p}}^{a})^{2}]=\s^{2}\equiv J^{2}/\begin{pmatrix}N\\
p
\end{pmatrix}
\end{equation}
Note that this definition of variance is compatible with \cite{Erdos:2014zgc,Berkooz:2018jqr} though different from the usual convention in \cite{Maldacena:2016hyu} by a constant factor of $p^2/N$. For the partition function, this difference can be reconciled by a shift of inverse temperature $\b\ra\b/\sqrt{p^2/N}$. In the following we will choose $J=1$. Let us denote $I=\{i_{1}\cdots i_{p}\}$
and write $\tilde{H}_{a}=\sum_{I}J_{I}^{a}X_{I}^{a}$ with a compressed
notation. By definition, we have $(X_I^a)^2=\I_{2^N}$.

As a reference, let us write down the Hamiltonian of the SYK model \cite{sachdev1993gapless,kitaev2015simple} that is closely related to our model. To have a better comparison, we slightly changed the notation as
\be 
\tilde H_{SYK}=i^{p}\sum_{i_1<\cdots<i_{2p}}J_{i_1\cdots i_{2p}}\psi_{i_1}\cdots \psi_{i_{2p}} \label{4-syk}
\ee
and the random coupling is 
\begin{equation}
\E[J_{i_{1}\cdots i_{2p}}]=0,\quad\E[(J_{i_{1}\cdots i_{2p}})^{2}]= J^{2}/\begin{pmatrix}2N\\
2p
\end{pmatrix} \label{5-syk}
\end{equation}
with a convenient choice $J=1$.

\subsection{Multi-color chord diagrams}

As explained in Sec. \ref{sec:intr}, we are interested in the spectrum of the dcSYK model. For this purpose, we need to compute the partition function of $\tilde H$ under ensemble average of all $J^a_I$'s
\begin{align} 
Z(\b)&=\E [\tr e^{-\b \tilde H}]=\sum_{n=0}^\infty\f{(-\b)^n}{n!}M_n \label{7-Z}\\
M_n&\equiv \f{1}{d^{n/2}}\sum_{a_i}\E[\tr\tilde H_{a_1}\cdots \tilde H_{a_n}] \label{6-mn}
\end{align}
The $n$-moment $M_n$ can be expanded in terms of products of $n$ $X^a_I$
\be 
\E[\tr\tilde H_{a_1}\cdots \tilde H_{a_n}]=\sum_{I_i}\E [J_{I_1}^{a_1}\cdots J_{I_n}^{a_n}]\tr X^{a_1}_{I_1}\cdots X^{a_n}_{I_n} \label{6}
\ee
and compute each $n$-th moment of $X^a_I$. As the ensemble is Gaussian, $\E [\cdots]$ is equivalent to sum over all possible Wick contractions of $J^a_I$. It follows that $n$ must be even to have a nontrivial $M_n$, in which $\tilde H_a$ are pairwise contracted. Therefore, we can represent $M_{2n}$  as summing over all chord diagrams on a disk with $n$ chords \cite{Garcia-Garcia:2017pzl,Garcia-Garcia:2018fns,Berkooz:2018jqr,Berkooz:2018qkz} (see Fig. \ref{1a} for an illustration of a chord diagram). The simplest example is
\begin{equation}
M_2=\f{\s^2}{d}\sum_{a,I}\tr X_{I}^{a}X_{I}^{a}=2^N
\end{equation}
The second example is
\begin{align}
M_4=&\f{1}{d^2}\sum_{a,b} (\E[\tr\tilde H_a\tilde H_a\tilde H_b\tilde H_b]+\E[\tr\tilde H_a\tilde H_b\tilde H_b\tilde H_a]  \nn\\
&+\E[\tr\tilde H_a\tilde H_b\tilde H_a\tilde H_b]  ) \nn\\
=&2\cdot 2^N+\f{\s^4}{d^2}\sum_{a,b,I,I'}\tr X^a_I X^b_{I'} X^a_I X^b_{I'} \label{9} 
\end{align}
where the three terms in the first line are represented as chord diagrams in Fig. \ref{1b}.

\begin{figure}
\begin{centering}
\subfloat[\label{1a}]{\begin{centering}
\includegraphics[height=2cm]{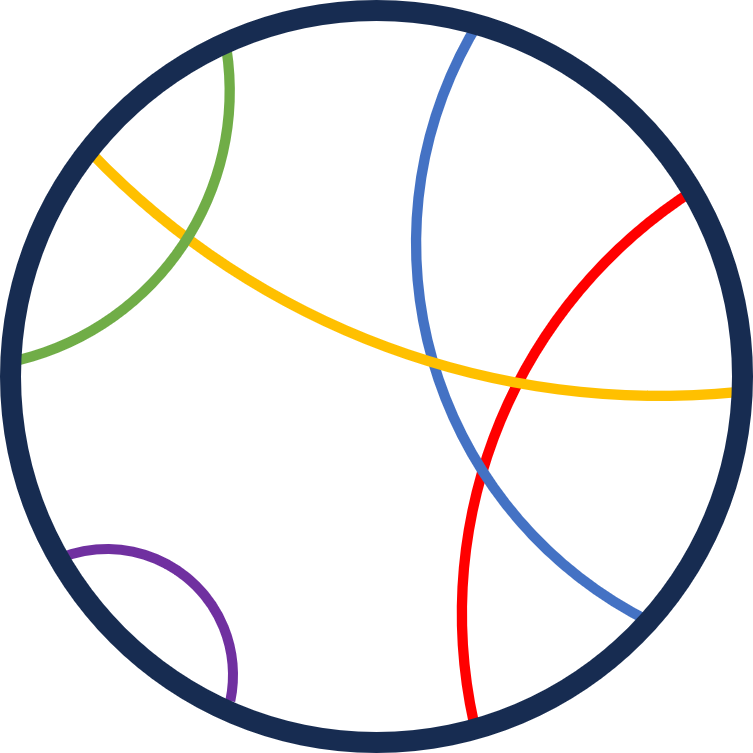}\end{centering}}
\hfill\subfloat[\label{1b}]{\begin{centering}
\vspace{0.12cm}
\includegraphics[height=1.8cm]{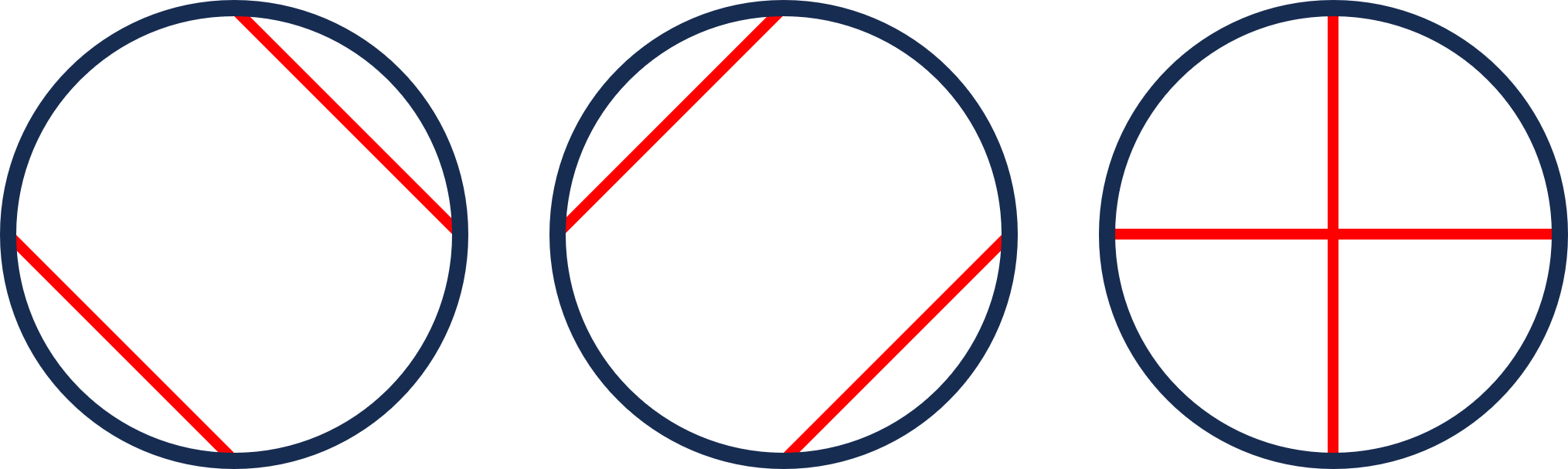}\end{centering}}
\par\end{centering}
\caption{\justifying (a) A generic chord diagram within $M_{10}$, where each end of a chord represents a $\tilde H_a$ in the trace, and each chord represents two contracted $\tilde H_a$ with identical index $a$. Different $a$ are labeled by different colors. (b) The three chord diagrams for $M_4$.}
\end{figure}

To compute the second term in \eqref{9} is generically complicated. 
Since $X_{I}^{a}$ is a string of $\psi_{i}$, they obey the following
commutation relation
\begin{equation}
X_{I}^{a}X_{I'}^{b}=(-)^{|I\cap I'|}X_{I'}^{b}X_{I}^{a},\quad(a\neq b)
\end{equation}
where $|I\cap I'|$ means the total numbers of overlapped Majorana
fermions in the two strings $X_{I}^{a}$ and $X_{I'}^{b}$. Applying this two the second term of \eqref{9} and noting $X^a_I$ and $X^b_{I'}$ being commutative if $a=b$, we can switch the order of $X^b_{I'}$ and $X^a_I$ and have
\be 
M_4=2^N(2+1/d)+\f{\s^4}{d^2}\sum_{a\neq b,I,I'}(-)^{|I\cap I'|} \label{eq:m4-11}
\ee
where the $1/d$ term comes from $a=b$.

Summing over the last term becomes an involved combinatorial problem, which becomes practically even harder for higher moments because we need to switch multiple pairs of $X^a_I$. Fortunately, the following double-scaling limit renders the computation of the $n$-th moment tractable
\begin{equation}
N\ra\infty,\quad\lambda \equiv 4p^{2}/N\quad\text{fixed}\label{eq:4}
\end{equation}
In the double scale limit (\ref{eq:4}), the combinatorial problem reduces to a probability problem \cite{Berkooz:2018jqr,Berkooz:2018qkz,Erdos:2014zgc}. In other words, we only need to compute the probability distribution of overlap $s$ fermions for
two random strings. 

Let us take $a=1$ and $b=2$. Given random choice of $I$ and $I'$, in the double scale limit,
$p\ll N$ and the typical draw of $I$ is sparse. For a fixed $I$
with $s$ Majorana fermion overlaps, there are $\begin{pmatrix}2p\\
s
\end{pmatrix}$ choices for overlapping spots, and $\begin{pmatrix}N-2p\\
p-s
\end{pmatrix}$ ways of $I'$ to overlap with them. Here $2p$ means two Majoranas
in each $\mX_{i}^{1}$ has potential to be overlapped by two different
$\mX_{i}^{2}$. In the double scale limit, the probability of $s$
Majonaran overlap is a Poisson distribution
\begin{equation}
P(s)=\begin{pmatrix}2p\\
s
\end{pmatrix}\begin{pmatrix}N-2p\\
p-s
\end{pmatrix}/\begin{pmatrix}N\\
p
\end{pmatrix}\ra\f{(\lambda/2)^{s}}{s!}e^{-\lambda/2}\label{eq:11}
\end{equation}
Summing over all $s$ leads to unity justifies our sparse distribution
assumption. 

For generic $a\neq b$, we show in Appendix \ref{app:1} that if $N$ is coprime with any number between 2 and $d-1$, we can always redefine $\psi_i$ such that the overlap problem becomes exactly the same as $a=1$ and $b=2$. If the coprime condition does not hold, this equivalence still holds in the large $N$ limit unless $d$ is fine-tuned and comparable with $N$.

With the probability distribution $P(s)$, we can easily compute the last term in \eqref{eq:m4-11} in the double scale limit
\begin{equation}
\f{\s^{4}(d^2-d)2^N}{d^2}\sum_{I_{a},I_{b}}(-)^{s}P(s)=(1-1/d)2^Ne^{-\lambda}
\end{equation}
Let us define 
\begin{equation}
q=e^{-\lambda}\in[0,1] \label{15-q}
\end{equation}
By \eqref{eq:m4-11}, $q$ is indeed the ensemble averaged penalty factor for exchanging two neighboring $\tilde H_a$ and $\tilde H_b$ with $a\neq b$. Relating this to the third chord diagram in Fig. \ref{1b}, we can assign $q$ to the crossing of the two chords therein for each pair $a\neq b$. On the other hand, if $a=b$ the penalty factor is just 1 because all $X^a_I$ are commutative to each other for fixed $a$. The latter explains the $1/d$ in the first term of \eqref{eq:m4-11}.

This pictorial algorithm generalizes to the computation of any moment $M_{2n}$ (for fixed $n$) because the overlapping of any two neighboring $\tilde H_a$ with different $a$ in the double scale limit can be regarded as indepedent Poisson distribution $P(s)$ due to the fact that the total $np$ $\mX_i^a$ in each diagram is sparsely drawn from $N$ $\mX^a_i$'s. 
Therefore, we summarize the rule as follows: 
\begin{enumerate}
    \item $M_{2n}$ is the sum over all possible Wick contraction of $\tilde H$. Each contraction pattern corresponds to a $n$-chord diagram.
    \item For each chord diagram, we can choose to color each chord from $d$ colors.
    \item For each crossing in a colored chord diagram, if the crossing is by two different colors, we give penalty factor $q$, or otherwise give penalty factor 1.
    \item The value of each colored chord diagram is the product of all penalty factors. 
    \item $M_{2n}$ is the sum over all diagrams with all possible colors times the normalization $2^N/d^{n}$.
\end{enumerate}

\subsection{Multi-color $q$-harmonic oscillators} \label{sec:2B}

The counting problem for single-color chord diagrams has been completely solved in \cite{touchard1952probleme,riordan1975distribution}. However, for the $d$-color chord diagram problem considered in this paper, there is no exact and complete solution yet (see also recent progress \cite{Lin:2023trc,Xu:2024hoc} of decomposing the representation of this algebra into irrep of 0- and 1-particle states). Nevertheless, there is an algebraic way to solve the chord diagram counting for the double-scaled SYK model \cite{Berkooz:2018qkz,Berkooz:2018jqr}, which can be easily generalized to the $d$-color case.

\begin{figure}
\begin{centering}
\includegraphics[height=2.5cm]{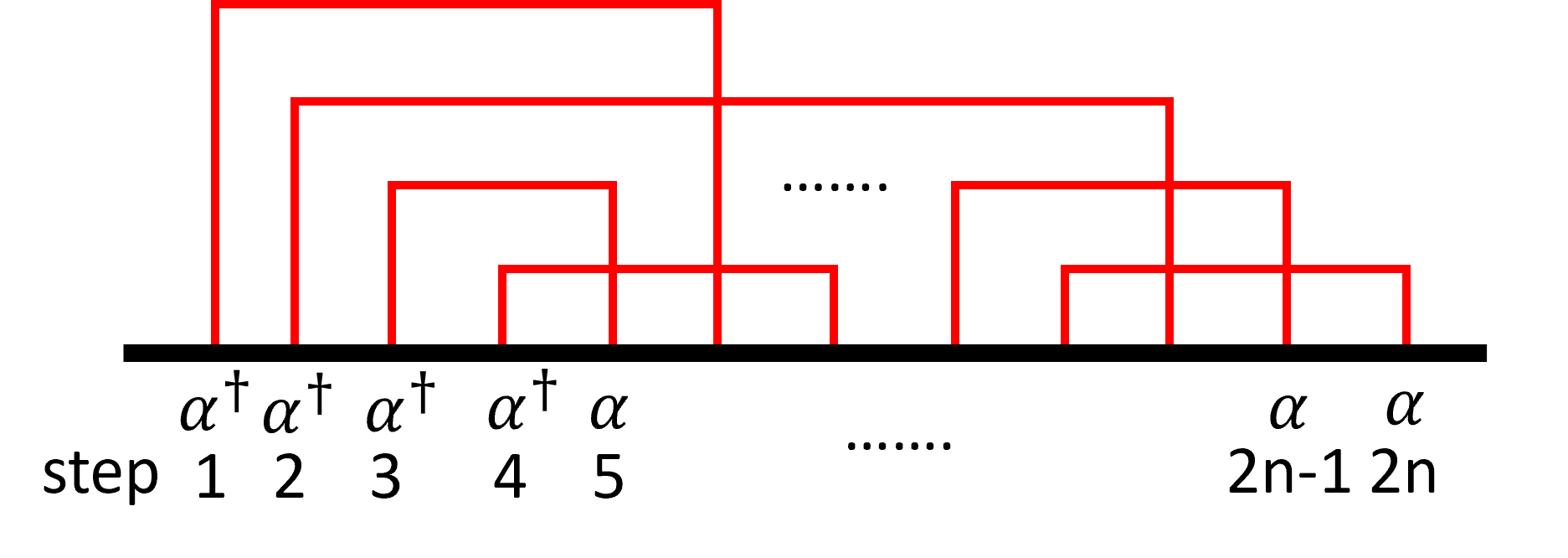}
\end{centering}
\caption{\justifying  Cut open a chord diagram. To generate a $n$-chord diagram, we need $2n$ steps. \label{fig:2}}
\end{figure}

Let us first review the case for the double-scaled SYK model, which corresponds to $d=1$ with $q=e^{-\lam}$ as the penalty factor for each crossing \footnote{The notation in \eqref{4-syk} and \eqref{5-syk} are chosen such that the penalty factor $q$ for the SYK model is the same as our multi-color commuting SYK model.}. For any chord diagram on a disk, we can cut open it and put all chords on a boundary line \cite{Berkooz:2018qkz,Berkooz:2018jqr} as illustrated in Fig. \ref{fig:2}. To draw a chord, it requires two steps: (1) creating a chord by drawing an open chord from the boundary line; (2) annihilating a chord by connecting the dangling end of the open chord to the boundary line.  An $n$-chord diagram has $2n$ ends on the boundary line, which corresponds to $2n$ steps to generate it. At each end, one can choose to create a chord, represented by $\a^\dagger$, or annhilate a line, represented by $\a$. Note that any pair of chords only cross each other at most once. To avoid overcounting the crossing number of chords, we require that the new open chord created by $\a^\dagger$ does not intersect any existing open chords. In other words, crossing only occurs at annihilating an open chord.

We can associate a Fock space representation for the these operators. Denote $\ket{0}$ as the state for no chords and $\ket{n}=\a^{\dagger n}\ket{0}$ as the state by creating $n$ open chords. Acting $\a$ on $\ket{n}$ annihilates one of the open chord from the $n$ choices. If we label the open chord from bottom to top in $\ket{n}$ as 1 to $n$, annihilating the $i$-th open chord will cross $i-1$ open chords below it. Therefore, the total penalty factor for acting $\a$ on $\ket{n}$ is
\be 
1+q+\cdots+q^{n-1}=\f{1-q^n}{1-q}=[n]_q
\ee
where $[n]_q$ is the $q$ analog of number $n$. After annihilating one open chord, state $\ket{n}$ becomes proportional to $\ket{n-1}$
\be 
\a\ket{n}=[n]_q\ket{n-1} \label{17}
\ee
For zero chord state, we require $\a\ket{0}=0$. Using \eqref{17}, we can show that $\a$ and $\a^\dag$ obey the algebra of $q$-harmonic oscillator
\be 
[\a,\a^\dag]_q\equiv \a\a^\dagger-q \a^\dag \a=1 \label{18-alg}
\ee
Using this algebra, we can define the inner product of states as
\be 
\avg{n|n'}\equiv\avg{0|\a^n\a^{\dag n'}|0}=\d_{nn'}[n]_q!
\ee
where the computation is by moving $\a$ across all $\a^\dagger$ using \eqref{18-alg} to kill $\ket{0}$, and $[n]_q!=[n]_q[n-1]_q\cdots [1]_q$ is the $q$-factorial. This definition of inner product explicitly treats $\a$ as the adjoint of $\a^\dag$. It follows that the Hilbert space spanned by $\ket{n}$ is a well-defined Fock space.

Recall that to compute the $n$-th moment of the SYK Hamiltonian $\tilde H_{SYK}$ in the double scale limit, we need to sum over all possible $n$-chord diagrams, which correspond to all Wick contraction patterns of $n$ $\tilde H_{SYK}$. Representing a chord diagram on a boundary line as Fig. \ref{fig:2}, the contraction of each $\tilde H_{SYK}$ can be through either creating an open chord or annihilating an open chord. Therefore, we define a new Hamiltonian $h=\a+\a^\dag$ \footnote{This $h$ is the same as the transfer matrix $T$ in \cite{Berkooz:2018qkz,Berkooz:2018jqr} after a similarity transformation.} as the avatar for the SYK Hamiltonian $\tilde H_{SYK}$. In other words, we have
\be 
\avg{0|h^n|0}=2^{-N}\E [\tr \tilde{H}^n_{SYK}]
\ee
Note that the Hamiltonain $h$ defined in this way is Hermitian, and related to the transfer matrix in \cite{Berkooz:2018qkz,Berkooz:2018jqr} by a similarity transformation.

The eigenvalue and eigenfunction of $h$ can be solved explicitly \cite{Berkooz:2018qkz,Berkooz:2018jqr} as
\begin{align} 
h\ket{\t}&=E(\t)\ket{\t}=\f{2\cos \t}{\sqrt{1-q}}\ket{\t}, \quad \t\in[0,\pi] \label{21-h}\\
\avg{\t|\t'}&=\f{\d(\t-\t')}{\mu(\t)},\quad \mu(\t)=\f{(e^{2i \t},e^{- 2i \t},q;q)_\infty}{2\pi}\\
\avg{\t|n}&=\psi_n(\t)=\f{H_n(\cos\t|q)}{(q;q)_n},\quad \psi_0(\t)=1
\end{align}
where $H_n(x|q)$ is the $q$-Hermite polynomial of order $n$, and the notations of $q$-Pochhammer are defined as
\begin{align} 
(a;q)_n&=\prod_{k=1}^n(1-aq^{k-1})\\(a_1,\cdots,a_n;q)_\infty&=\prod_{i=1}^k(a_i;q)_\infty
\end{align}
From \eqref{21-h} we see that the spectrum of the double-scaled SYK model is compact and bounded between $-2/\sqrt{1-q}$ and $2/\sqrt{1-q}$. The partition function is given by
\be 
Z(\b)=2^N\avg{0|e^{-\b h}|0}=2^N\int_0^\pi d\t\mu(\t)e^{-\b E(\t)}
\ee
which implies the spectrum
\be 
\r(E)=\mu(E(\t))|d\t/dE|,\quad E\in[-|E_0|,|E_0|] \label{30-spec}
\ee
where $E_0=-2/\sqrt{1-q}$ is the ground energy, and we have rescaled the normalization such that $\int dE \r(E)=1$.

The generalization to our $d$ color case is straightforward. At each step of generating a chord diagram, we now have $d$ choices of creating an open chord using $\a^\dag_a$ or annihilating an existing $a$-th color open chord using $\a_a$ for $a=1,\cdots,d$. Unlike the one-color case, the ordering of chords with different colors are not exchangable, and we need to label all states as a string
\be
\ket{a_1\cdots a_n}=\a^\dagger_{a_1}\cdots \a^\dagger_{a_n}\ket{0} \label{31-eq}
\ee
Acting $\a_a$ to this state, it can choose to kill any one of the existing $a$-th color open chord therein
\begin{align} 
&\a_a\ket{a_1,\cdots,a_n}\nn\\
=&\sum_{i=1}^n \d_{a,a_i} q^{i-1-\sum_{j<i} \d_{a,a_j}} \ket{a_1,\cdots,\slashed{a_i},\cdots,a_n}
\end{align}
and $\a_a\ket{0}=0$ for all $a$. One can use this formula to show the multi-color $q$-harmonic oscillator algebra
\begin{equation}
[\a_{a},\a_{a}^{\dagger}]=1,\quad[\a_{a},\a_{b}^{\dagger}]_q=0\quad(a\neq b) \label{31-cr}
\end{equation}
We define the bra state as $\bra{a_1,\cdots,a_n}=\bra{0}\a_{a_n}\cdots \a_{a_1}$, and the inner product is computed using the commutation relation \eqref{31-cr} to move all $\a_a$ to right to kill $\ket{0}$. With this inner product, the states defined in \eqref{31-eq} are neither unit normalized or orthogonal basis. However, the full Hilbert space is split into direct sum over subspaces labeled by $\{n^a\}_{a=1,\cdots,d}$, which counts how many $a_i=a$ in the string $a_1\cdots a_n$ in \eqref{31-eq} for each $a=1,\cdots,d$
\be 
\mH=\bigoplus_{\{n^a\}} \mH_{\{n^a\}} 
\ee
The dimension of each subspace is 
\be 
\dim \mH_{\{n^a\}}=\f{(\sum_a n^a)!}{\prod_a n^a!} \label{35-dim}
\ee
One can further use Gram–Schmidt process for \eqref{31-eq} to find the orthogonal unit basis in each $\mH_{\{n^a\}}$. We can also generalize the commutation relation to
\be 
\a_a \a_b^\dagger -q_{ab}\a_b^\dagger\a_a=\d_{ab}
\ee
for a generic symmetric matrix $q_{ab}$ and show that the Fock space is well-defined if $q_{ab}\in[-1,1]$ \cite{speicher1993generalized}.

The corresponding Hamiltonian for this system is
\begin{equation}
H=\f 1{\sqrt{d}}\sum_{a=1}^{d}H_{a}\equiv\f 1{\sqrt{d}}\sum_{a=1}^{d}(\a_{a}+\a_{a}^{\dagger})\label{eq:2}
\end{equation}
One can easily check that the $n$-th moment \eqref{6-mn} in the double scale limit can be computed as
\be 
M_n=2^N\avg{0|H^n|0}
\ee
Unfortunately, solving the eigenvalues and eigenfunctions of $H$ for generic $d$ and $q$ is very hard and we do not have a way to write down the spectrum.

\subsection{Double-scaled SYK: $d\ra \infty$ limit} \label{sec:2C}

Though we are unable to solve the problem for generic $d$, the solution should be equivalent to the double-scaled SYK model in $d\ra\infty$ limit. The idea is straightforward. For $2n$-th moment $M_{2n}$, it is the sum over all colored $n$-chord diagrams with $d$ possible colors on each chord. Suppose $d\gg n$, and for a typical colored chord diagram, all $n$ chords should have different colors. Therefore, the value of a typical colored chord diagram is simply $q$ to the power of the number of crossings, which is identical to the value of a chord diagram in the double-scaled SYK model as we reviewed in the last section. The number of such typical colored chord diagrams is $\begin{pmatrix}d\\
n
\end{pmatrix}n!$ and the total ways of coloring $n$ chords are $d^n$. In large $d$ limit, we have 
\be 
\lim_{d\ra\infty}\begin{pmatrix}d\\
n
\end{pmatrix}n!/d^n =1
\ee
which shows that 
\be 
\lim_{d\ra \infty} M_n=2^N\avg{0|h^n|0}
\ee

Indeed, the leading order large $d$ correction can be carried out by the same probability crossing argument. For any two crossing chords, they have $1/d$ probability to be the same color and $1-1/d$ probablity to be different colors. Therefore, for each crossing, the averaged penalty factor is 
\be
\bar q=1/d+(1-1/d)q=1-(1-1/d)(1-q) \label{39-eq}
\ee
which implies that the leading order correction to the spectrum of the $d$-commuting SYK model is the same as the double-scaled SYK model \eqref{30-spec} with replacement $q\ra \bar q$.

Physically, that the large $d$ limit reduces back to the double-scaled SYK model is not surprising. Given the regular SYK model Hamiltonian \eqref{4-syk}, we can decompose the total $\choose{2N}{2p}$ terms into $d$ groups such that all terms within each group are mutually commutative but terms from different groups are possibly non-commutative. For each group, the upper bound of the number of terms is $\choose{N}{p}$, namely the number of terms in each commuting SYK model $\tilde H_a$ \footnote{To show this, note that each term in the group need to overlap with another term with even numbers of Majorana fermions. Then one can always find a common 2-fermion basis like $\mX^a_j$ for the largest possible group. Then this group is equivalent to one $\tilde H_a$.}. This decomposition is far from unique, but since $\choose{N}{p}\ll\choose{2N}{2p}$ we should naturally expect existing a decomposition with most groups almost saturate this upper bound, which implies
\be 
d\app \choose{2N}{2p}/\choose{N}{p}\sim N^p
\ee
which is infinite in the double scale limit. Even though for each group, they are not organized as the $\tilde H_a$ in the dcSYK model, we may expect the overlapping counting between two groups obey the same statistics in the double scale limit. Taking this $d$ into the definition of our dcSYK model \eqref{1-dcsyk}, the variance of the random coupling is
\be 
\E[(J_{i_{1}\cdots i_{p}}^{a}/\sqrt{d})^{2}]\app 1/\choose{2N}{2p}
\ee
which coincides with the variance \eqref{5-syk} of the regular SYK model. This suggests the validity of this estimation of the decomposition in the double scale limit.

Taking this perspective of decomposing the regular SYK model into $d$ copies of commuting SYK model with $d\ra\infty$ actually means something more general. In above estimation, we take $d\sim N^p$ but in our setup we only require large $d$, which can be much smaller than $N$ and $p$. In this sense, we can view the dcSYK model as a sparse version of the regular SYK model with the same spectrum in the large $d$ and double scale limit. Since there were many discussions about sparse SYK models, we will comment on this point further in Sec. \ref{sec:concl}.

\section{Exact result for $q=0$: free probability} \label{sec:free}

Though the dcSYK model is hard to solve for a generic case, there is an exact solution for the special case $q=0$, which corresponds to $p\gg\sqrt{N}$. In the chord diagram language, the crossing between two chords with different colors is disallowed while the crossing between same-color chords has the same penalty factor 1 as before. This non-crossing feature between different color chords reflects the fact that all $H_a$ can be regarded as freely independent random variables. Given the spectrum of each $H_a$, there are well-established methods about free probability to compute the spectrum of their sum. For self-containedness, let us quickly review some relevant concepts of free probability in the following section. Curious readers may refer to \cite{nica2006lectures,mingo2017free}.

\subsection{A snapshot of free probability}

The materials in this subsection mostly follow \cite{nica2006lectures,mingo2017free}.
\begin{definition}
    For any algebra $\mA$ containing multiplicative identity $1$ (unital) and linear map $\vp:\mA\ra\C$ with $\vp(1)=1$, we call the pair $(\mA,\vp)$ as a {\it non-commutative probability space}. The elements of $\mA$ are called random variables. 
\end{definition}
As discussed in Sec. \ref{sec:2B}, the algebra $\mA$ of the $d$ color chords is generated by $\a_a$ and $\a_a^\dagger$, and $\avg{0|\cdot|0}$ is a linear map $\vp$.
\begin{definition}
    For $(\mA,\vp)$, suppose $\mA_1,\cdots,\mA_s$ are unital subalgebras. We say $\mA_1,\cdots,\mA_s$ are freely independent with respect to $\vp$ if for $r\geq 2$ elements $a_1,\cdots,a_r\in\mA$ such that for all $i=1,\cdots,r$
    \begin{itemize}
         \item $\vp(a_i)=0$ 
        \item $a_i\in \mA_{j_i}$ with $1\leq j_i\leq s$
        \item $j_1\neq j_2,j_2\neq j_3,\cdots,j_{r-1}\neq j_r$
    \end{itemize}
    we must have $\vp(a_1\cdots a_r)=0$. 
\end{definition}
By this definition, the subalgebras $\mA_a$ generated by $\a_a$ and $\a_a^\dag$ respectively are freely independent when $q=0$. Indeed, for   $\avg{0|H_{a_1}\cdots H_{a_n}|0}$ with all neighboring pairs $H_{a_i}$ and $H_{a_{i+1}}$ from different subalgebras, we will only have chord diagrams with crossings of different color chords. All of these diagrams vanish due to $q=0$. To characterize the free independence, we need to introduce free cumulant, which involves the non-crossing partition of $n$ numbers.
\begin{definition}
A partition $\pi$ is called non-crossing if there do not exist numbers $i,j,k,l$ with $1\leq i<j<k<l\leq n$ such that: $i$ and $k$ are in the same block of $\pi$, $j$ and $l$ are in the same block of $\pi$, but $i$ and $j$ are not in the same block of $\pi$. The collection of all non-crossing partitions of $\{1,\cdots,n\}$ is denoted as $NC(n)$.
\end{definition}
For example, the partition of $\{1,2,3,4\}$ into $\{(1,2),(3,4)\}$ is non-crossing while $\{(1,3),(2,4)\}$ is not. 
\begin{definition}
    Let $(\mA,\vp)$ be a non-commutative probability space. The corresponding free cumulants $\kappa_n:\mA^n\ra\C~(n\geq 1)$ are defined inductively in terms of moments by the moment-cumulant formula
    \be 
    \vp(a_1\cdots a_n)=\sum_{\pi\in NC(n)}\kappa_n(a_1,\cdots,a_n)
    \ee
    where, by definition, if $\pi=\{V_1,\cdots,V_r\}$ then
    \be  
\kappa_\pi(a_1,\cdots,a_n)=\prod_{\substack{V\in \pi \\ V=(i_1,\cdots,i_l)}} \kappa_l(a_{i_1},\cdots,a_{i_l})
    \ee
\end{definition}
This definition is inductive and starts with $n=1$: for $n=1$ we have $\vp(a)=\kappa_1(a)$; for $n=2$ we have $\vp(a_1 a_2)=\kappa_{\{(1,2)\}}(a_1,a_2)+\kappa_{\{(1),(2)\}}(a_1,a_2)=\kappa_2(a_1,a_2)+\kappa_1(a_1)\kappa_1(a_2)$, which implies $\kappa_2(a_1,a_2)=\vp(a_1a_2)-\vp(a_1)\vp(a_2)$. One can repeat this algorithm to any $n$ and express $\kappa_n$ in terms of a linear combination of products of $\vp$.

The most important property of free cumulants is it characterizes free independence. 
\begin{theorem}
The subalgebras $\mA_1,\cdots,\mA_s$ are freely independent if and only if all mixed cumulants vanishes, namely $\kappa_n(a_1,\cdots,a_n)=0$, where each $a_i$ comes from one of these subalgebras, but they do not all come from the same subalgebra. 
\end{theorem}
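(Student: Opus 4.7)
The plan is to establish both directions of the equivalence using the moment-cumulant formula together with the combinatorics of $NC(n)$. Throughout I will use multilinearity of $\kappa_n$ in each argument and the multiplicative extension $\kappa_\pi = \prod_{V \in \pi} \kappa_{|V|}$ that is built into the definition.

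For the forward direction (freeness implies vanishing of mixed cumulants), I would proceed by induction on $n$. The base case $n = 1$ is vacuous since a single element cannot be mixed. For the inductive step, given $a_i \in \mA_{j_i}$ with the $j_i$ not all equal, I first replace each $a_i$ by the centered $b_i = a_i - \vp(a_i)\cdot 1$; by multilinearity this shift only affects cumulants of strictly smaller order, all of which vanish in the mixed case by the inductive hypothesis. When the color sequence $(j_1,\ldots,j_n)$ is alternating, freeness gives $\vp(b_1 \cdots b_n) = 0$ directly, and substituting this into the moment-cumulant expansion, together with the inductive vanishing of every strictly smaller mixed cumulant, isolates the top term $\kappa_n(b_1,\ldots,b_n) = 0$. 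If some $j_i = j_{i+1}$, I would merge adjacent same-color factors into a single random variable in $\mA_{j_i}$ and reduce to the alternating case, using that $\kappa_\pi$ is multiplicative along blocks so that the merged cumulant can be re-expanded in the original arguments.

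For the backward direction, take centered elements $a_1,\ldots,a_r$ with $\vp(a_i) = 0$ and alternating indices $j_1 \neq j_2 \neq \cdots \neq j_r$. Expanding
\begin{equation}
\vp(a_1 \cdots a_r) = \sum_{\pi \in NC(r)} \kappa_\pi(a_1,\ldots,a_r),
\end{equation}
the hypothesis kills every $\pi$ containing a non-monochromatic block, so only partitions whose blocks are each contained in a single color class survive. I would then prove the combinatorial lemma that every non-crossing monochromatic partition of an alternating color sequence contains a singleton block: by induction on $r$, if the block $V$ containing $1$ has $|V| \geq 2$, then its second smallest element $k$ must satisfy $k \geq 3$ by alternation, and the interval $\{2,\ldots,k-1\}$ inherits a non-crossing monochromatic partition of an alternating sub-sequence, which yields a singleton by the inductive hypothesis — and that singleton is also a singleton of $\pi$. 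Once a singleton $\{i\}$ is present in $\pi$, the factor $\kappa_1(a_i) = \vp(a_i) = 0$ annihilates the corresponding $\kappa_\pi$, so the whole sum vanishes and freeness is established.

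The main obstacle is the non-alternating case of the forward direction: merging adjacent same-color entries must be paired with a clean way of identifying which cumulants appear after the merge, and this is precisely where the multiplicative extension $\kappa_\pi$ and a careful Möbius-style inversion of the moment-cumulant relation do the work. The backward direction, by contrast, reduces to the elementary monochromatic-singleton lemma sketched above together with a single application of the moment-cumulant formula.
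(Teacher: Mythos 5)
The paper does not actually prove this theorem: it is quoted as a standard result of free probability (Speicher's characterization of freeness; Theorem 11.16 of \cite{nica2006lectures}), so there is no in-paper argument to compare against and your proposal must stand on its own. Your outline follows the standard textbook route. The backward direction is essentially complete and correct: the hypothesis reduces the moment--cumulant sum to all-monochromatic non-crossing partitions, your singleton lemma for alternating colorings is proved correctly (the interval $\{2,\ldots,k-1\}$ is indeed a union of blocks by non-crossingness), and $\kappa_1(a_i)=\vp(a_i)=0$ finishes it.

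The forward direction has three gaps, one of them substantial. First, the centering step is misjustified: expanding $\kappa_n(a_1-\vp(a_1)1,\ldots,a_n-\vp(a_n)1)$ by multilinearity produces order-$n$ cumulants with the unit $1$ in some slots, not ``cumulants of strictly smaller order,'' and these (e.g.\ $\kappa_2(a_1,1)$) are not mixed cumulants covered by your inductive hypothesis; you need the separate lemma that $\kappa_n(\ldots,1,\ldots)=0$ for all $n\geq 2$, which has its own short induction. Second, in the alternating case the terms $\kappa_\pi$ with $\pi\neq 1_n$ are not all killed by ``the inductive vanishing of every strictly smaller mixed cumulant'': for colors $1,2,1,2$ the partition $\{(1,3),(2),(4)\}$ has only monochromatic blocks, and it is the singleton factor $\kappa_1(b_2)=0$ that annihilates it --- i.e.\ the forward direction needs the same singleton lemma you prove for the backward direction, which you never invoke there. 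Third, and most seriously, reducing the non-alternating case to the alternating one is not a consequence of ``multiplicativity of $\kappa_\pi$ along blocks'': it requires the formula for cumulants with products as entries, $\kappa_m(A_1,\ldots,A_m)=\sum_{\pi\in NC(n),\,\pi\vee\sigma=1_n}\kappa_\pi(a_1,\ldots,a_n)$ with $\sigma$ the interval partition of the grouping, which is a genuinely nontrivial combinatorial identity; and even granting it, one must still argue that every $\pi\neq 1_n$ appearing in that sum is killed by the induction or by a singleton. Calling this ``a careful M\"obius-style inversion'' points at the right tool but does not close the argument.
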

Using this theorem, we immediately have the following result for the sum of freely independent random variables. 
\begin{corollary}\label{45-cor}
    Let us define
\be 
\kappa_n^a\equiv \kappa_n(a,\cdots,a)
\ee
If $a$ and $b$ are freely independent random variables, then we have for all $n\geq 1$
    \be 
    \kappa^{a+b}_n=\kappa^a_n+\kappa^b_n 
    \ee
\end{corollary}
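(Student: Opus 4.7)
The plan is to derive the additivity from two ingredients stated just above: the multilinearity of the free cumulants $\kappa_n$ (not yet asserted, but implicit in the setup), and the characterization of free independence via vanishing mixed cumulants (the theorem preceding the corollary).

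First, I would verify that $\kappa_n:\mA^n\ra \C$ is multilinear in each slot. Because $\vp$ is linear and the algebra product is multilinear, the moment functionals $(a_1,\ldots,a_n)\mapsto \vp(a_1\cdots a_n)$ are multilinear. The moment-cumulant formula
\begin{equation}
\vp(a_1\cdots a_n)=\sum_{\pi\in NC(n)}\kappa_\pi(a_1,\ldots,a_n)
\end{equation}
defines $\kappa_n$ recursively: at stage $n$, the partition $\pi=\{(1,\ldots,n)\}$ contributes precisely $\kappa_n(a_1,\ldots,a_n)$, while every other $\pi\in NC(n)$ contributes a product of $\kappa_l$'s with $l<n$. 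Solving inductively expresses $\kappa_n$ as a fixed linear combination of products of $\vp$-evaluations on sub-tuples of $(a_1,\ldots,a_n)$; since each such $\vp$-factor is multilinear and the coefficients are independent of the $a_i$, the multilinearity of $\kappa_n$ follows by induction on $n$.

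Next, I would apply multilinearity to expand
\begin{equation}
\kappa_n^{a+b}=\kappa_n(a+b,\ldots,a+b)=\sum_{(c_1,\ldots,c_n)\in\{a,b\}^n}\kappa_n(c_1,\ldots,c_n).
\end{equation}
Let $\mA_a$ and $\mA_b$ denote unital subalgebras generated by $a$ and $b$ respectively, which are freely independent by hypothesis. The theorem preceding the corollary asserts that any mixed cumulant vanishes: $\kappa_n(c_1,\ldots,c_n)=0$ whenever the tuple $(c_1,\ldots,c_n)$ contains both an $a$ and a $b$. The only two surviving terms in the $2^n$-term expansion are the pure ones, yielding $\kappa_n^{a+b}=\kappa_n^a+\kappa_n^b$.

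The one point that requires a little care is the multilinearity of $\kappa_n$: it is not literally part of the definition, since $\kappa_n$ is introduced recursively via Möbius-type inversion of the moment-cumulant formula. This is the main (minor) obstacle, but it is handled cleanly by induction on $n$, after which the corollary is an immediate consequence of the vanishing-mixed-cumulants theorem applied slot by slot in the expansion of $\kappa_n(a+b,\ldots,a+b)$.
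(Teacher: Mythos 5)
Your proof is correct and follows exactly the route the paper intends: the paper states the corollary as an immediate consequence of the vanishing-mixed-cumulants theorem without writing out details, and the standard argument it has in mind is precisely your multilinear expansion of $\kappa_n(a+b,\ldots,a+b)$ into $2^n$ terms, of which only the two pure ones survive. Your care in justifying multilinearity of $\kappa_n$ via induction on the moment-cumulant recursion is the right way to fill the one gap the paper leaves implicit.
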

Since our Hamiltonian $H$ is the sum over freely independent $H_a$, the free cumulant of $H$ is simply the sum over the cumulant of $H_a$, which is just $d$ times the latter.

For any $a\in\mA$, let us define 
\be 
s_n^a\equiv \vp(a^n) 
\ee
and consider two formal series of $z$ defined as
\begin{align} 
M_a(z)&=1+\sum_{n=1}^\infty s_n^a z^n \\
C_a(z)&=1+\sum_{n=1}^\infty \kappa_n^a z^n \label{47-C}
\end{align}
Using the definition of free cumulant, one can show that these two series have a simple relation
\be 
M_a(z)=C_a(zM_a(z)) \label{48-mc}
\ee
Define the resolvent of a random variable $a$ as
\be 
R_a(z)=\vp(1/(z-a))=M_a(1/z)/z
\ee
Due to Corollary \ref{45-cor}, for two freely independent variables $a$ and $b$, we have $C_{a+b}(z)=C_a(z)+C_b(z)-1$. It follows from \eqref{48-mc} that
\begin{theorem} \label{thm-2}
    For $d$ freely independent variables $a_1,\cdots,a_d$, and $a\equiv a_1+\cdots+a_d$,  we have 
    \be 
    R_{a}(z)=z^{-1}\sum_{i=1}^d C_{a_i}(R_a(z))-(d-1)/z
    \ee
    where each $C_{a_i}(z)$ obeys
    \be 
    R_{a_i}(z)=C_{a_i}(R_{a_i}(z))/z \label{51-thm2}
    \ee
\end{theorem}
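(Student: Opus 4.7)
The plan is to combine two ingredients that are already on the table: the moment--cumulant functional equation \eqref{48-mc}, $M_a(z)=C_a(zM_a(z))$, and the additivity of free cumulants under free independence (Corollary \ref{45-cor}). The claim should drop out after routing these through the definition of the resolvent and carefully tracking the constant term $1$ in \eqref{47-C}.

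First I would convert \eqref{48-mc} into a resolvent identity. Substituting $z\mapsto 1/z$ gives $M_a(1/z)=C_a\!\bigl((1/z)M_a(1/z)\bigr)$. By the definition $R_a(z)=M_a(1/z)/z$, we have $M_a(1/z)=zR_a(z)$ and $(1/z)M_a(1/z)=R_a(z)$, so this becomes
\be
zR_a(z)=C_a(R_a(z)).
\ee
Applied separately to each summand $a_i$, this is exactly the auxiliary identity \eqref{51-thm2} in the statement; applied to $a=a_1+\cdots+a_d$, it will be the identity I plug the additivity result into in the final step.

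Next I would iterate Corollary \ref{45-cor} $d-1$ times to obtain $\kappa_n^{a}=\sum_{i=1}^d \kappa_n^{a_i}$ for every $n\geq 1$. Translating this through \eqref{47-C} requires bookkeeping of the constant term: the left-hand side $C_a(z)$ starts with $1$, but $\sum_{i=1}^d C_{a_i}(z)$ starts with $d$, so
\be
C_a(z)=\sum_{i=1}^d C_{a_i}(z)-(d-1).
\ee
Setting the argument to $R_a(z)$ in this identity and substituting into $zR_a(z)=C_a(R_a(z))$, then dividing by $z$, gives the theorem.

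The only technical point to check is that all the manipulations are legitimate as formal power series. The composition $C_a(zM_a(z))$ in \eqref{48-mc} is well-defined because $zM_a(z)$ has no constant term; after the substitution $z\mapsto 1/z$, the argument $R_a(z)\in z^{-1}\C[[z^{-1}]]$, so $C_a(R_a(z))$ is again a well-defined formal Laurent series, and likewise for each $C_{a_i}(R_a(z))$. I do not expect any substantial obstacle here: all the mathematical content is carried by Corollary \ref{45-cor}, and the rest of the proof is purely algebraic bookkeeping of the $+1$ in the definition \eqref{47-C} of $C_a(z)$.
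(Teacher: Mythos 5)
Your proposal is correct and follows essentially the same route as the paper, which likewise derives the result by combining the additivity of free cumulants (Corollary \ref{45-cor}), giving $C_a(z)=\sum_i C_{a_i}(z)-(d-1)$, with the resolvent form $zR_a(z)=C_a(R_a(z))$ of the moment--cumulant relation \eqref{48-mc}. You simply make explicit the substitution $z\mapsto 1/z$ and the bookkeeping of the constant term that the paper leaves implicit.
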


For any finite distribution of random variables, to determine whether the spectrum is compact or not, let us quote the following two lemmas
\begin{lemma} \label{lemma3} (Lemma 13.13 of \cite{nica2006lectures}) Let $\r$ be a finite measure on $\R$. Then the following statements are equivalent:\\
(1) $\r$ has compact support.\\
(2) The moment $s_n$ exist for all $n\in\N$ and the sequence does not grow faster than exponentially, namely existing a constant $c>0$ such that
\be |s^a_n|\leq c^n  \ee
for all $n\in N$.
\end{lemma}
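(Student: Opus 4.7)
The plan is to prove the two implications separately, with the forward direction being essentially a direct estimate and the reverse direction proved by contradiction using the evenness of $x^{2n}$.

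For $(1)\Rightarrow(2)$: assume $\text{supp}(\rho)\subseteq[-R,R]$ for some $R>0$, and set $M=\rho(\R)<\infty$ since $\rho$ is finite. On the compact interval the integrand $x^n$ is bounded, so $s_n=\int x^n\,d\rho$ exists for every $n$, and
\be
|s_n|\leq\int|x|^n\,d\rho(x)\leq M R^n.
\ee
Pick any $c>R$; then $MR^n/c^n\to 0$, so $|s_n|\leq c^n$ holds for all sufficiently large $n$. By enlarging $c$ once more (finitely many small-$n$ cases), the bound $|s_n|\leq c^n$ holds for all $n\in\N$.

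For $(2)\Rightarrow(1)$: suppose for contradiction that $\text{supp}(\rho)$ is not compact. Then for every $R>0$ we have $\rho(\{|x|>R\})>0$. Choose $R=2c$, where $c$ is the constant in (2), and set $\varepsilon:=\rho(\{|x|>2c\})>0$. Since $\rho$ is a positive measure and $x^{2n}\geq 0$, the even moment satisfies
\be
s_{2n}=\int x^{2n}\,d\rho\geq\int_{|x|>2c} x^{2n}\,d\rho\geq(2c)^{2n}\varepsilon.
\ee
Combining this with the hypothesis $s_{2n}=|s_{2n}|\leq c^{2n}$ yields $\varepsilon\leq(c/2c)^{2n}=4^{-n}$ for every $n\in\N$, contradicting $\varepsilon>0$. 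Hence $\rho$ is supported in $[-2c,2c]$ and therefore has compact support.

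The main obstacle, such as it is, lies in the reverse direction: one must isolate a quantitative piece of mass outside a ball in order to leverage the exponential blow-up of $x^{2n}$ against the exponential cap $c^{2n}$. The use of even moments is essential because odd moments can enjoy cancellations between positive and negative tails and therefore cannot directly witness non-compact support through their magnitude alone. Once one works with $s_{2n}$, positivity of the integrand and positivity of $\rho$ make the estimate immediate, and taking $R$ strictly larger than $c$ (e.g.\ $R=2c$) gives the geometric decay needed to force the contradiction.
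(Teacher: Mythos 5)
Your proof is correct. Note that the paper itself supplies no proof of this lemma --- it is quoted verbatim as Lemma 13.13 of the Nica--Speicher reference --- and your argument (the trivial bound $|s_n|\le M R^n$ for the forward direction, and the Chebyshev-type estimate $s_{2n}\ge (2c)^{2n}\,\rho(\{|x|>2c\})$ on even moments for the converse) is precisely the standard textbook proof of that result, so there is nothing to reconcile.
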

\begin{lemma} \label{lemma4} (Proposition 13.15 of \cite{nica2006lectures})
    That the moment sequence $s_n^a$ does not grow faster than exponentially is equivalent to that the free cumulant $\kappa_n^a$ does not grow faster than exponentially.
\end{lemma}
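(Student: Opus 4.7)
The plan is to recast both growth conditions as statements about the radius of convergence of the generating functions $M_a(z)$ and $C_a(z)$ defined in \eqref{47-C}, and then exploit the functional relation $M_a(z) = C_a(z M_a(z))$ from \eqref{48-mc} to transfer analyticity at the origin between the two series. Since a power series $\sum \gamma_n z^n$ has positive radius of convergence if and only if $|\gamma_n|$ is bounded by some exponential $c^n$, it suffices to prove that $M_a$ is analytic in a neighborhood of $0$ if and only if $C_a$ is.

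For the direction $\kappa_n^a = O(c^n) \Rightarrow s_n^a = O(C^n)$, I would fix $c$ so that $C_a(z)$ converges in $|z| < 1/c$, define $F(z,w) = C_a(zw) - w$, and observe that $F(0,1) = C_a(0) - 1 = 0$ and $\partial_w F(0,1) = -1 \neq 0$. The holomorphic implicit function theorem then produces a unique analytic $w(z)$ with $w(0) = 1$ and $F(z,w(z)) = 0$ in a small disk around $0$. Matching Taylor coefficients order by order shows that $w(z)$ agrees as a formal power series with $M_a(z)$, so $M_a(z)$ inherits a positive radius of convergence, giving the desired exponential bound on $s_n^a$.

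For the reverse direction $s_n^a = O(c^n) \Rightarrow \kappa_n^a = O(C^n)$, I would introduce $\phi(z) = z M_a(z) = z + s_1^a z^2 + \cdots$. Analyticity of $M_a$ near $0$ together with $\phi(0) = 0$ and $\phi'(0) = 1$ gives, by the holomorphic inverse function theorem, an analytic local inverse $\psi(u)$ with $\psi(0) = 0$. Substituting $z = \psi(u)$ into \eqref{48-mc} yields $C_a(u) = M_a(\psi(u))$, which is analytic near $u=0$, so $\kappa_n^a$ is exponentially bounded.

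The main obstacle is justifying the identification of the analytic function produced by the implicit/inverse function theorem with the a priori only formal series $M_a(z)$ or $C_a(z)$. I would handle this by checking that the Taylor coefficients of the analytic function satisfy the same recursion that the moment-cumulant formula imposes on $s_n^a$ (respectively $\kappa_n^a$), so uniqueness of formal power series solutions forces equality. An alternative, purely combinatorial route is to use the moment-cumulant formula $s_n^a = \sum_{\pi \in NC(n)} \kappa_\pi^a$ together with its Möbius inversion on the non-crossing partition lattice, combined with the Catalan bound $|NC(n)| \leq 4^n$ and analogous estimates on the Möbius function; each product $\kappa_\pi^a$ or $s_\pi^a$ then contributes at most $c^n$ under the assumed bound, and the counting factor is itself exponential, so exponential growth is preserved in either direction.
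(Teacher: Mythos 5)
The paper does not prove this lemma at all: it is quoted verbatim as Proposition 13.15 of \cite{nica2006lectures}, so there is no in-paper argument to compare against. Your proposal is a correct, self-contained proof. The analytic route is sound: the functional equation $M_a(z)=C_a(zM_a(z))$ determines $s_n$ from $\kappa_1,\dots,\kappa_n$ (and conversely $\kappa_n$ from $s_1,\dots,s_n$) by a triangular recursion, so the formal power series solution is unique and must coincide with the analytic function produced by the implicit (resp.\ inverse) function theorem; you correctly identify this as the one point needing justification, and either the coefficient-matching argument or the observation that formal composition with $\psi(u)$ (which fixes $0$) yields $C_a=M_a\circ\psi$ closes it. Cauchy estimates then convert positive radius of convergence back into exponential coefficient bounds in both directions. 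Your ``alternative, purely combinatorial route'' is in fact essentially the proof given in the cited reference: $|\kappa_\pi^a|\le c^n$ for every $\pi\in NC(n)$ because block sizes sum to $n$, $|NC(n)|=C_n\le 4^n$, and the M\"obius function of the non-crossing partition lattice is a product of signed Catalan numbers and hence also exponentially bounded, so both the moment-cumulant sum and its M\"obius inversion preserve exponential growth. The analytic argument buys a conceptual link to the $R$-transform and avoids estimating the M\"obius function; the combinatorial argument is more elementary and gives explicit constants (e.g.\ $|s_n^a|\le (4c)^n$ and $|\kappa_n^a|\le (16c)^n$). Either version is acceptable.
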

Using these two lemmas, we immediately have the following result.
\begin{proposition} \label{prop}
for $d$ freely independent variables $a_i$, if each has non-compact spectrum, then their sum $a=a_1+\cdots+a_d$ must have non-compact spectrum if $d$ is finite. 
\end{proposition}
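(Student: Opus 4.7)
The plan is to argue by contradiction, chaining together Lemmas \ref{lemma3} and \ref{lemma4} with Corollary \ref{45-cor}. Suppose, contrary to the conclusion, that the sum $a = a_1 + \cdots + a_d$ has compact spectrum. Lemma \ref{lemma3} then supplies some $c > 0$ such that $|s_n^a| \leq c^n$ for every $n$, and Lemma \ref{lemma4} upgrades this to an exponential bound on the free cumulants: there exists $c_1 > 0$ with $|\kappa_n^a| \leq c_1^n$ for all $n$.

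Next I would invoke Corollary \ref{45-cor}, iterated $d-1$ times, to write $\kappa_n^a = \sum_{i=1}^d \kappa_n^{a_i}$. In the concrete setting of Sec.\ \ref{sec:free}, the $a_i = H_a$ are identically distributed, so each $\kappa_n^{a_i}$ equals $\kappa_n^a/d$ and inherits the exponential bound $|\kappa_n^{a_i}| \leq c_1^n/d$. Running Lemmas \ref{lemma4} and \ref{lemma3} in reverse then yields compact support for the distribution of each $a_i$, contradicting the hypothesis that every $a_i$ has non-compact spectrum.

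The principal subtlety, and what I expect to be the main obstacle, lies in the genuinely non-identically-distributed regime: the identity $\kappa_n^a = \sum_i \kappa_n^{a_i}$ permits, a priori, systematic cancellations between the $d$ cumulant sequences that could render $\kappa_n^a$ exponentially bounded while each individual $|\kappa_n^{a_i}|$ grows super-exponentially. Ruling this out cleanly seems to require input beyond Lemmas \ref{lemma3}--\ref{lemma4}, for example a positivity-type estimate on even moments such as $s_{2n}^a \geq \max_i s_{2n}^{a_i}$ (plausible from operator-norm considerations for self-adjoint free elements), or an analyticity argument using $R_a(z) = \sum_i R_{a_i}(z)$ combined with the constraint that the Cauchy transform of a probability measure maps the upper half-plane to the lower half-plane, which restricts what singular behavior of $R_{a_i}$ at the origin is admissible. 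Since the application in Sec.\ \ref{sec:free} only needs the identically-distributed case, the short argument above already suffices for the purposes of this paper.
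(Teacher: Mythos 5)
Your proposal is correct and follows essentially the same route as the paper: the paper's proof is a one-line chaining of Corollary \ref{45-cor} with Lemmas \ref{lemma3} and \ref{lemma4} (stated directly rather than by contradiction, asserting that since each $\kappa_n^{a_i}$ grows super-exponentially so does $\kappa_n^a=\sum_i\kappa_n^{a_i}$). The cancellation subtlety you flag for non-identically-distributed $a_i$ is real but is equally unaddressed in the paper's own proof, which implicitly relies on the fact that in the application the $H_a/\sqrt{d}$ are identically distributed, exactly as in your short argument.
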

\begin{proof}The proof follows from Corollary \ref{45-cor} that the free cumulant of $a$ is simply the sum of that of each $a_i$. By above two lemmas, $\kappa_n^{a_i}$ grows faster than exponentially, and so is $\kappa_n^a$ and $s_n^a$.
\end{proof}

\subsection{Spectrum of $H$ for $q=0$}

Let us define the resolvent for $H$ as $R(z)$ and the resolvent for each commuting element $H_a/\sqrt{d}$ as $R_c(z)$. Accordingly, we define $C_c(z)$ in \eqref{47-C} for $H_a/\sqrt{d}$. The Hamiltonian of the commuting SYK model has Gaussian distribution \cite{Gao:2023gta}
\be 
\r_c(x)=\sqrt{\f{d}{2\pi}}e^{-d x^2/2} \label{55-rho}
\ee
which implies 
\be 
R_c(z)=\sqrt{\f{d}{2\pi}}\int_\R \f {dx} {z-x}e^{-d x^2/2} \label{53-rc}
\ee
It obeys a simple differential equation
\be 
d^{-1}\del_z R_c(z)+z R_c(z)-1=0 \label{54-dif}
\ee
By \eqref{51-thm2}, the inverse function of $R_c(x)$ is $R^{-1}_c(x)=C_c(z)/z$. It follows from \eqref{54-dif} that
\be 
\del_z(C_c(z)/z)=\f 1 {d(1-C_c(z))} \label{55-dC}
\ee
Using Theorem \ref{thm-2}, we have 
\be  
z/d+(d-1)/(d r)=C_c(r)/r,\quad r=R(z)
\ee
Taking $\del_r$ on both sides and applying \eqref{55-dC} leads to
\be 
\del_r\left(z+\f{d-1}{r}\right)=\f 1 {1-\left(z+\f{d-1}{r}\right)\f r d}
\ee
Define a new variable $w=z/d+(1-1/d)/r$, and above equation can be written as
\be 
d^{-1}\del_w r+w r-1=0 \label{58-dif}
\ee
which is exactly the same differential equation \eqref{54-dif}. By \eqref{53-rc}, we derive the following integral representation of $R(z)$
\be 
R(z)=\sqrt{\f{d}{2\pi}}\int_\R \f {dx} {z/d+(1-1/d)/ R(z)-x}e^{-d x^2/2} \label{59-r}
\ee
A consistent check is $R(z)\ra 1/z$ for $z\ra \pm i\infty$, which indeed fixes the integral constant of the differential equation \eqref{58-dif} and justifies \eqref{59-r}.

Let us first consider the $d\ra \infty$ limit. Redefining $x\ra x/\sqrt{d}$ in \eqref{59-r} and expanding the integrand in series of $1/d$, the RHS becomes
\be 
 \int \f {R(z)e^{-x^2/2}} {\sqrt{2\pi}} \left(1+ \f {1-zR(z)+x^2 R(z)^2}{d}+O(d^{-2})\right) \label{60}
\ee
The first term simplies gives $R(z)$ that matches with the LHS. Therefore, we should expect $1/d$ order term in \eqref{60} vanishes after integrating $x$, which leads to
\begin{align} 
&1-zR(z)+ R(z)^2=0 \\
\implies R(x)=&R_{\infty}(x)\equiv(x-\sqrt{x^{2}-4})/2
\end{align}
The spectrum is given by the discontinuity of the resolvent
\be 
\r(x) =\f{R(x+i\e)-R(x-i\e)}{-2\pi i} \label{63-rho}
\ee
which leads to the Wigner's semicircle
\be 
\r_\infty(x)=\f 1 {2\pi}\sqrt{4-x^2},\quad x\in [-2,2]
\ee
This is exactly the result from the free central limit theorem \cite{nica2006lectures,mingo2017free}, and also consistent with the $q\ra 0$ limit of the double-scale SYK model
\be 
\mu(\t)|d\t/dE|\ra \f {\sin\t}{\pi}=\r_\infty(E),\quad E=2\cos\t
\ee

For finite but large $d$, one can expand $R(z)$ as a series of $1/d$ and solve the integral equation \eqref{59-r} order by order. However, this perturbative solution of resolvent is divergent near the edge of the spectrum at $z=\pm 2$ even at $1/d$ order. This would naively give singular spectrum at $1/d$ order by \eqref{63-rho}. Indeed, for $d<\infty$ the spectrum is non-compact and has support on the whole real axis due to Proposition \ref{prop}. 

\begin{figure}
\begin{centering}
\includegraphics[width=8cm]{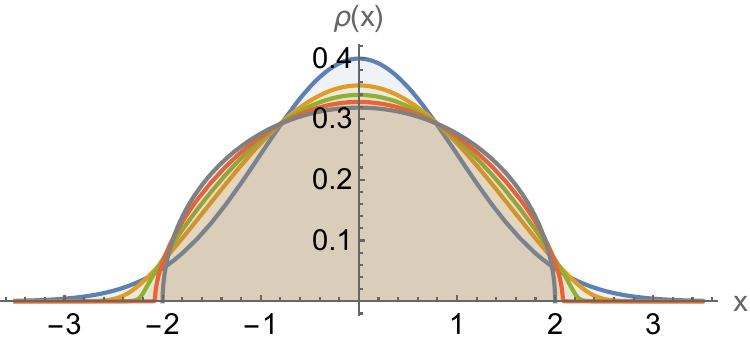}
\end{centering}
\caption{\justifying  The spectrum for $q=0$. In the plot, blue, yellow, green, red and gray are for $d=1,3,6,15,\infty$ respectively. \label{fig:3}}
\end{figure}

To numerically see how the spectrum looks like as we increase $d$, we can evaluate the integral \eqref{59-r} in terms of the Dawson function $D$
\be 
r=\sqrt{2d}D(\sqrt{d/2}w)\mp i\sqrt{\pi d/2}e^{-dw^2/2} \label{68-r-d}
\ee
where the $\mp$ sign is for $w\in\C_\pm$. As the spectrum is non-negative, by \eqref{63-rho} the map $z\mapsto r$ must flip the upper and lower half complex plane $\C_\pm \mapsto \C_\mp$. It follows that the map $z\mapsto w$ stays on the same half complex plane as $\C_\pm \mapsto \C_\pm$. Therefore, equivalently the $\mp$ sign in \eqref{68-r-d} is for $z\in\C_{\pm}$ respectively. Solve \eqref{68-r-d} numerically for real $z$ and different $d$. Using \eqref{63-rho}, we plot the spectra in Fig. \ref{fig:3}. From this figure, we see clearly that the spectrum becomes tighter and has a smaller tail as we increase $d$. 

As $d$ becomes large, the tail is almost invisible (e.g. the red curve for $d=15$ in Fig. \ref{fig:3}) and the spectrum globally looks quite close to the Wigner semicircle. However, as the spectrum must be non-compact, we will show later that the tail are exponentially suppressed and come from the non-perturbative effect of large $d$. On the other hand, the square-root law near the edge similar to Wigner semicircle is a perturbative effect from large $d$.

\subsubsection{Perturbative in $1/d$}

Let us zoom into the edge near $z=-2$ for large but finite $d$ case. It turns out to be convenient to work in $w$ variable rather than $r$. Writing the LHS of \eqref{68-r-d} as $(1-1/d)/(w-z/d)$, we can decompose \eqref{68-r-d} into two equations respectively for the real and imaginary parts
\begin{widetext}
\begin{align} 
 \frac{\left(1-\frac{1}{d}\right) w_i}{\left(w_r-\frac{z}{d}\right)^2+w_i^2} &= i\sqrt{\f{d}{2}}(D(\sqrt{d/2}w_+)-D(\sqrt{d/2}w_-))+ \sqrt{\f{\pi d}{2}}\f{e^{-dw_+^2/2}+e^{-dw_-^2/2}}{2} \label{69-eq}\\
 \frac{\left(1-\frac{1}{d}\right) \left(w_r-\frac{z}{d}\right)}{\left(w_r-\frac{z}{d}\right)^2+w_i^2}&=\sqrt{\f{d}{2}}(D(\sqrt{d/2}w_+)+D(\sqrt{d/2}w_-))-i\sqrt{\f{\pi d}{2}}\f{e^{-dw_+^2/2}-e^{-dw_-^2/2}}{2} \label{70-eq}
\end{align}
\end{widetext}
where $w_\pm=w_r \pm i w_i$ are separated as the real and imaginary parts with $w_i>0$. From \eqref{63-rho}, we know the spectrum is proportional to the imaginary part of $R(x)$, which roughly scales with $w_i$. Therefore, we need to consider small $w_i$ in the large $d$ expansion. Indeed, the expansion order depends on how small we assume $w_i$ is. Given $w_i=\tilde w_i/d^{n/2}$, we can take the following power law expansion ansatz
\be 
w_r=\sum_{j=0}^{n-1}w_{r,j}/d^j,~ z=E_0+y/d^n,~E_0=\sum_{j=0}^{n-1}z_j/d^j
\ee
where $z_0=-2$. Since we treat $1/d$ perturbatively, we can ignore all $e^{-dw_\pm^2/2}$ terms in \eqref{69-eq} and \eqref{70-eq} and solve them order by order. The results for the first a few orders are as follows
\begin{align} 
w_i&=\sqrt{z-E_0} \\
\{w_{r,j}\}_{j=0}^{n-1}&=\{-1,-5/2,-31/8,-357/16,\cdots\} \\
\{z_j\}_{j=0}^{n-1}&=\{-2,-1,-7/4,-57/8,\cdots\} 
\end{align}
With this perturbative solution, the spectral density is
\begin{align} 
\r(z)=&\frac{(1-1/d) w_i}{\pi((w_r-z/d)^2+w_i^2)} \label{75-eq}\\
=&\f{\sqrt{z-E_0}}{\pi}(1+O(1/d)) \label{76-eq}
\end{align}

This perturbative expansion at any order gives the $\sqrt{z+E_0}$ edge behavior and does not resolve the non-compactness of the spectrum. However, it gives the correction to the ground energy $E_0$ order by order. As we can see, the expansion coefficients $w_{r,j}$ and $z_j$ grow quickly as we go to higher orders, which reflects the fact that the large $d$ expansion is asymptotic and has zero radius of convergence. Usually, when we deal with the asymptotic expansion, we need to cutoff at some order $n$ depending on the choice of $d$. For practical purposes, we find for $d\sim O(10)$ that $n=4$ for \eqref{76-eq} gives a good approximation as illustrated in Fig. \ref{fig:4}.

\subsubsection{Non-perturbative in $1/d$} \label{sec: 3.2}

As we move away from the edge $z+E_0<0$, the perturbative expansion in $1/d$ does not hold anymore and $w_i$ is not polynomially but exponentially small. For this case, we should treat $w_i$ as the small parameter rather than $1/d$ to expand the two equations \eqref{69-eq} and \eqref{70-eq}. Expanding in linear order of $w_i$, these two equations become
\begin{align} 
\frac{(d-1) w_i}{(d w_r-z)^2}&=\sqrt{\f{\pi}{2d}}e^{-dw_r^2/2}+(1-\sqrt{2d}w_rD(\sqrt{d/2}w_r))w_i \label{77-eq}\\
\f{1-1/d}{dw_r-z}&=\sqrt{\f 2 d}D(\sqrt{d/2}w_r)-\sqrt{\f{\pi d}{2}}e^{-dw_r^2/2}w_rw_i \label{78-eq}
\end{align}
We can solve $w_i$ by \eqref{78-eq}
\be 
w_i=\f{\sqrt{\f 2 d}D(\sqrt{d/2}w_r)-\f{1-1/d}{dw_r-z}}{\sqrt{\f{\pi d}{2}}e^{-dw_r^2/2}w_r} \label{79-eq}
\ee
From \eqref{77-eq}, we know $w_i$ must be as small as $\sim e^{-dw_r^2/2}$, which by \eqref{79-eq} implies that we can assume
\be 
D(\sqrt{d/2}w_r)=\f{1-1/d}{\sqrt{2/d}(dw_r-z)}+\e \label{80-eq}
\ee
for small $\e\sim e^{-d w_r^2}$. Taking \eqref{79-eq} into \eqref{77-eq}, replacing the Dawson function using \eqref{80-eq}, and expanding it in leading order of $\e$, we can solve
\be 
\e=\frac{\pi  \sqrt{d} w_r e^{-d w_r^2} (z-d w_r)^2}{2 \sqrt{2} \left(d w_r^2-(d+1) w_r z+d+z^2-1\right)}
\ee
which by \eqref{79-eq} leads to
\be 
w_i=\sqrt{\frac{\pi }{2d}}\frac{ e^{-d w_r^2/2} (z-d w_r)^2}{d w_r^2-(d+1) w_r z+d+z^2-1} \label{82-eq}
\ee
This expression shows the leading order explicit non-perturbative effect $w_i\sim e^{-dw_r^2/2}$ for any $z,w_r,d \gg w_i$. Therefore, it universally holds for the asymptotic tail of any finite $d$. Apply the small $w_i$ expansion to \eqref{75-eq}, the spectrum density is
\be 
\r(z)\app\frac{(1-1/d) w_i}{\pi((w_r-z/d)^2)} \label{83-eq}
\ee

For large $d$, note that any non-perturbative correction to $w_r$ does not contribute to \eqref{82-eq}, we can simply set $\e=0$ in \eqref{80-eq} and solve it perturbatively in $1/d$. Taking ansatz
\be 
w_r(z)=\sum_{j=0}w_{r,j}(z)/d^j
\ee
we find the first two orders
\begin{align} 
w_{r,0}(z)&=(z-\sqrt{z^2-4})/2 \label{85-eq}\\
w_{r,1}(z)&=z+\f{z^2-3}{\sqrt{z^2-4}} \label{86-eq}
\end{align}
As these expressions have singularity at $z=-2$, they are valid only for $z<-2$. Taking them into \eqref{83-eq}, expanding it in large negative $y=z+2$ leads to
\be 
\r(z)\sim \f{d^{3/2}e^{-d-1}}{\sqrt{2\pi}}
e^{-2d(-y)-dy^2/2} \label{87-eq}
\ee

\begin{figure}
\begin{centering}
\includegraphics[width=2.8cm]{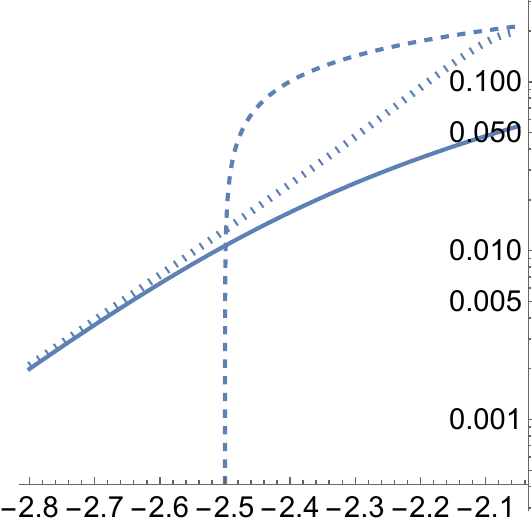}
\hfill
\includegraphics[width=2.8cm]{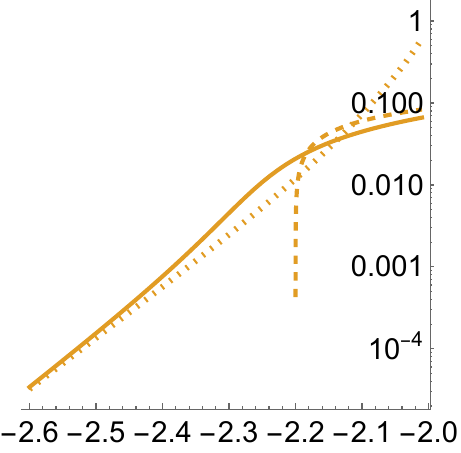}
\hfill
\includegraphics[width=2.8cm]{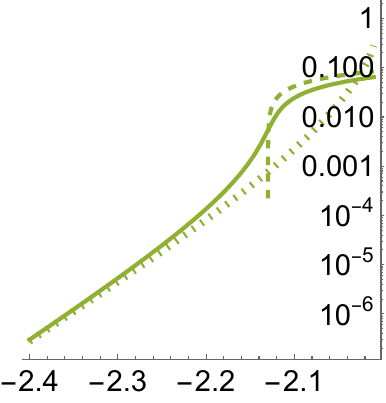}\\
\includegraphics[width=2.8cm]{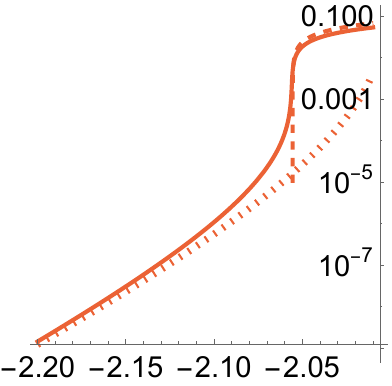}
\includegraphics[width=2.8cm]{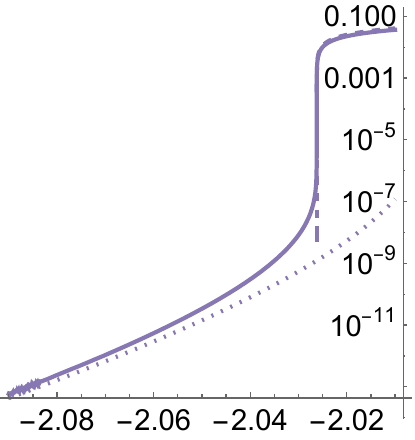}
\end{centering}
\caption{\justifying  The edge of the spectrum for $q=0$. In each figure, the solid line is the exact result, the dashed line is the large $d$ perturbative solution with square-root law, and the dotted line is the non-perturbative asymptotic solution with exponential suppresion. The figures with blue, yellow, green, red and purple curves are for $d=2,5,10,20,40$ respectively. \label{fig:4}}
\end{figure}

For generic $d$, we can numerically solve \eqref{80-eq} for $w_r$ and plug into \eqref{82-eq} and \eqref{83-eq}. Surprisingly, we numerically find that the large $d$ approximation \eqref{85-eq} and \eqref{86-eq} for \eqref{83-eq} work quite well for the exponentially small tail for both small and large $d$. In Fig. \ref{fig:4} we combine the perturbative solution and the non-perturbative approximation and compare them with the exact result. For small $d$, it is obvious that the large $d$ perturbative solution breaks down, but for $d\sim O(10)$ the square-root law becomes visible and close to \eqref{76-eq}. On the other hand, for all $d$ the asymptotic behavior is well captured by the non-perturbative effect.

From Fig. \ref{fig:4}, we see that the transition between these two behaviors is not quite sharp especially when $d$ is not large enough. To understand the interpolation, one needs to resum a special sequence including perturbative and non-perturbative effects simultaneously. This is beyond the scope of this work and we hope to understand it better in the future. Nevertheless, we can still define a transition point when these two approximations intersect each other, roughly at $y\app -1/d$. At this point, we can ignore the quadratic term in \eqref{87-eq} and the spectrum exponentially decays. We can estimate the decay rate as the inverse critical temperature $\b_c$ for the transition between the two effects, namely
\be 
T_c=1/\b_c=1/(2d)
\ee

This critical temperature is not in the thermodynamical limit sense because there is no sharp transition. We postulate the physical interpretation as follows. Prepare a thermal state with temperature $T$ of the dcSYK model with $q=0$, if $T>T_c$ we can approximately regard the system as a chaotic system because the spectrum is almost compact and obeys the square-root law; if $T<T_c$ we should not regard it as a chaotic system and its dynamics will be very different. To justify this postulation, we need to check correlation functions, which is for a future investigation. Moreover, we will have some discussions on the thermodynamical critical temperature in Sec. \ref{sec:5}.

\section{Coarse-grained path integral for $q$ close to 1} \label{sec:coarse}

In this section, we will study the spectrum, and especially the part near the edge for $q$ close to 1. This limit is specially interesting because for double-scaled SYK model, if we zoom into the edge and take $q\ra 1$, we will find the spectrum has the form of $\sinh C\sqrt{E-E_0}$, which is consistent with the Schwarzian derivative as the IR effective action. For convenience, let us call this limit as the Schwarzian limit from now. This gives a strong evidence that the low temperature double-scaled SYK model with $q\ra 1$ has a dual of JT gravity.

As explained in Section \ref{sec:2C}, our dcSYK model becomes equivalent to the double-scaled SYK model in $d\ra \infty$ limit. It is natural to ask how the finite $d$ effect changes the spectrum in the Schwarzian limit. For a finite $d$, each moment of $H$ for $q> 0$ must be greater than that for $q=0$ because each moment is the sum over many chord diagrams and $q>0$ has more diagrams than $q=0$ because the latter disallows intersection between different chords. Since diagrams are evaluated as positive numbers, the $n$-th moment of $q>0$ must grow faster than that of $q=0$, which is already faster than exponentially. By Lemma \ref{lemma3}, we immediately conclude that any finite $d$ dcSYK model must have non-compact spectrum. Therefore, our focus will be understanding how the finite $d$ effect gives the interplay between the chaotic bounded spectrum of $\sinh C \sqrt{E-E_0}$ and the unbounded tail.

\subsection{Coarse-grained path integral}

Though the exact solution for dcSYK is unavailable, for $q$ close to 1 there is a nice approximate method developed in \cite{Berkooz:2024evs,Berkooz:2024ofm} to deal with 2 types of chords in the double-scaled SYK model. The method is a coarse-grained path integral representation for chord diagrams, and we will slightly improve the form and generalize it to chord diagrams with $d$ colors. We will briefly review the construction of the coarse-grained path integral representation without giving derivation details. Interesting readers may refer to their original papers \cite{Berkooz:2024evs,Berkooz:2024ofm}.

\begin{figure}
\begin{centering}
\subfloat[\label{5a}]{\begin{centering}
\includegraphics[height=3cm]{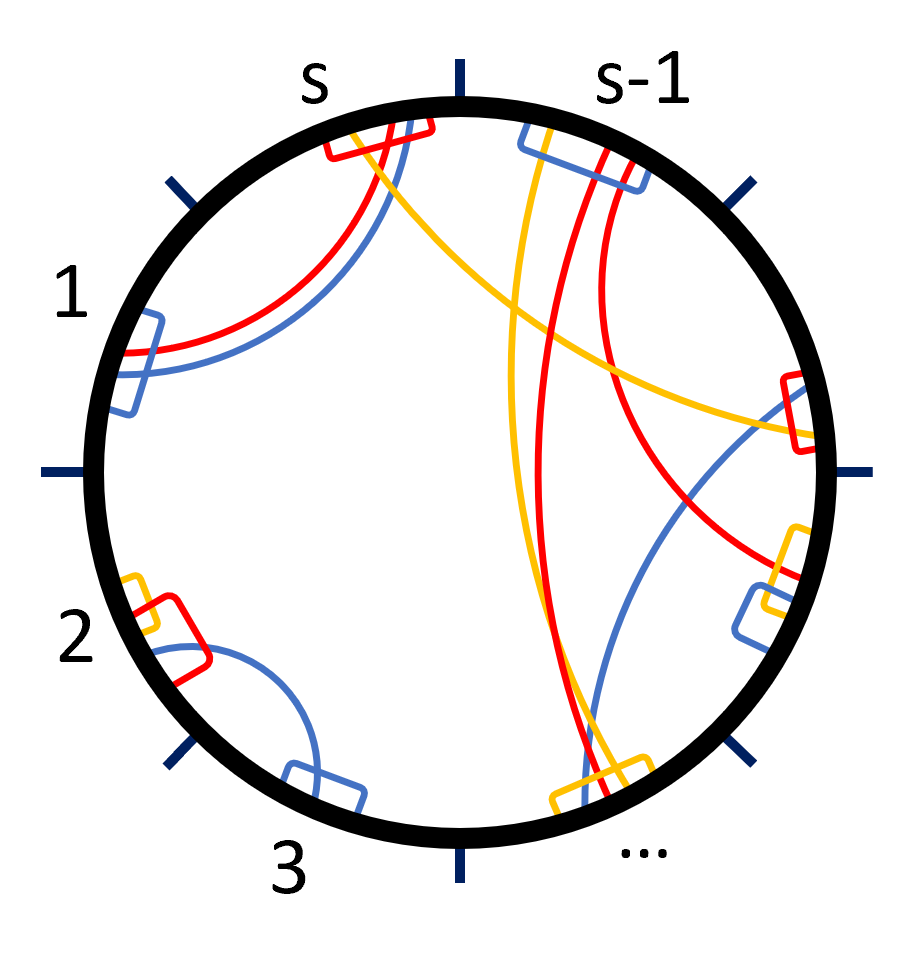}\end{centering}}
\hfill\subfloat[\label{5b}]{\begin{centering}
\includegraphics[height=3cm]{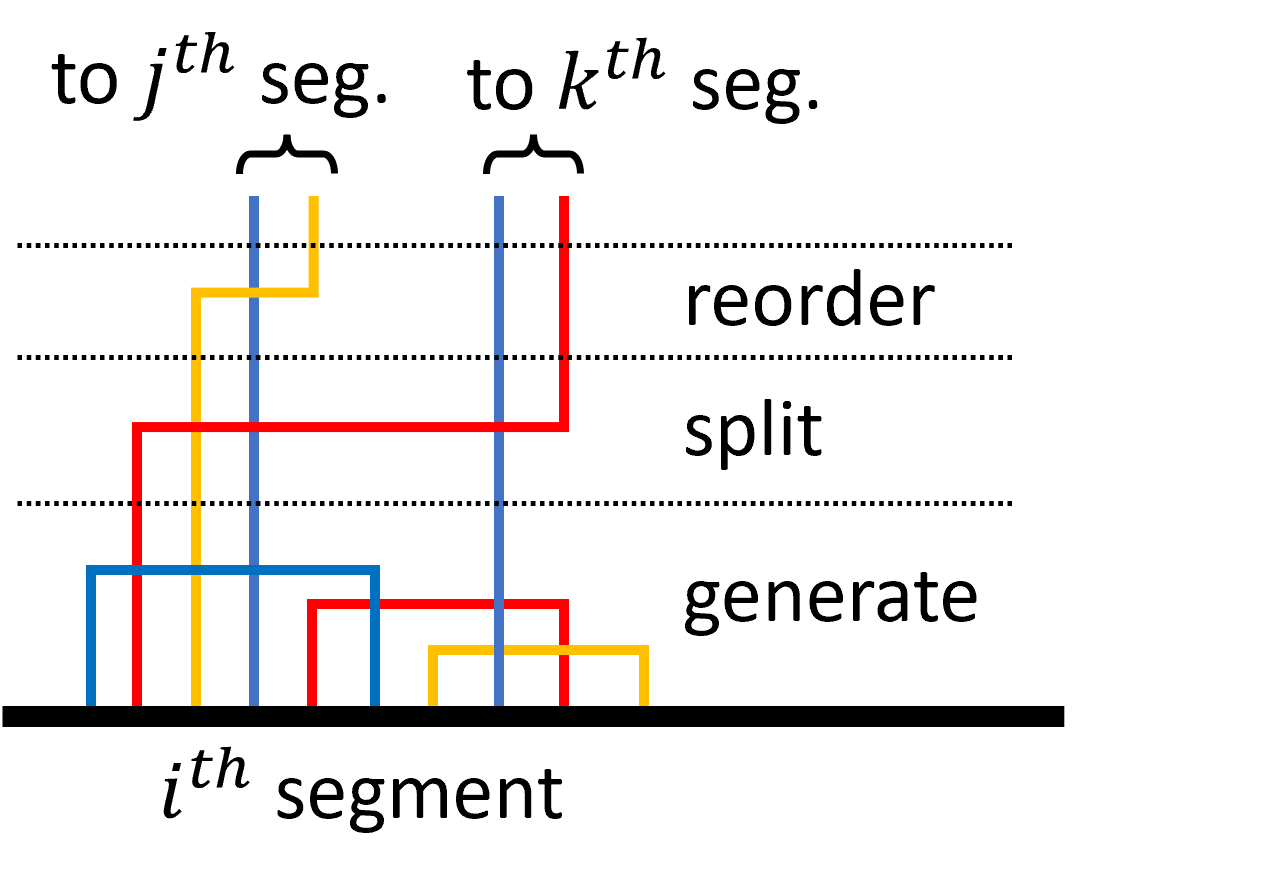}\end{centering}}
\par\end{centering}
\caption{\justifying  (a) The boundary of a $d=3$ color chord diagram is split into $s$ segments. (b) For the open chords from the $i$-th segment, there are three steps before landing on the $j$-th and $k$-th segments: generate, split and reorder.}
\end{figure}

For each chord diagram in the $2n$-moment $M_{2n}$, we can separate the boundary circle into $s$ segments and put $2n$ ends of chord into these $s$ segments (see Fig. \ref{5a}). Then the intersection counting of the chord diagrams include two types: the intersection between the chords emanating from the same segment, and the intersection between the chords emanating from different segments. For the low temperature physcis that is dominated by high moments, we will take $s$ large but not too large such that each segment also has plenty of chord terminals. Due to this scale separation, the intersection of the first type reflects the short-time physics and the intersection of the second type reflects the long-time physics that is relavant to IR.

For the $i$-th segment, assume there are $n^a_i$ color $a$ open chords emanated to intersect with chords from other segments. Among these $n^a_i$ open chords, we define $n^a_{ij}$ as the number of color $a$ open chords that will end on the $j$-th segment for $j\neq i$. By definition, we have 
\be 
\sum_{j\neq i}n^a_{ij}=n^a_i,\quad n^a_{ij}=n^a_{ji} \label{89-eq}
\ee
By the analysis of \cite{Berkooz:2024evs,Berkooz:2024ofm}, there are three steps to emanate $n^a_i$ color $a$ chords from the $i$-th segments that involve the first type of intersection. By the assumption of scale separation, there are plenty of chord terminals on the $i$-th segment. The first step is to generate $n^a_i$ color $a$ open chords emanating from these terminals, and the rest terminals connect to each other to form chords on the same segment. The second step is to split the $n^a_i$ color $a$ open chords into groups labeled by $j$ and each group has $n^a_{ij}$ color $a$ open chords. The third step is to reorder the $n^a_{ij}$ open chords for all $a$ because their self intersections need to be counted. See Fig. \ref{5b} for an illustration of these three steps.

For the second type of intersection, the counting is simple. For two pairs of segments $i,j$ and $k,l$, they will have chord intersection only when $i>k>j>l$ or $k>i>l>j$ (due to symmetry \eqref{89-eq}, we assume $i>j$ and $k>l$). These two cases need to be counted only once. The intersection weight is 1 if $a=b$ and it is $q$ if $a\neq b$. Therefore, we will have the following counting for the second type of intersection
\be  
q^{\sum_{a\neq b}\sum_{i>k>j>l}n_{ij}^{a}n_{kl}^{b}}
\ee

Summing over all moments in \eqref{7-Z}, we will have an expression of partition function in the form of summing over all configurations of $\{n^a_{ij}\}$ for a fixed number $s$ of segments. Let us denote $\ket{{\bf a}}$ to represent the state \eqref{31-eq} with $n=|{\bf a}|$ the length of the string ${\bf a}$. By above procedure, the partition function can be written as \footnote{The normalization $\avg{{\bf a}_{i}|{\bf a}_{i}}$ in the denominator is due to the Hermitian construction of $H$. If we use the transfer matrix $T$ from \cite{Berkooz:2018qkz,Berkooz:2018jqr} in stead of $H$ in \eqref{91-eq}, we do not need to divide this factor.}
\begin{align}
Z(\b)&=\sum_{\{n_{ij}^{a}\}}q^{\sum_{a\neq b}\sum_{i>k>j>l}n_{ij}^{a}n_{kl}^{b}} \nn\\
&\times\sum_{\{{\bf a}_i\}}\prod_{i=1}^{s}\left(\mC(n_{ij}^{a};{\bf a}_{i};q)\f{\avg{{\bf a}_{i}|e^{-\b H/s}|0}}{\avg{{\bf a}_{i}|{\bf a}_{i}}\dim \mH_{\{n^a_i\}}} \right)\label{91-eq}
\end{align}
where $\dim \mH_{\{n^a_i\}}$ is the dimension of the subspace of with fixed $\{n^a_i\}$ given by \eqref{35-dim}, and the sum over ${\bf a}_{i}$ obeys the constraint
\be 
n_{i}=|{\bf a}_{i}|=\sum_a n^a_i=\sum_a \sum_{j\neq i}n^a_{ij}
\ee
In \eqref{91-eq}, we split $\b$ Euclidean time into $s$ segments with length $\b/s$. The expectation value $\avg{{\bf a}_{i}|e^{-\b H/s}|0}$ represents the first step of generating $\{n^a_i\}$ open chords. The function $\mC(n_{ij}^{a}$;${\bf a}_{i};q)$ depending on $n_{ij}^{a}$ with $j\neq i$ and  includes the contribution from splitting and reordering of $n_{i}$ open chords.

To solve $\mC$ and $\avg{{\bf a}_{i}|e^{-\b H/s}|0}$ in closed form is very hard. Nevertheless, for $q$ close to 1 and Schwarzian limit, as argued in \cite{Berkooz:2024ofm}, we can replace them with the known result for $q=1$ but keep $q$ nontrivial in the first line of \eqref{91-eq}. At $q=1$ the algebra of \eqref{31-cr} becomes $d$ independent harmonic oscillators, and all states $\ket{{\bf a} _i}$ with the same $\{n^a_i\}$ is equivalent to the number state $\ket{\{n^a_i\}}=\prod_a (\a_a^\dag)^{n^a_i}\ket{0}$. 

Summing over $\{{\bf a}_i\}$ for a fixed $\{n^a_i\}$ gives the dimension of the subspace $\dim \mH_{\{n^a_i\}}$, which cancels the same term in the denominator of \eqref{91-eq}. Given $\{n_{ij}^{a}\}$ and $\{n^a_i\}$, there are $\prod_a (n^a_{i}!/\prod_{j\neq i}(n_{ij}^{a})!)$ ways to split the $n_i$ open chords into $(s-1)d$ groups. Moreover, for fixed $a,i,j$, the $n_{ij}^{a}$ color $a$ open chords can self intersect and give a contribution $\prod_{i>j}\prod_{a}(n_{ij}^{a})!$. This is the reordering process. Lastly, for $d$ harmonic oscillators, we can easily compute
\begin{align} 
\f{\avg{\{n_{i}^{a}\}|e^{-\b H/s}|0}}{\avg{\{n_{i}^{a}\}|\{n_{i}^{a}\}}}&=\prod_a \f{\avg{n^a_i|e^{-\b (\a^\dag_a+\a_a)/(\sqrt{d}s)}|0}}{\avg{n^a_i|n^a_i}}  \nn\\
&=\f {e^{\b^{2}/(2s^2)}}{\prod_{a}n^{a}_i!}\left(-\f{\b}{\sqrt{d}s}\right)^{\sum_{a}n^{a}_i}
\end{align}
Putting everything together, the partition function becomes 
\begin{equation}
Z(\b)=\sum_{\{n_{ij}^{a}\}}q^{\sum_{a\neq b}\sum_{i>k>j>l}n_{ij}^{a}n_{kl}^{b}}  \f{e^{\b^{2}/(2s)}}{\prod_{i>j}\prod_{a}(n_{ij}^{a})!}  \f{\b^{2n}}{(ds^{2})^{n}} 
\end{equation}
where we define
\begin{equation}
n=\sum_{a=1}^{d}\sum_{i>j}n_{ij}^{a} \label{96-eq}
\end{equation}

We can rewrite the quadratic form of $n_{ij}^{a}$ in the exponent
of $q$ as linear by a Gaussian integral. Let us denote 
\begin{equation}
\sum_{a\neq b}\sum_{i>k>j>l}n_{ij}^{a}n_{kl}^{b}=\f 12 \bfn\cdot M\cdot \bfn\label{eq:8}
\end{equation}
where $\bfn$ is naively a $d\times s(s-1)/2$ dimensional vector with
components labeled by $(a,i,j)$ and $M$ is a square matrix. However,
as (\ref{eq:8}) does not include any term of $n_{ij}^{a}$ with $i=j+1$
because they do not contribute to any crossing. In other words, we
should treat $M$ as a $d\times s(s-3)/2$ dimensional matrix with
all $i=j+1$ rows and columns stripped off. This reduced $M$ matrix
is invertible with real eigenvalues. Using \eqref{15-q}, we can write
\begin{align}
&q^{\sum_{a\neq b}\sum_{i>k>j>l}n_{ij}^{a}n_{kl}^{b}}=e^{-\f 12\lambda \bfn\cdot M\cdot \bfn}\nn\\
=&\f 1{\sqrt{\det(-2\pi\lambda M)}}\int d \bfJ e^{\bfn\cdot \bfJ+\f 1{2\lambda}\bfJ\cdot M^{-1}\cdot \bfJ}\label{eq:9}
\end{align}
where $\bfJ$ is a $d\times s(s-3)/2$ dimensional vector. Note that
$M$ has both positive and negative eigenvalues, to define the $\bfJ$
integral, we need to take the diagonal basis of $M$ and integrate
the corresponding eigenvariable $\bfJ$ along real or imaginary axis
for positive or negative eigenvalues. 

Using (\ref{eq:9}) and notation \eqref{96-eq}, the sum over $n_{ij}^{a}$ leads to
\begin{equation}
Z(\b)=\int d\bfJ\f{e^{\f 1{2\lambda}\bfJ\cdot M^{-1}\cdot \bfJ}}{\sqrt{\det(-2\pi\lambda M)}}e^{\left(\f 1s+\f 2{s^{2}d}\sum_{i>j;a}e^{J_{ij}^{a}}\right)\b^{2}/2} \label{99-eq}
\end{equation}
We are interested in low temperature and large $\b=\b_{r}/\sqrt{\lambda}=1/(\sqrt{\lambda}T_{r})$,
which gives the following effective action
\begin{equation}
S=\f 12 \bfJ\cdot M^{-1}\cdot \bfJ+\f{\b_{r}^{2}}{s^{2}d}\sum_{i>j;a}e^{J_{ij}^{a}}\label{eq:30}
\end{equation}
and the partition function is $Z(\b)\sim\int d\bfJ e^{S/\lambda}$. In $q\ra 1 $ limit, we have $\lambda \sim 1-q\ll 1$. Therefore, the partition function can be computed by saddle approximation. Variation
of $\bfJ$ leads to
\begin{equation}
(M^{-1}\cdot J)_{ij}^{a}=-\f{\b_{r}^{2}}{s^{2}d}e^{J_{ij}^{a}}
\end{equation}
which can be solved as
\begin{equation}
    J_{ij}^{a}=-\f{\b_{r}^{2}}{s^{2}d}\sum_{b\neq a}\left(\sum_{i>k>j>l}+\sum_{k>i>l>j}\right)e^{J_{kl}^{b}}\label{eq:8-1}
\end{equation}

To avoid solving such discrete matrix equation, we will take large $s$ limit and define Euclidean time $\tau_i=\b_r i/s$. In this way, we can promote the vector $\bfJ$ as a bilocal continuous function $J_{ij}^{a}=J^{a}(\tau_{i},\tau_{j})$. In this continuum limit, \eqref{eq:8-1} becomes
\begin{equation}
J^{a}(\tau_{1},\tau_{2})=-\f 1d\sum_{b\neq a}\iint_\Delta d\tau d\tau'e^{J^{b}(\tau,\tau')}
\end{equation}
where $\D$ means the ranges $\tau_1>\tau>\tau_2>\tau'>0$ and $\b_r>\tau>\tau_1>\tau'>\tau_2$. Taking derivatives on both sides, we will have a Liouville-like equation
\begin{equation}
\del_{1}\del_{2}J^{a}=-\f 2d\sum_{b\neq a}e^{J^{b}(\tau_{1},\tau_{2})} \label{eom}
\end{equation}
To find the boundary condition for this equation, note that in (\ref{eq:8-1})
when $i=j+1$ or $i=s,j=1$ the RHS has no term in the sum, which
in continuum limit leads to
\begin{equation}
J^{a}(\tau,\tau)=J^{a}(\b_r,0)=0
\end{equation}
Indeed, at the action level, we can take the continuum limit and write
\begin{equation}
M^{-1}=\f{\b_{r}^{2}}{2s^{2}}K^{-1}\del_{1}\del_{2}
\end{equation}
where $K$ is $d$-dimensional matrix $K_{ab}=1-\d_{ab}$. Therefore,
the action in continuum limit is
\begin{align}
S&=\int_{0}^{\b_{r}}d\tau\int_{0}^{\tau}d\tau' \mL[J^a(\tau,\tau')] \label{eq:40}\\
\mL&=\f 14J^{a}(\tau,\tau')K_{ab}^{-1}\del_{\tau}\del_{\tau'}J^{b}(\tau,\tau')+\f 1d\sum_{a}e^{J^{a}(\tau,\tau')}
\end{align}

\subsection{Homogeneous saddle}

To solve the equation of motion \eqref{eom}, we can assume homogeneity $J^{a}=J$ and translation
invariance, which leads to the SYK saddle
\begin{align}
e^{J}&=\f{\cos^{2}\w\b_{r}/2}{\cos^{2}\w(\tau_{12}-\b_{r}/2)}\\
\w&=\sqrt{1-1/d}\cos\w\b_{r}/2
\end{align}
Taking this into the action (\ref{eq:30}), we have the on-shell action
\begin{align}
S_{0} & =\int_{0}^{\b_{r}}d\tau\int_{0}^{\tau}d\tau'(1-J(\tau,\tau')/2)e^{J(\tau,\tau')}\nonumber \\
 & =\b_{r}^{2}(\f{\sin x}x-\f 12\cos^{2}\f x2)
\end{align}
where
\be 
x\equiv\w\b_{r},\quad x=\sqrt{1-1/d}\b_{r}\cos x/2
\ee
For
latter convenience, we define $y=\pi-x$ and temperature is expressed
in terms of $y$ as
\begin{equation}
T_{r}=1/\b_{r}=\f{\sqrt{1-1/d}\sin(y/2)}{\pi-y}\label{eq:t-y}
\end{equation}
For large $\b_{r}$, $x$ takes value very close to $\pi$ and $y$
can be solved in $1/\b_{r}$ expansion as 
\begin{equation}
y=\f{2\pi}{\sqrt{1-1/d}\b_{r}}-\f{4\pi^{2}}{(\sqrt{1-1/d}\b_{r})^{2}}+\cdots\label{eq:36}
\end{equation}
It follows that
\begin{equation}
S_{0}=\f {2\b_{r}}{\sqrt{1-1/d}}-\f{\pi^{2}}{2(1-1/d)}+\f{\pi^{2}}{(1-1/d)^{\f 3 2}\b_{r}}
\end{equation}
Then the on-shell partition function is
\begin{equation}
Z\sim e^{\f 1{\lambda}S_{0}}\sim\exp\left[\f{2\b}{\sqrt{\lambda(1-1/d)}}+\f{\pi^{2}}{[\lambda(1-1/d)]^{\f 3 2}\b}\right]\label{eq:38-1}
\end{equation}
where the constant term is omitted. The first term gives the ground energy $E_{0}=-2/\sqrt{\lambda(1-1/d)}$. This is consistent with \eqref{21-h} with $\t=\pi$ and $q\ra \bar q$ by \eqref{39-eq} in the $1-q\ra \lambda\ll 1$ limit. Note that \eqref{39-eq} originally requires large $d$, but its validity extends to generic $d$ in the $q\ra 1$ limit. The second term by inverse Laplace transformation leads to exponential growth $e^{C\sqrt{(E-E_0)/\lambda^{3/2}}}$ of the spectrum with $C=2\pi/(1-1/d)^{3/4}$ for $(E-E_0)/\lambda^{3/2} \gg 1$.

\subsection{The quadratic action}

Though the saddle of dcSYK model is very similar to the SYK model, the 1-loop order has essential difference. The quadratic expansion of (\ref{eq:40}) around the saddle $J^{a}=J$
is
\begin{align}
S_{2}=&\int_{0}^{\b_{r}}d\tau\int_{0}^{\tau}d\tau'\left[\f 14\d J^{a}(\tau,\tau')K_{ab}^{-1}\del_{\tau}\del_{\tau'}\d J^{b}(\tau,\tau') \right.\nn\\
&\left. +\f 1{2d}e^{J(\tau-\tau')}\sum_{a}\d J^{a}(\tau,\tau')\d J^{a}(\tau,\tau')\right]
\end{align}
Since $K$ is a symmetric real matrix, it can be diagonalized by an
orthonormal matrix. The eigenvalues of $K$ include one $d-1$ and $d-1$ numbers of $-1$ with
multiplicity $d-1$. Along the diagonal basis of $\d J^{a}$, we can
split it into $\d J$ and $\d J^{i}$ and rewrite the quadratic action
as
\begin{widetext}
    \begin{align}
S_{2}= & \int_{0}^{\b_{r}}d\tau\int_{0}^{\tau}d\tau'\left[\f 1{4(d-1)}\d J(\tau,\tau')\del_{\tau}\del_{\tau'}\d J(\tau,\tau')+\f 1{2d}e^{J(\tau-\tau')}\d J(\tau,\tau')\d J(\tau,\tau')\right.\nonumber \\
 & \left.+\sum_{i=1}^{d-1}\left(-\f 14\d J^{i}(\tau,\tau')\del_{\tau}\del_{\tau'}\d J^{i}(\tau,\tau')+\f 1{2d}e^{J(\tau-\tau')}\d J^{i}(\tau,\tau')\d J^{i}(\tau,\tau')\right)\right] \label{118-S2}
\end{align}
\end{widetext}

For $\d J$ and $\d J^{i}$, they both obey the same type of eigen
equation
\begin{equation}
\left[\del_{1}\del_{2}+\f{h(h-1)\w^{2}}{\cos^{2}\w(\tau_{12}-\b_{r}/2)}\right]f(\tau_{1},\tau_{2})=\eta f(\tau_{1},\tau_{2})\label{eq:47}
\end{equation}
Define new variables  $\t=\pi/2-\w(\tau_{12}-\b_{r}/2)$ and
$\tau=(\tau_{1}+\tau_{2})/2$. We can rewrite the above equation
as 
\begin{equation}
\left[\f 14\del_{\tau}^{2}-\w^{2}\del_{\t}^{2}+\f{h(h-1)\w^{2}}{\sin^{2}\t}\right]f(\tau,\t)=\eta f(\tau,\t)
\end{equation}
For $\d J$, the parameter $h$ is 
\[
h(h-1)\w^{2}=2(1-1/d)\cos^{2}\w\b_{r}/2\implies h=2
\]
For $\d J^{i}$, the parameter $h$ is 
\begin{align}
&h(h-1)\w^{2}=-2/d\cos^{2}\w\b_{r}/2 \nn\\
&\implies h=\f 12\left(1\pm\sqrt{\f{d-9}{d-1}}\right)\label{eq:45-1}
\end{align}

Assuming $f(\tau,\t)=e^{ik\tau}f(\t)$, we can solve the eigen equation \eqref{eq:47}
as
\begin{align}
f_{\pm}(\t)&=\sqrt{\sin\t}\left[P_{(n_{\eta}/\w-1)/2}^{1/2-h}(\cos\t)\pm P_{(n_{\eta}/\w-1)/2}^{1/2-h}(-\cos\t)\right]\nn\\
\quad n_{\eta}&=\sqrt{k^{2}+4\eta}\label{eq:46}
\end{align}
where $\pm$ means even/odd under $\t\ra\pi-\t$ (which is equivalent
to $\tau_{12}\ra\b-\tau_{12}$). For $\t$ variable, the boundary
condition is 
\be 
f_{\pm}(\tau,\t_{r})=0,\quad \t_{r}=(\pi-\w\b_{r})/2
\ee
This leads to nontrivial discrete solutions of $n_{\eta}$ obeying
\begin{equation}
\mM_{\pm}=\{n_{\eta}|P_{(n_{\eta}/\w-1)/2}^{1/2-h}(\cos\t_{r})\pm P_{(n_{\eta}/\w-1)/2}^{1/2-h}(-\cos\t_{r})=0\}\label{eq:47-1}
\end{equation}
Note that $n_{\eta}\ra-n_{\eta}$ gives the same solution because
of the symmetry $P_{\nu}^{\mu}(z)=P_{-1-\nu}^{\mu}(z)$. The independent
solutions are labeled by the eigenvalues $\mM_{\pm}/\Z_{2}$, where
$\Z_{2}$ refers to this reflection symmetry. On the other hand, since $\tau$ labels the segments on a circle, we need to impose some periodic boundary condition for $\tau$. In $(\tau_{1},\tau_{2})$
coordinate, we should require $f(\b_{r},\tau_{2})=f(\tau_{2},0)$,
which in $(\tau,\t)$ coordinate means
\begin{equation}
f(\tau,\t)=f(\tau+\b_{r}/2,\pi-\t)
\end{equation}
It follows that we need choose $k=2\pi\Z_{\pm}/\b_{r}$ for $f_{\pm}$
respectively, where $\Z_{\pm}$ means even/odd integers.

Before computing the 1-loop determinant, let us first check the determinant
of the free theory, namely $\det M^{-1}$. For this piece the continuous
limit leads to computing $\det(\f 14\del_{\tau}^{2}-\del_{z}^{2})$
where $z=\tau_{12}$. The eigenfunctions are simple
\begin{equation}
f(\tau,z)=e^{ik\tau}\sin(mz\pi/\b_{r})
\end{equation}
where the boundary condition restricts $m\in\Z_{\mp}^{>0}$ and $k\in2\pi\Z_{\pm}/\b_{r}$
for even/odd modes under $z\ra\b_{r}-z$. The eigenvalues are 
\begin{equation}
\eta_{0}=\pi^{2}((\Z_{\mp}^{>0})^{2}-(\Z_{\pm})^{2})/\b_{r}^{2}
\end{equation}
Here we see that each eigenvalue is proportional to $1/\b_{r}^{2}$, which is reasonable by dimensional analysis.
Recall the $\sqrt{\det(-2\pi \lam M)}$ term in \eqref{99-eq}, the total 1-loop contribution is given by the product of the ratio $\eta/\eta_{0}$
\begin{equation}
Z_{\text{1-loop}}=\f 1{\sqrt{\prod(\eta/\eta_{0})}}
\end{equation}
Our purpose is to find eigenvalues $\eta$ with
different scaling of $\b_{r}$. In particular, $\eta$ with $\b_{r}^{-2}$
scaling does not contribute to the $\b_{r}$ dependence at 1-loop.

\subsection{1-loop determinant by Sommerfeld-Watson resummation} \label{sec:4D}

In the following, we will define a simpler notation by using $x$ and
\begin{equation}
n\equiv n_{\eta}\b_{r},\quad\mM_{\pm}\b_{r}\ra\mM_{\pm}
\end{equation}
Let us assume the generic solution of (\ref{eq:47-1}) is 
\begin{equation}
F_{\pm}(n)=0\label{eq:68}
\end{equation}
The 1-loop determinant is related to
\begin{equation}
\prod\f{\eta}{\eta_{0}}=\prod_{k=\Z_{\pm}}\f{\prod_{n\in\mM_\pm/\Z_2}[(n/(2\pi))^{2}-k^{2}]}{\prod_{m=\Z_{\mp}^{>0}}[m^{2}-k^{2}]}
\end{equation}
The infinite product of $k$ can be computed explicitly
\begin{equation}
\prod\f{\eta}{\eta_{0}}=\begin{cases}
\f{\prod_{n\in\mM_{+}/\Z_{2}}\sin^{2}\f n4}{\prod_{m=\Z_{-}^{>0}}\sin^{2}\f{m\pi}2}=\prod_{n\in\mM_{+}/\Z_{2}}\sin^{2}\f n4, & k\in\Z_{+}\\
\f{\prod_{n\in\mM_{-}/\Z_{2}}\cos^{2}\f n4}{\prod_{m=\Z_{+}^{>0}}\cos^{2}\f{m\pi}2}=\prod_{n\in\mM_{-}/\Z_{2}}\cos^{2}\f n4, & k\in\Z_{-}
\end{cases}\label{eq:74}
\end{equation}
where in both cases the denominator is unity for any $m$.

\begin{figure}
\begin{centering}
\includegraphics[width=7cm]{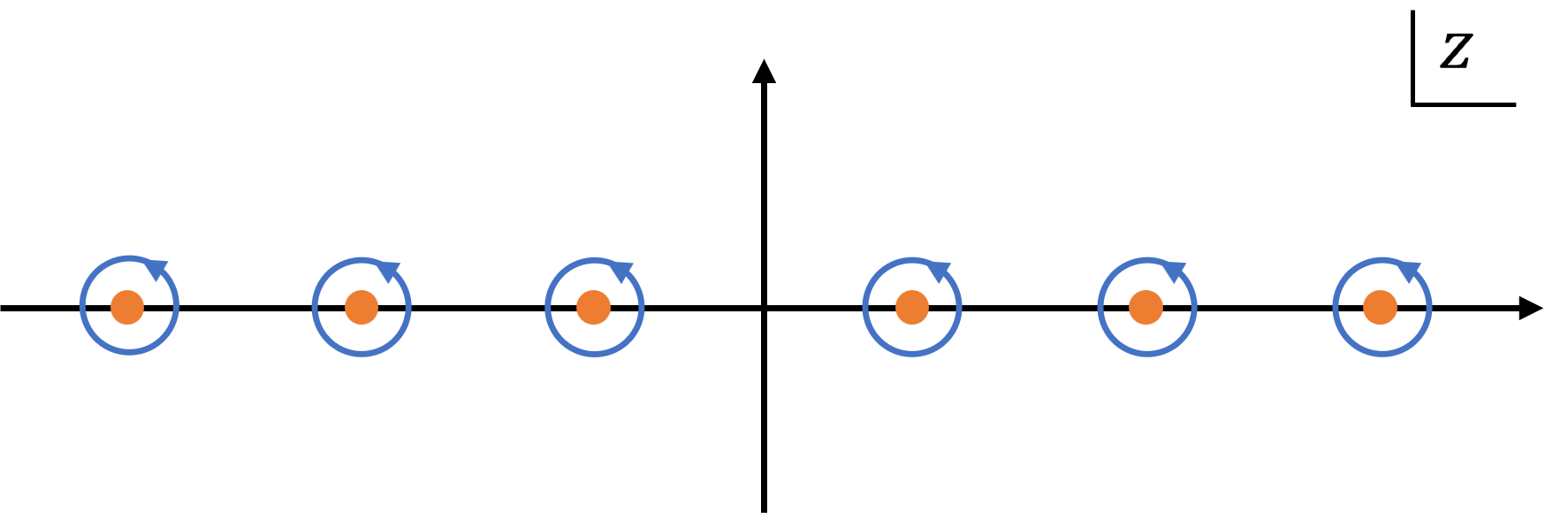}
\end{centering}
\caption{\justifying The contour $\mC_\pm$ in blue is the anticlockwise circles around the points in $\mM_\pm$ (orange dots). \label{fig:SW}}
\end{figure}

To compute
this infinite product, we can instead compute an infinite sum on the
exponent using Sommerfeld-Watson resummation. Given a function
$G_{\pm}(n)$ invariant under $n\ra-n$, we have
\begin{align}
&\prod_{n\in\mM_{\pm}/\Z_{2}}G_{\pm}(n)=\exp\left[\f 12\sum_{n\in\mM_{\pm}}\log G_{\pm}(n)\right]\nn\\
=&\exp\left[\f 1{4\pi i}\int_{\mC_{\pm}}dz\log G_{\pm}(z)\f{F_{\pm}'(z)}{F_{\pm}(z)}\right]\label{eq:75}
\end{align}
where $\mC_{\pm}$ is the anticlockwise contours circling around the
points in $\mM_{\pm}$ (see Fig. \ref{fig:SW}). For convenience, here we also include $n=-(\mM_{\pm}/\Z_{2})$
in the sum with additional factor $1/2$ because they are also solutions
to $F_{\pm}(z)=0$.

\subsubsection{$h=2$ \label{subsec:h=2}}

Let us apply the Sommerfeld-Watson resummation to compute the determinant for $h=2$. At $h=2$, the eigenfunctions $f_\pm(\t)$ are largely simplified  to  trigonometric functions \cite{Choi:2019bmd}
\begin{align}
f_{+}(\t) & =\cot\t\sin\f{n(\f \pi 2-\t)}{2x}+\f{n}{2x}\cos\f{n(\f \pi 2-\t)}{2x}\label{eq:52}\\
f_{-}(\t) & =\cot\t\cos\f{n(\f \pi 2-\t)}{2x}-\f{n}{2x}\sin\f{n(\f \pi 2-\t)}{2x}\label{eq:53}
\end{align}
For $f_{+}$ we need exclude
$n=0$ as it is a trivial solution; similarly, we exclude $n=2x$
for $f_{-}$. In the end, the eigenvalues are 
\begin{align}
\mM_{+} & =\{n\neq 0|\tan\f{x}2\tan\f{n}4+\f{n}{2x}=0\}\label{eq:54}\\
\mM_{-} & =\{|n|\neq 2x|\tan\f{x}2\cot\f{n}4-\f{n}{2x}=0\}\label{eq:55}
\end{align}

From (\ref{eq:54}) and (\ref{eq:55}) we can tentatively choose 
\begin{align}
F_{+}(z)&=\tan\f x2\tan\f z4+\f z{2x}\\
F_{-}(z)&=\tan\f x2\cot\f z4-\f z{2x}
\end{align}
First of all, using $F_{\pm}$, we can replace the sine and cosine
function in (\ref{eq:74}) both by the same rational function
\begin{equation}
\sin^{2}\f n4,\cos^{2}\f n4\ra\f{n^{2}}{n^{2}+(2x\tan\f x2)^{2}}=G_{\pm}(n)\label{eq:77}
\end{equation}
Second, the contour integral of $\log G_{\pm}(z)$ (\ref{eq:75})
may have subtle issue with branch cut. For (\ref{eq:77}), we can
instead using 
\begin{equation}
\del_{\a}\log\f{z^{2}}{\a+z^{2}}=-\f 1{\a+z^{2}}
\end{equation}
to compute the exponent in (\ref{eq:75}) as
\begin{equation}
I_{\pm}=-\f 1{4\pi i}\int_{0}^{4x^{2}\tan^{2}x/2}d\a\int_{\mC_{\pm}}dz\f 1{\a+z^{2}}\f{F_{\pm}'(z)}{F_{\pm}(z)}\label{eq:79}
\end{equation}
Moreover, $F_{\pm}'(z)$ may have additional poles on the complex
plane. For practical purpose of the resummation, we hope the additional
poles to be as few as possible. Since $\mC_{\pm}$circling around
the poles of $F_{\pm}(z)$, we are free to replace part of $F_{\pm}'(z)$
using the equation $F_{\pm}(z)=0$ because this does not change the
residue at the poles. Lastly, we should note that the choice of $F_{\pm}$
is not unique. We can consider any function $g$ that does not share
the zeros with $F_{\pm}$ and define a new $\tilde{F}_{\pm}=F_{\pm}g$,
which leads to 
\begin{equation}
\f{\tilde{F}_{\pm}'(z)}{\tilde{F}_{\pm}(z)}=\f{F_{\pm}'(z)}{F_{\pm}(z)}+\f{g'(z)}{g(z)}\label{eq:48}
\end{equation}
The purpose of $g$ is to kill the possible slowly decaying piece
in $F_{\pm}'(z)/F_{\pm}(z)$ at infinity such that we can deform the
contour $\mC_{\pm}$ to possibly finite numbers of poles on the complex
plane and evaluate the integral (\ref{eq:79}) simply by the residue theorem. 

Keeping these tricks in mind, we find 
\begin{align}
F_{+}'(z) & =-\f{2+x\csc^{2}(z/4)\tan(x/2)}{4x}\\
\implies\f{F_{+}'(z)}{F_{+}(z)}\simeq & \f{4x^{2}+8x\cot(x/2)+z^{2}\cot^{2}(x/2)}{8xz\cot(x/2)-16x^{2}\cot(z/4)}\label{eq:82}
\end{align}
where we replaced $\csc^{2}(z/4)$ with a quadratic polynomial of
$z$ using $F_{+}(z)=0$. Moreover, (\ref{eq:82}) grows linearly
for large $z$, together with $1/(\a+z^{2})$ does not guarantee trivial
integral at infinity. We can add an entire function $g(z)=e^{cz^{2}}$
in (\ref{eq:48}) for some $c$ to kill this linearly growing piece.
It follows that 
\begin{equation}
\f{\tilde{F}_{+}'(z)}{\tilde{F}_{+}(z)}\simeq\f{2x+\cot(x/2)(4-z\tan(z/4))}{4z\cot(x/2)+8x\tan(z/4)}
\end{equation}
For simplicity, from now on we will omit the tilde symbol and redefine $\tilde F_+\ra F_+$. We apply
similar tricks to $F_{-}$ and find
\begin{equation}
\f{F_{-}'(z)}{F_{-}(z)}\simeq\f{2x+\cot(x/2)(4+z\cot(z/4))}{4z\cot(x/2)-8x\cot(z/4)}
\end{equation}

Taking these into (\ref{eq:79}), we deform the contour $\mC_{\pm}$
to just a few poles on the complex plane. For $I_{+}$, the remaining
poles are $z=0,\pm i\sqrt{\a}$; for $I_{-}$, the remaining poles
are $z=\pm2x,\pm i\sqrt{\a}$. Using the residue theorem at these points
and integrate over $\a$, we find that 
\begin{align}
I_{+} & =\log(1+\f x2\tan\f x2)-\f 12x\tan\f x2\\
I_{-} & =-2\log\cos\f x2-\f 12x\tan\f x2
\end{align}
For low temperature, taking $y=\pi-x$ and expanding around $y\sim0$,
we find
\begin{equation}
I_{+}+I_{-}=-\f{2\pi}y+2+\log4\pi-3\log y+\pi y/6+O(y^{2})\label{eq:55-2}
\end{equation}
Since $y\sim1/\b_{r}$ by \eqref{eq:36}, the total 1-loop contribution at low temperature
is 
\begin{align}
Z_{\text{1-loop}}(\b)=e^{-(I_{+}+I_{-})/2}\sim\f{1}{\b^{3/2}}e^{e_1\b+c_1/\b} \\
e_1=\f1 2 \sqrt{(1-1/d)\lam},\quad c_1=-\f{12+\pi^2}{4\sqrt{(1-1/d)\lam}}
\end{align}
where the $1/\b^{3/2}$ is the expected 1-loop contribution consistent with the Schwarzian derivative \cite{Maldacena:2016hyu}, and the
exponents are $\lambda$ order corrections to the on-shell action
(\ref{eq:38-1}). Together with the tree-level result and applying inverse Laplace transformation, the spectrum with the $h=2$ mode is 
\be 
\r_{h=2}(E)\sim \sinh C \sqrt{(E-E_0)/\lambda^{3/2}}
\ee
where 
\begin{align} 
C&=\f{2\pi}{(1-1/d)^{3/4}}(1-\lambda(1-1/d)(12+\pi^2)/8) \\
E_0&=-\f{2}{\sqrt{(1-1/d)\lam}}(1+\lam(1-1/d)/4) \label{156-eq}
\end{align}

There is no surprise that we get the same spectrum from Schwarzian derivative if we only include the 1-loop contribution from the $h=2$ mode. Indeed, the saddle and the first line of the quadratic action in \eqref{118-S2} is essentially the same as the double-scaled SYK model up to $1/d$ corrections to some unimportant parameters.

\subsubsection{Approximate eigenvalues for $h\protect\neq2$}

Recall that there are $d-1$ modes in the quadratic action \eqref{118-S2} 
 with $h\neq 2$. They turn out to have a much different 1-loop contribution. However, the difficulty of applying Sommerfeld-Watson resummation is that
we do not have a simple expression for the eigen function and the
eigenvalues. Nevertheless, for large enough eigenvalues, we may be
able to find an approximate expression that captures the asymptotic
behavior in the sum over infinite eigenvalues. Let us denote $\tilde{\mM}_\pm$ as the set of approximate eigenvalues. The trick is that we can split
the sum over eigenvalues into two parts. We can set up a subset $K_{\pm}\subset\mM_{\pm}$
of eigenvalues. For eigenvalues $n\in K_{\pm}$ we use exact numeric
solution, and for $n\notin K_{\pm}$ we use the approximation of large
eigenvalues in $\tilde{\mM}_\pm$ (see Fig. \ref{fig:An-illustration-of}). For those eigenvalues in $\tilde{\mM}_\pm$ not used in this approxiamtion, we denote them as $\tilde{K}_\pm$. In this treatment,
the error is suppressed in the large size of $K_{\pm}$ limit.

\begin{figure}
\begin{centering}
\includegraphics[height=3cm]{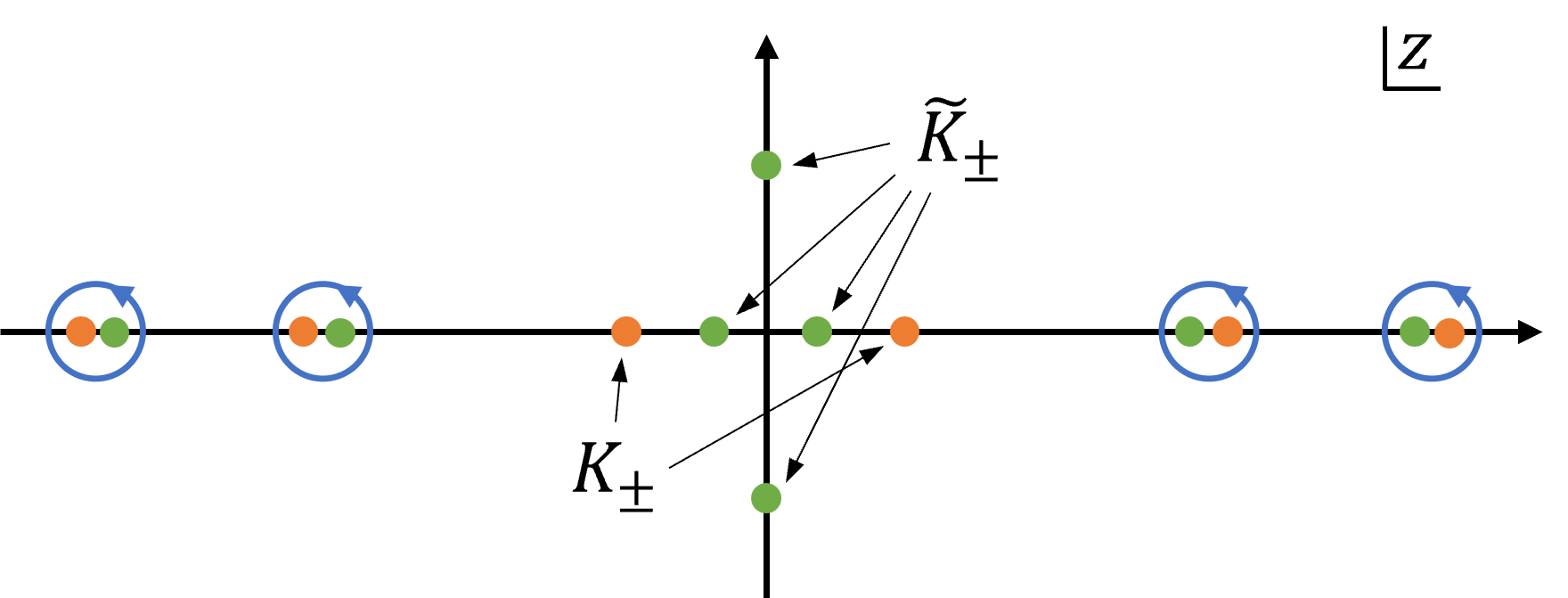}
\par\end{centering}
\caption{\justifying  An illustration of the contour choice of $\protect\mC_{\pm}$ (blue
circles) and the subset $K_{\pm}$ and $\tilde K_{\pm}$. The orange dots
are the exact eigenvalues $\protect\mM_{\pm}$, and the green dots
are the approximate eigenvalues $\tilde{\mM}_{\pm}$. When
the approximate eigenvalues are very close to the exact ones, we choose
$\protect\mC_{\pm}$ to circle around them and use $\tilde{\protect\mM}_{\pm}$
to replace $\protect\mM_{\pm}$. When the approximation is bad near
the origin, we take the exact ones in $K_{\pm}$ and ignore the approximate
ones in $\tilde K_{\pm}$. As they obey quite different equations near the
origin, the number of points in $K_{\pm}$ and $\tilde K_{\pm}$ can be
different. \protect\label{fig:An-illustration-of}}
\end{figure}

The large parameter approximation of associated Legendre function
is given in \cite{olver1997asymptotics}, from which we
find for large $n$, the eigenfunctions (\ref{eq:46}) are approximately
\begin{align}
\tilde{f}_{+}(\t) & \sim\f nx\cos\f {n(\f \pi 2-\t)}{2x}+\g\cot\t\sin\f {n(\f \pi 2-\t)}{2x}\label{eq:92}\\
\tilde{f}_{-}(\t) & \sim\f nx\sin\f {n(\f \pi 2-\t)}{2x}-\g\cot\t\cos\f {n(\f \pi 2-\t)}{2x}\label{eq:93} \\
\g&\equiv h(h-1)=-2/(d-1)<0
\end{align}
It is remarkable that if we take $h=2$, the above approximate eigenfunctions become exact and identical to (\ref{eq:52}) and (\ref{eq:53}). 

Taking $\t=\t_{r}$ in (\ref{eq:92}) and (\ref{eq:93}) and requiring
$\tilde{f}_{\pm}=0$ leads to discrete eigenvalues. Unlike the $h=2$
case, where we only have real eigenvalues, here we may have pure imaginary
eigenvalues because $\g<0$. For $\tilde{f}_{+}$,
the defining equation is 
\begin{equation}
\tan\f x2\tan\f n4+\f n{\g x}=0
\end{equation}
Taking $n\ra in$, the equation becomes 
\begin{equation}
\tanh\f n4=\f n{(-\g)x\tan\f x2}
\end{equation}
which has two nonzero solutions (with opposite sign) when 
\begin{equation}
x\tan\f x2>\f 4{-\g}=2(d-1)\iff\b_{r}>\b_{0}\label{eq:97}
\end{equation}
for a some temperature $T_{0}=1/\b_{0}$. For $\tilde{f}_{-}$,
the case is similar that with $n\ra in$ we have
\begin{equation}
\coth\f n4=\f n{(-\g)x\tan\f x2}\label{eq:98}
\end{equation}
which always have two opposite solutions as long as $\g<0$. 

Let us
define the approximate eigenvalues $\tilde{\mM}_{\pm}$ as
\begin{align}
\tilde{\mM}_{+} & =\{n\neq0|\tan\f x2\tan\f n4+\f n{\g x}=0\}\label{eq:54-1}\\
\tilde{\mM}_{-} & =\{n|\tan\f x2\cot\f n4-\f n{\g x}=0\}\label{eq:55-1}
\end{align}
where $n=0$ is excluded for $\tilde{\mM}_{+}$ because it corresponds
to a trivial $\tilde{f}_{+}$. On the other hand, for $\tilde{\mM}_{-}$
we do not need to exclude a trivial eigenvalue like $n=2x$ as in
$h=2$ case because there is no $n$ leading to trivial $\tilde{f}_{-}$.
Therefore, $n=2x$ can regarded as an emergent mode for $h\ra2$.
Even though we find two different types of $h$ for $d>9$ and $1<d<9$,
the asymptotic eigenvalues $\tilde{\mM}_{\pm}$ only depend on $\g$,
which is smooth across $d=9$. Another feature of $\tilde{\mM}_{\pm}$
is that the defining equation holds for $n\ra-n$ just like the $h=2$
case. This is also the symmetry of full eigenfunction $f_{\pm}$,
which means that we can double count the positive and negative eigenvalues
in the Sommerfeld-Watson resummation with an additional $1/2$ factor
as before.

The computation with the approximate eigenvalues is very similar to
Section \ref{subsec:h=2}. We define
\begin{align}
F_{+}(z)&=\tan\f x2\tan\f z4+\f z{\g x} \label{eq:102-1}\\
F_{-}(z)&=\tan\f x2\cot\f z4-\f z{\g x} \label{eq:102-1b}
\end{align}
Using $F_{\pm}$, we can replace the sine and cosine function in (\ref{eq:74})
both by the same rational function
\begin{equation}
\sin^{2}\f n4,\cos^{2}\f n4\ra\f{n^{2}}{n^{2}+(\g x\tan\f x2)^{2}}=G_{\pm}(n)\label{eq:77-1}
\end{equation}
It follows that the exponent in (\ref{eq:75}) is
\begin{equation}
I_{\pm}=-\f 1{4\pi i}\int_{0}^{\g^{2}x^{2}\tan^{2}x/2}d\a\int_{\mC_{\pm}}dz\f 1{\a+z^{2}}\f{F_{\pm}'(z)}{F_{\pm}(z)}\label{eq:79-1}
\end{equation}
Using the same trick as before, we find
\begin{align}
\f{F_{+}'(z)}{F_{+}(z)} & \simeq\f{\g x+\cot(x/2)(4-z\tan(z/4))}{4(z\cot(x/2)+\g x\tan(z/4))}\\
\f{F_{-}'(z)}{F_{-}(z)} & \simeq\f{\g x+\cot(x/2)(4+z\cot(z/4))}{4(z\cot(x/2)-\g x\cot(z/4))}
\end{align}

Taking these into (\ref{eq:79-1}), we deform the contour $\mC_{\pm}$
to other poles on the complex plane. For this step, it depends on
how much we trust the approximation of asymptotic eigenvalues $\tilde{\mM}_{\pm}$.
If we define $\mC_{\pm}$ excluding the subset $K_{\pm}\in\mM_{\pm}$,
and approximate the remainder eigenvalues as $\tilde{\mM}_{\pm}$,
after deformation of $\mC_{\pm}$, we should consider the residue
at the corresponding set of eigenvalues $\tilde K_{\pm}$ in $\tilde{\mM}_{\pm}$.
As a complement, we need to include the eigenvalues in $K_{\pm}$
explicitly in (\ref{eq:74}). A subtlety is that the number of eigenvalues
in $K_{\pm}$ may not equal to that in $\tilde K_{\pm}$ because these excluded
eigenvalues are near origin and $\tilde{\mM}_{\pm}$ may differ from
$\mM_{\pm}$ by a finite amount (see Fig. \ref{fig:An-illustration-of}).

Besides these nontrivial poles in \eqref{eq:79-1} depending on $\tilde K_{\pm}$, the remaining
poles  are $z=0,\pm i\sqrt{\a}$ for $I_{+}$; and the remaining poles
are $z=\pm i\sqrt{\a}$ for $I_{-}$. Using residue theorem at these
points and integrate over $\a$, we find that \footnote{There is a subtlety when we have imaginary eigenvalues. For the imaginary
eigenvalue $n_*$ in $\tilde{\mM}_{+}$, we have $\sin^{2}(n_*/4)<0$,
which implies $n_*^{2}+(\g x\tan\f{x}2)^{2}>0$. Therefore,
when we do the integral over $\a$ in (\ref{eq:79-1}), it will hit
a singularity of $1/F_{\pm}(\pm i\sqrt{\a})$ when $\pm i\sqrt{\a}=n_*$.
To avoid this issue, note that the 1-loop determinant only cares about
$|\sin^{2}(n/4)|$ and we can shift $n\ra n+\e$
in (\ref{eq:77-1}) for infinitesimal positive $\e$ such that the
average over $z=\pm n_*$ for $\log\f{(z+\e)^{2}}{(z+\e)^{2}+(\g x\tan\f{x}2)^{2}}$
gives the real value $\log|\sin^{2}(n_*/4)|$. Adding this
$\e$ leads to the absolute value in the log term of (\ref{eq:102}). On
the other hand, for the imaginary eigenvalue in $\tilde{\mM}_{-}$,
we have $\cos^{2}(n/4)>0$ and thus $n^{2}+(\g x\tan\f{x}2)^{2}<0$.
It follows that the $\a$ integral does not hit any singularity and
we do not need the $\e$-prescription.}
\begin{align}
I_{+} & \supset\log\left|1+\f 14\g x\tan\f x2\right|-\f 14\g x\tan\f x2\label{eq:102}\\
I_{-} & \supset-\f 14\g x\tan\f x2\label{eq:103}
\end{align}
Expanding in small $y=\pi-x$, we have 
\begin{align}
I_{+} & \sim-\f{\g\pi}{2y}+\f{\g}2+\log\f{|\g|\pi}{2y}+\f{48-24\g+\g^{2}\pi^{2}}{24\g\pi}y \label{eq:108}\\
I_{-} & \sim-\f{\g\pi}{2y}+\f{\g}2+\f{\g\pi}{24}y \label{eq:109-1}
\end{align}
For the nontrivial poles at $\tilde K_\pm$ (because we exclude them in $\mC^\pm$) after deforming $\mC_{\pm}$, we have
\begin{align}
I_{\pm}\supset &\f 12\sum_{p\in \tilde K_{\pm}}\log\left|1+\f{\g^{2}x^{2}\tan^{2}x/2}{p^{2}}\right|\nn\\
=&\begin{cases}
-\f 12\sum_{p}\log\left|\sin^{2}\f p4\right| & p\in \tilde K_{+}\\
-\f 12\sum_{p}\log\left|\cos^{2}\f p4\right| & p\in \tilde K_{-}
\end{cases}\label{eq:109}
\end{align}
where we used the fact that $\text{Res}_{z=p}\left(F'_{\pm}(z)/F_{\pm}(z)\right)=1$
at any simple pole. On other hand hand, we should take explicit eigenvalues
in $K_{\pm}$
\begin{align}
I_{+}\supset&\f 12\sum_{p\in K_{+}}\log\left|\sin^{2}\f p4\right|\label{eq:110}\\
I_{-}\supset&\f 12\sum_{p\in K_{-}}\log\left|\cos^{2}\f p4\right|\label{eq:110-1}
\end{align}

As we decrease the temperature, there is a singularity at $x\tan\f x2=4/(-\g)$
in (\ref{eq:102}), which corresponds to the temperature
$T_{0}$. However, it is unphysical and needs to be resolved because
we should not trust the approximate eigenvalues near the origin. This
requires non-empty choice of $\tilde K_{+}$. Indeed, as we decrease temperature
across $T_{0}$, there is a pair of zeros of $F_{+}$ in (\ref{eq:102-1})
moves from real axis to imaginary axis. In particular, at $T_{r}=T_{0}$,
$F_{+}$ has triple zero at $z=0$. If we include this pair of zeros
in $\tilde K_{+}$, (\ref{eq:109}) will give a singularity at $T_{r}=T_{0}$,
which compensates the singularity in (\ref{eq:102}). Let us check
this explicitly by expanding $\g$ in the neighborhood as
\begin{equation}
\g=-\f 4{x\tan x/2}+\d
\end{equation}
The singularity of (\ref{eq:102}) is $\log|\d|$. The eigenvalue
around the origin from $F_{+}=0$ can be expanded in terms of $\d$
\begin{equation}
F_{+}(z)=0\implies z=p=\pm2\sqrt{3\d x\tan\f x2}
\end{equation}
Taking this $p$ into (\ref{eq:109}), we find divergence $\sim\log1/|\d|$,
which cancels the singularity of (\ref{eq:102}) exactly. This is
reasonable because the sum over (\ref{eq:102}) to (\ref{eq:109})
is nothing but the Sommerfeld-Watson resummation of $\tilde{\mM}_{\pm}$
excluding $\tilde K_{\pm}$. Since we exclude the only zero that may across
the origin and cause singular determinant (\ref{eq:74}), all other
eigenvalues in $\tilde{\mM}_{+}$ are finite and should give a smooth
result.

\begin{figure*}
\begin{centering}
\subfloat[$h=0.3,\protect\t_{r}=0.2\;(T_{r}>T_{0})$\label{fig:1a}]{\begin{centering}
\includegraphics[height=3cm]{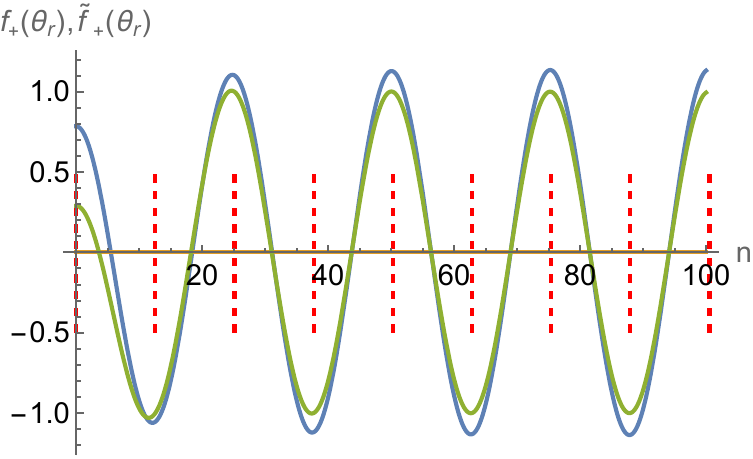}
\par\end{centering}
}\subfloat[$h=0.3,\protect\t_{r}=0.1\;(T_{r}<T_{0})$\label{fig:1b}]{\begin{centering}
\includegraphics[height=3cm]{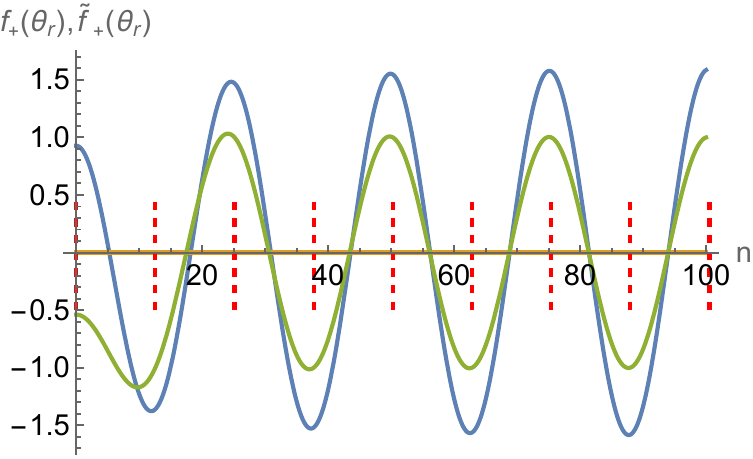}\includegraphics[height=3cm]{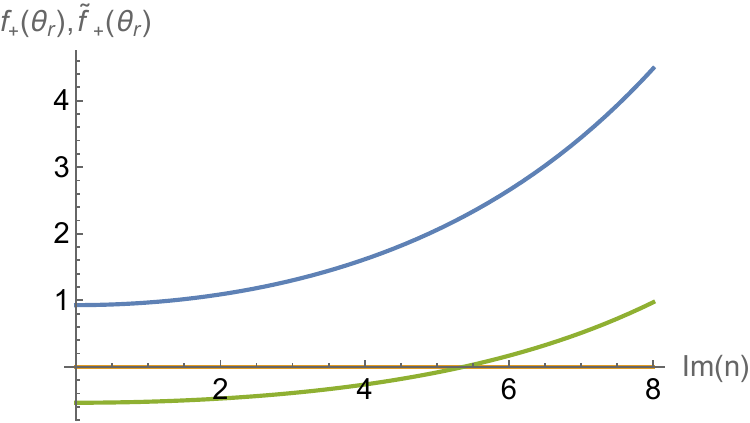}
\par\end{centering}
}\\
\subfloat[$h=1/2+i,\protect\t_{r}=0.4\;(T_{r}>T_{+})$\label{fig:1c}]{\begin{centering}
\includegraphics[height=3cm]{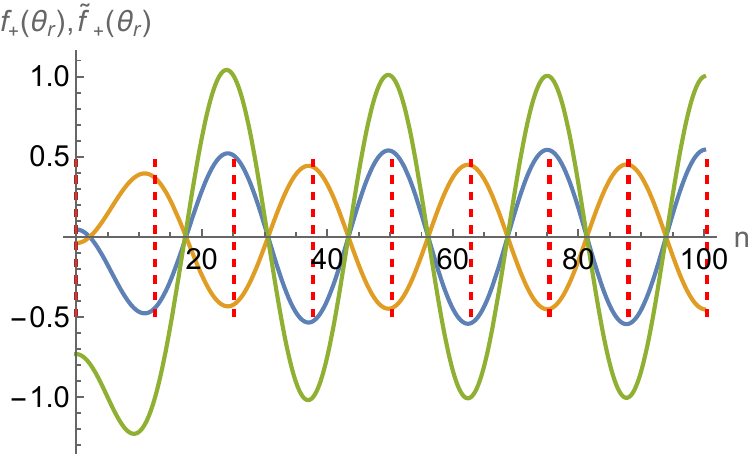}
\par\end{centering}
}\subfloat[$h=1/2+i,\protect\t_{r}=0.25\;(T_{r}<T_{+})$\label{fig:1d}]{\begin{centering}
\includegraphics[height=3cm]{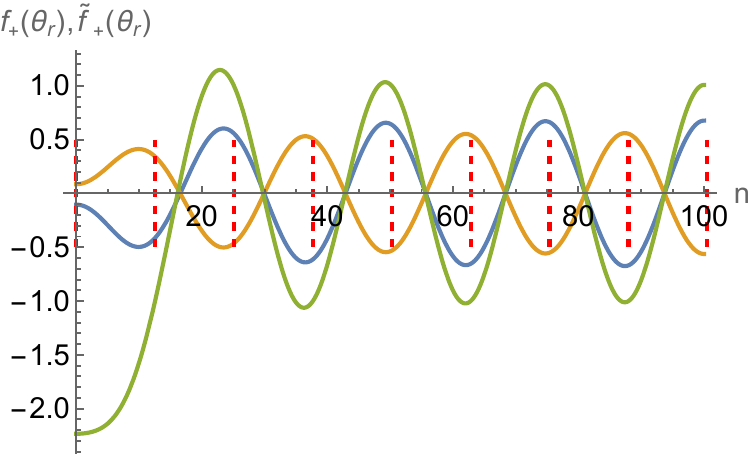}\includegraphics[height=3cm]{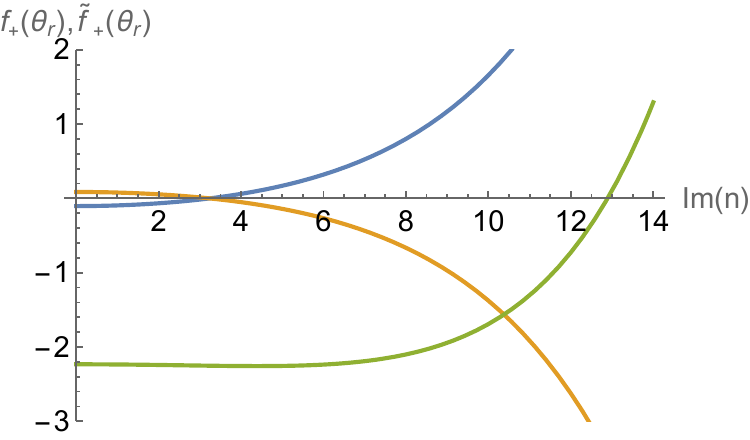}
\par\end{centering}
}
\par\end{centering}
\caption{\justifying  Comparison of $f_{+}(\protect\t_{r})$ and $\tilde{f}_{+}(\protect\t_{r})$
as function of $n$. Zeros are eigenvalues. The first row (a) and
(b) are for $h<1/2$, and the second row (c) and (d) are for $h=1/2+i\nu$.
The blue/yellow curve is the real/imaginary part of the exact eigenfunction
$f_{+}(\protect\t_{r})$ (in the first row $f_{+}(\protect\t_{r})$
is real) and the green curve is the approximate eigenfunction $\tilde{f}_{+}(\protect\t_{r})$.
The normalization has been tuned for a better visual comparison, in
which some trivial zeros in the expression \eqref{eq:46} and \eqref{eq:92} are also removed. Since
the functions are symmetric in $n$, we only plot the $n>0$ or $\Im n>0$
part. The red dashed lines are $n=4\pi\protect\Z$, on which the determinant
(\ref{eq:74}) will be singular. (a) $T_{r}>T_{0}$ and all eigenvalues
of $\tilde{\protect\mM}_{+}$ are real. (b) $T_{r}<T_{0}$ and a pair
of eigenvalues of $\tilde{\protect\mM}_{+}$ is imaginary (the right
picture of (b)) though all eigenvalues of $\protect\mM_{+}$ are still
real. (c) $T_{r}>T_{+}$ and all eigenvalues of $\protect\mM_{+}$
are real. (d) $T_{r}<T_{+}$ and a pair of eigenvalues of $\protect\mM_{+}$
is imaginary (the right picture of (d)).}
\end{figure*}

From this analysis, we learn another lesson that the choice of $\tilde K_{+}$
also depends on the temperature. For $T_{r}>T_{0}$, we can choose
$\tilde K_{+}$ to be empty set without too much error (see Fig. \ref{fig:1a}).
However, as temperature drops to $T_{r}<T_{0}$,
we must choose $\tilde K_{+}$ to include at least two opposite imaginary
eigenvalues to cure the unphysical singularity (see Fig. \ref{fig:1b}).
Let us consider the low temperature contribution from this pair of
imaginary eigenvalues in (\ref{eq:109}). Taking $y=\pi-x$ and expand
in small $y$ limit, the imaginary eigenvalues are 
\begin{equation}
p\app\pm2\pi i\g/y
\end{equation}
which in (\ref{eq:109}) leads to
\begin{equation}
I_{+}\supset-\log\sinh^{2}\f{\pi(-\g)}{2y}\app\f{\pi\g}y\label{eq:114}
\end{equation}

\begin{figure*}
\begin{centering}
\subfloat[$h=0.3,\protect\t_{r}=0.05$\label{fig:2a}]{\begin{centering}
\includegraphics[height=3cm]{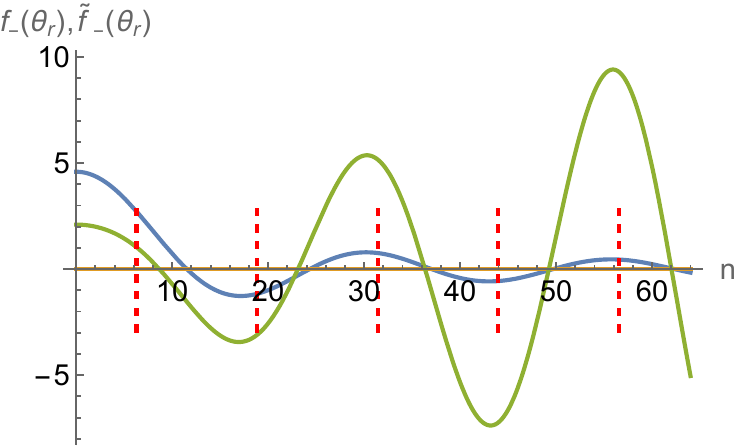}\includegraphics[height=3cm]{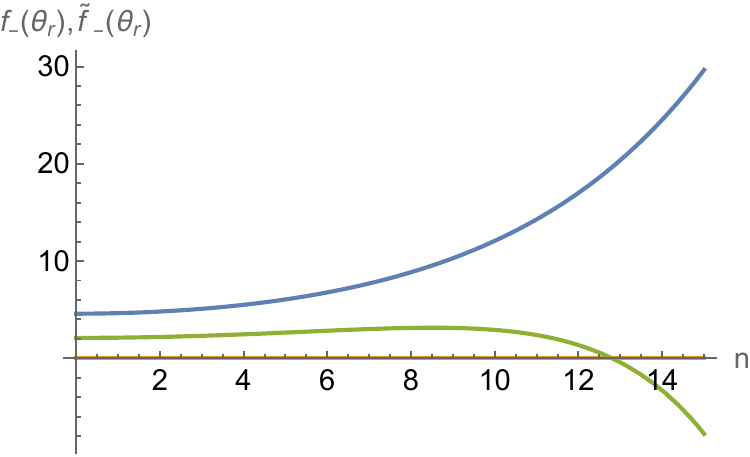}
\par\end{centering}
}\\
\subfloat[$h=1/2+i,\protect\t_{r}=0.2\;(T_{r}>T_{-})$\label{fig:2b}]{\begin{centering}
\includegraphics[height=3cm]{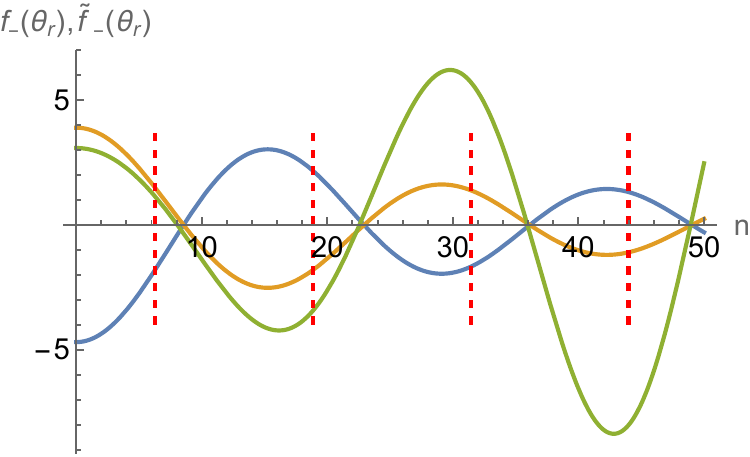}
\par\end{centering}
}\subfloat[$h=1/2+i,\protect\t_{r}=0.1\;(T_{r}<T_{-})$\label{fig:2c}]{\begin{centering}
\includegraphics[height=3cm]{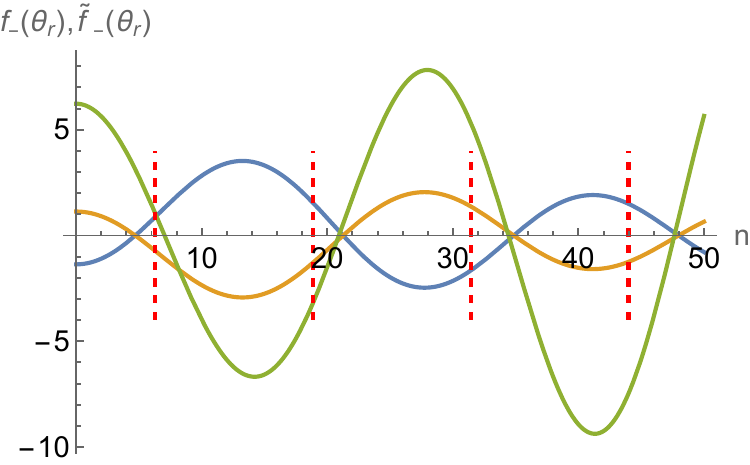}
\par\end{centering}
}
\par\end{centering}
\caption{\justifying  Comparison of $f_{-}(\protect\t_{r})$ and $\tilde{f}_{-}(\protect\t_{r})$
as function of $n$. Zeros are eigenvalues. The first row (a) is for
$h<1/2$, and the second row (b) and (c) are for $h=1/2+i\nu$. The
blue/yellow curve is the real/imaginary part of the exact eigenfunction
$f_{-}(\protect\t_{r})$ (in the first row $f_{-}(\protect\t_{r})$
is real) and the green curve is the approximate eigenfunction $\tilde{f}_{-}(\protect\t_{r})$.
The normalization has been tuned for a better visual comparison, in
which some trivial zeros in the expression \eqref{eq:46} and \eqref{eq:93} are also removed. Since the
functions are symmetric in $n$, we only plot the $n>0$ or $\Im n>0$
part. The red dashed lines are $n=2\pi+4\pi\protect\Z$, on which
the determinant (\ref{eq:74}) will be singular. (a) $\protect\mM_{-}$
only has real eigenvalues and $\tilde{\protect\mM}_{-}$ has one pair
of imaginary eigenvalues. (b) $T_{r}>T_{-}$ and the smallest positive
eigenvalue of $\protect\mM_{-}$ is larger than $2\pi$. (c) $T_{r}<T_{-}$
and the smallest positive eigenvalue of $\protect\mM_{-}$ is smaller
than $2\pi$. The smallest positive eigenvalue of $\tilde{\protect\mM}_{-}$
never crosses $2\pi$.}
\end{figure*}

We can consider a nontrivial $\tilde K_{-}$ as well. As we discussed before,
for $\g<0$ there is a pair of imaginary eigenvalues in $\tilde{\mM}_{-}$
given by (\ref{eq:98}) (see Fig. \ref{fig:2a}). For small $y$,
this eigenvalue is given by
\begin{equation}
p=\pm i2\pi(-\g)/y
\end{equation}
If we include them in $\tilde K_{-}$, the contribution to (\ref{eq:109})
is 
\begin{equation}
I_{-}\supset-\log\cosh^{2}\f{\pi(-\g)}{2y}\app\f{\pi\g}y\label{eq:85}
\end{equation}
which is the same as (\ref{eq:114}) at leading order.

\subsubsection{The first exact eigenvalue for $2\leq d<9$} \label{sec:iv-3}

Since we choose a nontrivial $\tilde K_{\pm}$, we need to include the exact
eigenvalues in $K_{\pm}$ for compensation in (\ref{eq:110}). We
will split the discussion into two parts: $h=1/2+i\nu$ ($2\leq d<9$) and $h<1/2$ ($d\geq 9$) 
because they have quite different behaviors. In this section, we will focus on  the $h=1/2+i\nu$ case.

\begin{figure}
\begin{centering}
\includegraphics[height=4cm]{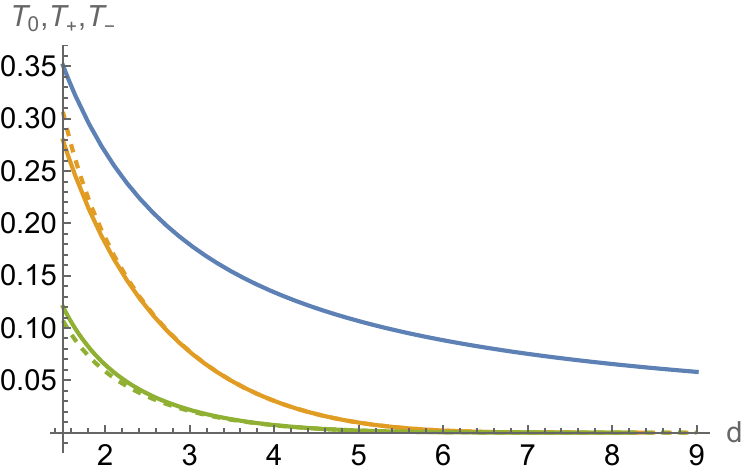}
\par\end{centering}
\caption{\justifying Three temperatures for $d\in[2,9]$ and $h=1/2+i \nu$.
The blue curve is the temperature $T_{0}$ related to approximate
eigenvalues $\tilde{\protect\mM}_{+}$, which has no physical significance. The
yellow curve is the critical temperature $T_{+}$ for exact eigenvalues
$\protect\mM_{+}$ and the dashed curve is the approximation (\ref{eq:89}).
The yellow curve is the critical temperature $T_{-}$ for exact eigenvalues
$\protect\mM_{-}$ and the dashed curve is the approximation (\ref{eq:95}).
Here we see that the approximation for $T_{\pm}$ works quite well
and $T_{0}>T_{+}>T_{-}$.\protect\label{fig:Three-critical-temperatures}}
\end{figure}

To solve the exact eigenvalues in low temperature, we can expand the exact wave function around
small $\t_{r}=y/2$ to first two orders \footnote{For $d=9$ and $h=1/2$, the small $y$ expansion is different and
the leading divergence is $\log y$. }
\begin{align}
&f_{\pm}(\t_{r})\propto (y/4)^{2h-1}(\cos h\pi\pm\cos\f{n\pi}{2x}) \nn\\
&\mp\f{\pi\G(h+1/2)}{\G(3/2-h)\G(h-n/(2x))\G(h+n/(2x))}+\cdots\label{eq:115}
\end{align}

For $h=1/2+i\nu$ with $\nu>0$, there is a critical temperature $T_{+}$. If $T_{r}<T_{+}$
there will be pairs of opposite imaginary eigenvalues as shown in
Fig. \ref{fig:1d}; if $T_{r}>T_{+}$, all eigenvalues are along
the real axis as shown in Fig. \ref{fig:1c}. The critical temperature
occurs at a value of $\t_{r}\in[0,\pi/2]$ such that $f_{+}(\t_{r})=0$
and $n=0$. For $2\leq d\leq9$, $\nu=\f 1 2 \sqrt{\f{9-d}{d-1}}\leq1.32$  by \eqref{eq:45-1} and we find that the
small $\t_{r}$ expansion (\ref{eq:115}) gives a quite good approximation.

Taking $n=0$ in (\ref{eq:115}), we find
\begin{equation}
(y/4)^{2i\nu}=\f{\pi\G(1+i\nu)}{\G(1-i\nu)\G(1/2+i\nu)^{2}(1-i\sinh\pi\nu)}
\end{equation}
Note that the RHS has unit norm, which leads to a real solution of
$y$. Moreover, there are infinite numbers of solutions of $y$ because
$y\ra ye^{k\pi/\nu}$ for any $k\in\Z$ is also a solution. This means
that as we decrease the temperature, there will be more and more pure
imaginary eigenvalues. For our purpose, we find the critical temperature
$1/\b_{+}$ for the existence of the first pure imaginary eigenvalue
corresponds to $k=-1$, namely
\begin{align}
y_{+}=&4\left[\f{\pi\G(1+i\nu)}{\G(1-i\nu)\G(1/2+i\nu)^{2}(1-i\sinh\pi\nu)}\right]^{-\f{i}{2\nu}}e^{-\pi/\nu}\nn\\
\sim &2\pi/(\sqrt{1-1/d}\b_{+})\label{eq:89}
\end{align}
where the second line is copied from \eqref{eq:36}. More accurately, the critical temperature $T_{+}$ is given by (\ref{eq:t-y}) in terms
of $y_{+}$. We show in Fig. \ref{fig:Three-critical-temperatures}
that this critical temperature is lower than $T_{0}$.

Existence of such a critical temperature has a significant impact
on the 1-loop determinant. At the critical temperature, there is
an exact null eigenvalue, which leads to vanishing determinant. This
means that the partition function will be singular at a specific temperature
and imply an infinite tail in the spectrum. To see this, let us
expand $y=y_{+}+t$ and solve $n$ for small $t$. Since there are
two symmetric eigenvalues, the leading order must be 
\begin{equation}
n/x=\pm A_{+}t^{1/2},\qquad A_{+}>0
\end{equation}
where for $t>0$ two eigenvalues are real and for $t<0$ two eigenvalues
are imaginary. Existence of imaginary eigenvalues implies instability of the homogenous saddle. To avoid finding new saddles, let us focus on $t>0$ and study the critical behavior as $t$ gets close to zero. The contribution to $I_{+}$
in (\ref{eq:110}) is 
\begin{equation}
I_{+}\sim\log\sin^{2}(A_{+}t^{1/2}x/4)\sim\log(y-y_{+})\label{eq:120}
\end{equation}
Practically, taking this exact eigenvalue into $K_{+}$ does not affect
the choice of $\tilde K_{+}$ because $\tilde{\mM}_{+}$ does not have a
corresponding eigenvalue for this one (but it has another pair of
pure imaginary eigenvalues as discussed before because $T_{+}<T_{0}$)
as shown in Fig. \ref{fig:1c}.

Together with $\log1/y$ term in (\ref{eq:108}), we find the contribution to the 1-loop
determinant is
\begin{equation}
e^{-(d-1)I_{+}/2}\sim\f 1{(\b_{+}-\b_r)^{(d-1)/2}}
\end{equation}
where the power $(d-1)$ is due to $(d-1)$ numbers of $h\neq2$ modes. As we restrict to $t>0$, here we should consider the validity of this partition function only for $\b_r<\b_+$. The inverse Laplace transformation for $E<0$ \footnote{Here we momentarily take $E_0=0$ for simplicity. Recovering $E_0$ simply replaces $E\ra E-E_0$. Note that the $\b$ contour in the inverse Laplace transformation is to the left of $\b_+\lam^{-1/2}$ and leads to the exponential decay for $E<0$.} leads to
\begin{equation}
\r(E)\sim\int_{i\R}\f{(2\pi i)^{-1}d\b e^{E\b}}{(\b_{+}\lam^{-\f 1 2}-\b)^{(d-1)/2}}\sim(-E)^{(d-3)/2}e^{\b_{+}\lam^{-\f 1 2}E}\label{eq:122}
\end{equation}
which means that the spectrum is non-compact and has a tail! As $d$
increases, though the power part grows, the more important critical inverse
temperature $\b_{+}\lam^{-1/2}$ increases as well and leads to a stronger suppression.
This is consistent with our expectation that larger $d$ has a smaller
tail.

Physically, at the critical temperature $T_{+}$, having a null eigenvalue
at quadratic order simply means we need to consider cubic or higher
order terms around the saddle to soften this singularity. This analysis
will be quite involved and beyond the scope of this work. However,
this does not affect (\ref{eq:120}) for $T_r\gtrsim T_{+}$ or the
existence of the exponential tail of the spectrum in (\ref{eq:122})
 though its shape might be modified for large enough $|E|$.

As for $K_{-}$, the situation is similar to $K_{+}$.
As we drop the temperature over some value, a pair of real eigenvalues
with smallest magnitude in $\mM_{-}$ moves to imaginary axis after
colliding at the origin. However, a null eigenvalue for $\mM_{-}$
does not cause any issue because the contribution to determinant (\ref{eq:110})
is a cosine function rather than a sine function. Indeed, the singularity
occurs at a higher temperature $T_{-}$ when an exact eigenvalue
hit $n=2\pi$ (see Fig. \ref{fig:2b} and \ref{fig:2c}). To solve
this critical temperature $T_{-}$, we set $f_{-}(\t_{r})=0$ with
$n=2\pi$ in (\ref{eq:115}), which leads to
\begin{equation}
\left(\f{\pi-x}4\right)^{2i\nu}=\f{\G(1+i\nu)\G(\f 1 2-i\nu+\f \pi x)\cos\pi(i\nu-\f \pi x)}{\G(1-i\nu)\G(\f 1 2+i\nu+\f \pi x)(\cos\f{\pi^{2}}x+i\sinh\pi\nu)}
\end{equation}
where we plugged in $\t_{r}=(\pi-x)/2$. 

We do not have a small parameter
to solve this equation perturbatively. However, the solution can be
approximated by iteration: take $x\ra\pi$ on the RHS and solve $x$
for LHS, then take this value to RHS and solve $x$ for LHS, and so
on. The leading order approximation in $y=\pi-x$ variable is 
\begin{equation}
y_{-}\app4\left(\f{\G(1+i\nu)\G(3/2-i\nu)\cosh \pi \nu}{\G(1-i\nu)\G(3/2+i\nu)(1-i\sinh\pi\nu)}\right)^{-\f{i}{2\nu}}e^{-\pi/\nu}\label{eq:95}
\end{equation}
Numerically, this approximation works well for $d\in[2,9]$ as shown
in Fig. \ref{fig:Three-critical-temperatures}. For this range of
$d$, we show that $T_{-}<T_{+}$ in Fig. \ref{fig:Three-critical-temperatures}.
Therefore, the true physical critical temperature (with $\lam^{1/2}$ restored) is 
\be 
T_{c}=T_{+} \lam^{1/2}
\ee

\begin{figure}

\begin{centering}
\subfloat[$d=2$\label{10a}]{\begin{centering}
\includegraphics[height=2.7cm]{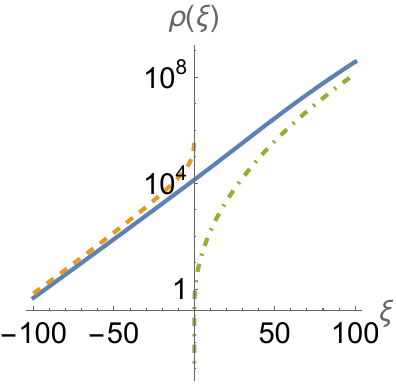}\end{centering}}
\hfill\subfloat[$d=4$\label{10b}]{\begin{centering}
\includegraphics[height=2.7cm]{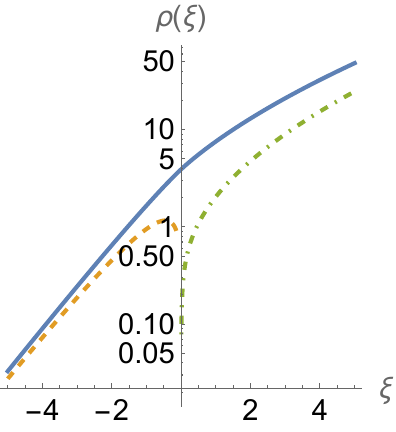}\end{centering}}
\hfill\subfloat[$d=6$\label{10c}]{\begin{centering}
\includegraphics[height=2.7cm]{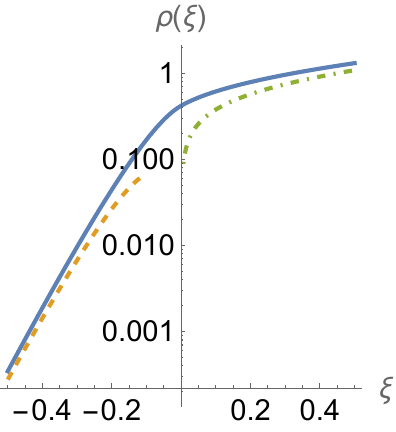}\end{centering}}
\par\end{centering}
\caption{\justifying The edge spectra for different $d<9$ (blue curves). Here we choose $\lam=1/2$ for all three figures. As $d$ increases, the transition between $\xi>0$ and $\xi<0$ becomes sharper. The orange dashed curves are $\sim(-\xi)^{(d-3)/2}e^{\b_* \xi}$ and the green dot-dashed curves are $\sim\sinh 2\sqrt{\xi}$. The normalizations of the asymptotic curves are correctly estimated to match the exact spectra. \label{fig:rho}}
\end{figure}

For temperature above but close to $T_c$, the full tree level plus 1-loop low temperature partition function is 
\begin{equation}
Z(\b)\sim\f 1{(1/T_c-\b)^{(d-1)/2}\b^{3/2}}e^{-\b E_{0}+c/\b}\label{eq:103-3}
\end{equation}
where $c=\pi^{2}/[\lambda(1-1/d)]^{3/2}+O(1/\lambda^{1/2})$ and
the ``ground" energy $E_{0}$ has leading contribution from the saddle
(\ref{eq:38-1}) and subleading contribution from (\ref{eq:55-2}),
(\ref{eq:108}), (\ref{eq:109-1}), (\ref{eq:114}) and (\ref{eq:85})
\begin{equation}
E_{0}=-\f 2{\sqrt{\lambda(1-1/d)}}\left(1+\f 12(1-1/d)\lambda+\cdots\right)\label{eq:104}
\end{equation}
The edge spectrum is given by the inverse Laplace transformation of
(\ref{eq:103-3}). It is convenient to define the triple-scaled energy near the edge
\be 
\xi=(E-E_0)c \label{197-xi}
\ee
and the spectrum can be written as
\be 
\r(\xi)\sim \f 1 {2\pi i}\int_{\e-i\infty}^{\e+i\infty}\f {d\b e^{\b \xi+1/\b}}{(1/(cT_c)-\b)^{(d-1)/2}\b^{3/2}} \label{197-eq}
\ee
where we should choose $0<\e<1/(cT_c)$. From this equation, the effective parameter that controls the sharpness of the transition between the chaotic part of the spectrum $\sim e^{2\sqrt{\xi}}$ and the exponential tail $\sim (-\xi)^{(d-3)/2}e^{\b_*\xi}$ is 
\be 
\b_*\equiv 1/(cT_c)\app \f{(1-1/d)^{3/2}}{\pi^2} \cdot \f {\lam}{T_+}
\ee
Though $\lam$ is a small parameter in our approximation, from Fig. \ref{fig:Three-critical-temperatures}  we see $T_+$ can also go to a very small value when $d$ gets close to 9. Therefore, how sharp the transition is depends on the competition between $\lam$ and $d$. Physically, this is quite reasonable. When $\lam$ is smaller, the model is closer to the commuting SYK model with non-compact spectrum and we should expect a bigger tail and a milder transition; when $d$ is larger, the model is closer to the double-scale SYK model with compact spectrum and we should expect a smaller tail and sharper transition. This feature is illustrated in Fig. \ref{fig:rho}.

In the case $T_+ \ll \lam$ that is relevant to the numerical simulation in Section \ref{sec:num}, for the energy window very close to the tail, we will mainly have the transition between $\sim \sqrt{\xi}$ and $\sim(-\xi)^{(d-3)/2}e^{\b_* \xi}$. For this regime, we can drop the $c/\b$ term in the exponent of the partition function \eqref{eq:103-3} and the energy resolution changes from $1/c$ to $T_c$. Let us define 
\be
\zeta=(E-E_0)/T_c
\ee
and the spectrum is given by
\begin{align} 
\r(\zeta)&\sim \f {e^{\zeta}} {2\pi i}\int_{-1/2-i\infty}^{-1/2+i\infty}\f {dt e^{\zeta t}}{(1+t)^{3/2}(-t)^{(d-1)/2}} \nn\\
&=\begin{cases}
    \f{e^{\zeta}U(3/2-d/2,1-d/2,-\zeta)}{\G((d-1)/2)}\quad &\zeta<0\\
    \f{U(-1/2,1-d/2,\zeta)}{\G(3/2)}\quad &\zeta>0
\end{cases} \label{201-rho}
\end{align}
where we changed variable $\b=(1+t)/T_c$ in the first line and $U(a,b,z)$ is the Tricomi's confluent hypergeometric function. This expression is a smooth interpolation between two asymptotic regions and particularly improves the estimation around $\zeta=0$, at which the two asymptotic expressions $\sim \sqrt{\zeta}$ and $\sim(-\zeta)^{(d-3)/2}e^{\zeta}$ are both singular.

\subsubsection{1-loop correction for $d> 9$} \label{1-loop_d9}

For $d>9$ we have real $h<1/2$. In this case, the exact eigenvalues $\mM_{+}$ are always real. From
Fig. \ref{fig:1b} we see that in low temperature $T_{r}<T_{0}$
(but not too low), the approximate eigenvalues $\tilde{\mM}_{+}$
matches with $\mM_{+}$ quite well except the pure imaginary one in
$\tilde K_{+}$. On the other hand, there is indeed no imaginary eigenvalues
in the exact set $\mM_{+}$. Therefore, to compensate $\tilde K_{+}$, we
need to include at least the pair of exact eigenvalues with smallest magnitude,
which has no approximate proxy in $\tilde{\mM}_{+}$ in Fig. \ref{fig:1b}. 

For $h<1/2$ and small $y$, the first term of \eqref{eq:115} is singular. To find the
first solution of $f_{+}(\t_{r})=0$ we need to take $n\pi/(2x)$
close to $\pi(1-h)$ because $1/\G(x)$ is a entire function. It follows
that the first positive exact solution is
\begin{equation}
p/x \app 2(1-h)-\f{2\G(h+1/2)(y/4)^{1-2h}}{\G(2h-1)\G(3/2-h)\sin h\pi}\label{eq:117}
\end{equation}
Taking this back to (\ref{eq:110}) gives a smooth and finite contribution
to $I_{+}\sim\log\cos^{2}(\pi h/2)+O(y^{1-2h})$.

For $\mM_-$, we again find all eigenvalues being real for $h<1/2$. In this case, as long as the temperature is not too low, the eigenvalues in $\mM_{-}$ has one-to-one match with the real eigenvalues in $\tilde{\mM}_{-}$
with small errors (Fig. \ref{fig:2a}). These errors asymptotically
approach to zero as $n$ goes large. The first positive exact eigenvalue is around $n/x\sim 2h$, which gives a finite value for $I_-\sim\log\cos^{2}(\pi h/2)$.

By above analysis, we find no singular term appearing at the 1-loop level for $h<1/2$. It follows that the mechanism for the tail of the spectrum
in the IR for $h=1/2+i\nu$ breaks down. This is consistent with the observation in  Fig. \ref{fig:Three-critical-temperatures} that the critical temperature
$T_{+}$ goes to zero when $d\ra9$. One may wonder if there is a different mechanism to produce a tail at 1-loop order if we consider a larger set of $K_\pm$ of exact eigenvalues. This sounds plausible since the approximation of $\tilde K_\pm$ becomes worse if we go to lower temperatures and perhaps the accumulation of the errors may cause something dramatic. We do this analysis in Appendix \ref{app:2}, and numerically find a correction to the partition function from a large size ($\sim 100$) of $K_\pm$ that
\begin{equation}
Z(\b)\sim\f 1{\b^{3/2+a_{2}/2}}e^{-\b(\tilde{E}_{0}+\f 12a_{1}\sqrt{\lambda})+c/\b} \label{201-eq}
\end{equation}
where $\tilde E_0$ is the ground energy of the leading large $d$ approximation $q\ra \bar q$ by \eqref{39-eq}
\begin{align}
\tilde{E}_{0}=&-\f 2{\sqrt{(1-q)(1-1/d)}} \nn\\
=&-\f 2{\sqrt{\lambda(1-1/d)}}\left(1+\f{\lambda}4+\cdots\right)\label{eq:106-1}
\end{align}
and $a_{1,2}$ are two parameters to fit with the numeric data. From Fig. \ref{fig:6b}, we find both $a_{1,2}$ go to zero as $d$ increases as expected \footnote{Rigorously speaking, for $h=1/2+i\nu$ in Sec. \ref{sec:iv-3} we should also consider the accumulation effect of large number of eigenvalues in $K_\pm$ because the ground energy \eqref{eq:104} has an order $O(\sqrt{\lam})$ gap from $\tilde E_0$ at large $d$ limit. However, this does not change the singularity of the partition function from the first null eigenvalue.}. 

Using the $\xi$ from \eqref{197-xi} with $E_0=\tilde E_0+a_1\sqrt{\lam}/2$, the inverse Laplace transformation of \eqref{201-eq} leads to the spectrum
\begin{equation}
\r(\xi)\sim \xi^{(a_{2}+1)/4}I_{(a_{2}+1)/2}(2\sqrt{\xi})
\end{equation}
where $I_{k}$ is the modified Bessel function of first kind. The
low and high energy behavior of this spectrum is
\begin{equation}
\r(\xi)\sim\begin{cases}
\xi^{(a_{2}+1)/2} & \xi\ll 1\\
\xi^{a_{2}/4}e^{2\sqrt{\xi}} & \xi\gg 1
\end{cases}\label{eq:119}
\end{equation}
In this expression, $a_{1}$ shifts the ground energy to the left and $a_{2}$ modifies
the shape of the spectrum.

This numeric computation shows the low temperature partition function
and the spectrum for large but finite $d>9$ beyond the Schwarzian
limit of the double-scaled SYK at 1-loop order. However, the 1-loop analysis does not give the tail of the non-compact
spectrum. We expect this highly suppressed tail follows from higher
loop corrections, which we leave for future investigation.

\section{Thermodynmical meaning of the critical temperature} \label{sec:5}

In previous sections, we have computed a critical temperature $T_c$ for the dcSYK model in the $q=0$ case and the $q\ra 1$ case with $d<9$. Since this critical temperature is read from the exponent of the infinite exponential tail of the spectrum, we should naturally expect that there does not exist quantum chaos for $T\lessapprox T_c$ because the system simply does not have enough eigenvalues to scramble in the tail regime with a sparse spectrum. On the other hand, the existence of the $h=2$ mode for $q\rightarrow 1$, which is identical to the Schwarzian mode, suggests that quantum chaos should exist for $T\gtrapprox T_c$. This transition of quantum chaos will be further supported by our numerical studies of spectral form factor in Sec. \ref{sec:SFF}. However, this physical interpretation has nothing to do with the critical temperature in the thermodynamical sense, defined as a non-smoothness change of free energy. 

To compute the thermodynamical critical temperature $T_{t.c.}$, we need to compute the ensemble average of $\log Z(\b)$. However, the chord diagram formalism on a disk in the double scale limit is equivalent to compute the ensemble average of $Z(\b)$. As a canonical ensemble, these two are equivalent in most temperatures because eigenvalues thermalize almost in the same way across different random samples. As we go to the low temperatures when only finite numbers of eigenvalues contribute to the free energy, the fluctuation among random samples becomes huge, and $\E [\log Z(\b)]$ and $\log \E [Z(\b)]$ can have a significant difference. Physically, we often call the former as ``quenched" and the latter as ``annealed". 

It is usually quite hard to compute the quenched free energy 
\be 
F_q(\b)=-\f 1 {N\b}\E [\log Z(\b)]
\ee
for a chaotic system because of the ensemble average of a log function. A practical way is to use a replica trick to compute $F_n(\b)=-\f 1 {N\b} \E[ Z(\b)^n-1]/n$ for all positive integer $n$ and analytically continue $n$ to 0
\be 
F_q(\b)=\lim_{n\ra 1} F_n(\b)
\ee
Computing the $n$ replica ensemble average of the partition function in the double scale limit requires computing the chord diagrams connecting $n$ circles following the method in \cite{Berkooz:2020fvm}, though we do not have a closed form for general $n$ even for the double-scaled SYK model. 

Nevertheless, the annealed free energy 
\be 
F_a(\b)=-\f 1 {N\b} \log  \E [Z(\b)]
\ee 
is still informative because it gives a lower bound to the truly thermodynamical critical temperature $T_{t.c.}$. Indeed, it follows from the following general inequality for any finite $N$ system (see e.g. \cite{mezard2009information})
\be 
F_q(\b)\geq F_a(\b ), \quad -\b^2 \del_\b F_q(\b)=\del_T F_q(1/T)\leq 0
\ee
The first is due to $-\log x$ being a convex function, and the second is the positivity of entropy. If the ensemble average gives a continuous spectrum in IR, the annealed free energy $F_a(1/T)$ will not be a monotonically decreasing function for small enough $T$. Therefore, the thermodynamically critical temperature is bounded from below
\be 
T_{t.c.}\geq T_{l.b.},\quad \del_T F_a(1/T)|_{T=T_{l.b.}}=0
\ee

Let us consider two relevant simple examples. The first example is for unbounded Gaussian spectral density \eqref{55-rho} of commuting SYK model with Hamiltonian $H_a/\sqrt{d}$. Note the full spectrum density includes the $2^N$ factor to \eqref{55-rho}. The annealed free energy is given by
\be 
F_a(\b)=-T\log 2-\f 1 {2dN T}
\ee
which leads to
\be 
T_{l.b.}=\f 1 {\sqrt{2d N\log 2 }} \label{210-eq}
\ee
The second example is for the square-root edge spectrum $\rho(E)\app \f {2}{\sqrt{\pi}}e^{N s_0}\sqrt E$ where $s_0$ is of $O(1)$. This leads to partition function $Z(\b)=e^{N s_0}/\b^{3/2}$. For this partition function, it is easy to find
\begin{align} 
F_a(\b)=-T s_0-\f {3T}{2N}  \log T \label{211-eq}\\
\implies T_{l.b.}=e^{-2N s_0/3-1} \label{212-eq}
\end{align}
From these two examples, we see clearly that the compact spectrum potentially has a much lower critical temperature than the Gaussian non-compact spectrum. 

For the $q=0$ dcSYK model, the tail shape \eqref{87-eq} is essentially the same as the Gaussian case \eqref{55-rho} times a factor of $d e^{d+1}$. Regarding the annealed free energy, if $T>T_c$ we use \eqref{211-eq} as an approximation, and if $T<T_c$ we can just consider the Laplace transformation of \eqref{87-eq} for $y<0$ as an estimation for $Z(\b)$. This is reasonable because the lower bound critical temperature \eqref{212-eq} from the $T>T_c$ part is much lower than $T_c$. It turns out that the lower bound critical temperature from \eqref{87-eq} is similar to \eqref{210-eq}
\be 
T_{l.b.}\app \f 1 {\sqrt{2d N (\log 2+\f {\log d +d+1}{N}) }}
\ee
Note that this lower bound of critical temperature is much lower than $T_c=1/(2d)$ since $d\ll N$. If we assume the true thermodynamical critical temperature $T_{t.c.}$ is of the same order of $T_{l.b.}$ \footnote{The exact gap between $T_{l.b.}$  and $T_{t.c.}$ is model dependent. For example, the finite $p$-spin SK model \cite{gardner1985spin} has the Gaussian spectra and its critical temperature depends on $p$, which in our notation are in the same order as $T_{l.b.}$ \eqref{210-eq}. In the $p\ra\infty$ limit (noting it is within the regime of our double-scaled limit), the $p$-spin SK model becomes the random energy model \cite{derrida1980random,derrida1981random, mezard2009information} and the critical temperature becomes exactly \eqref{210-eq}. Since the tail \eqref{87-eq} is essentially the same as the Gaussian tail, this assumption seems reasonable.}, there is a wide temperature window between $T_{t.c.}$ and $T_c$, in which the self-averaging still holds though the dynamics is non-chaotic and perhaps similar to the commuting SYK model like \cite{Gao:2023gta}.

For the $q\ra 1$ and $d<9$ case, the scenario is different. Starting with the 1-loop partition function \eqref{eq:103-3} with additional normalization $e^{N s_0}$ with $s_0\sim O(1)$, the annealed free energy can be computed directly
\be 
F_a(\b)=-Ts_0+\f{E_0}{N}-\f{c T^2}{N}+\f{T\log\f{(T/T_c-1)^{d-1}}{T^{d-4}}}{2N}
\ee
Noting that $c\sim \lam^{3/2}$ and $T_c\sim \lam^{1/2}$, we find
\be 
T_{l.b.}\app T_c \left(1+\f{d-1}{2N s_0}\right)
\ee
which is indeed very close to $T_c$. If we again assume the thermodynamical critical temperature $T_{t.c.}$ is of the same order as $T_{l.b.}$, it is also of the same order as $T_c$. Therefore,  $T_c$ has a thermodynamical meaning that the self-averaging holds only for $T\gtrsim T_c$, and there is a new phase in $T<T_c$. Note that in the strict $\lam\ra 0$ limit, we will have Gaussian spectral density, whose $T_{t.c.}$ is roughly \eqref{210-eq} and scales as $1/\sqrt{N}$. This is compatible with $T_c\sim\lam^{1/2}\sim O(1/\sqrt{N})$.

\section{Numeric result} \label{sec:num}

\subsection{The spectrum}

To check our analytic results, we compare them with results from exact diagonalization. In Fig. ~\ref{allfit} we show the comparison between the numerical result of spectrum of the $N=15,~p=2$, $d=2,5$ model as well as the theoretical prediction of the double-scaled SYK model with parameter $\bar{q}$. We find good agreement between the bulk spectrum of analytic and numerical results. However, if we zoom in to the edge of the spectrum, we find a tail exceeding the theoretical prediction. 

\begin{figure}[ht]
    \centering
    \begin{subfigure}{0.45\linewidth}
        \centering
        \includegraphics[height=2.6cm]{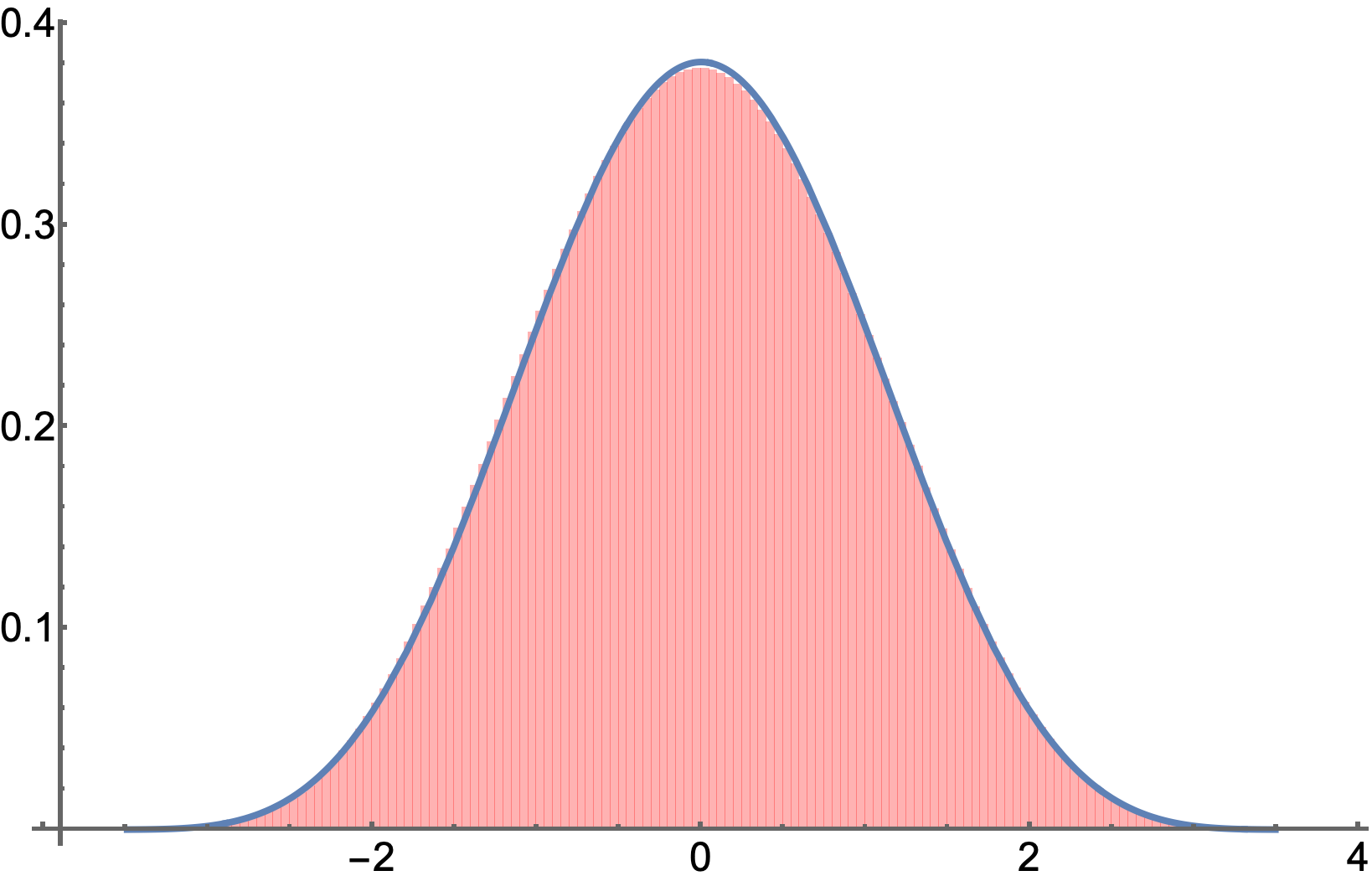}
        \caption{$N=15$ $d=2$ }
        \label{fig:fullN15d5}
    \end{subfigure}
~
    \begin{subfigure}{0.45\linewidth}
        \centering
        \includegraphics[height=2.6cm]{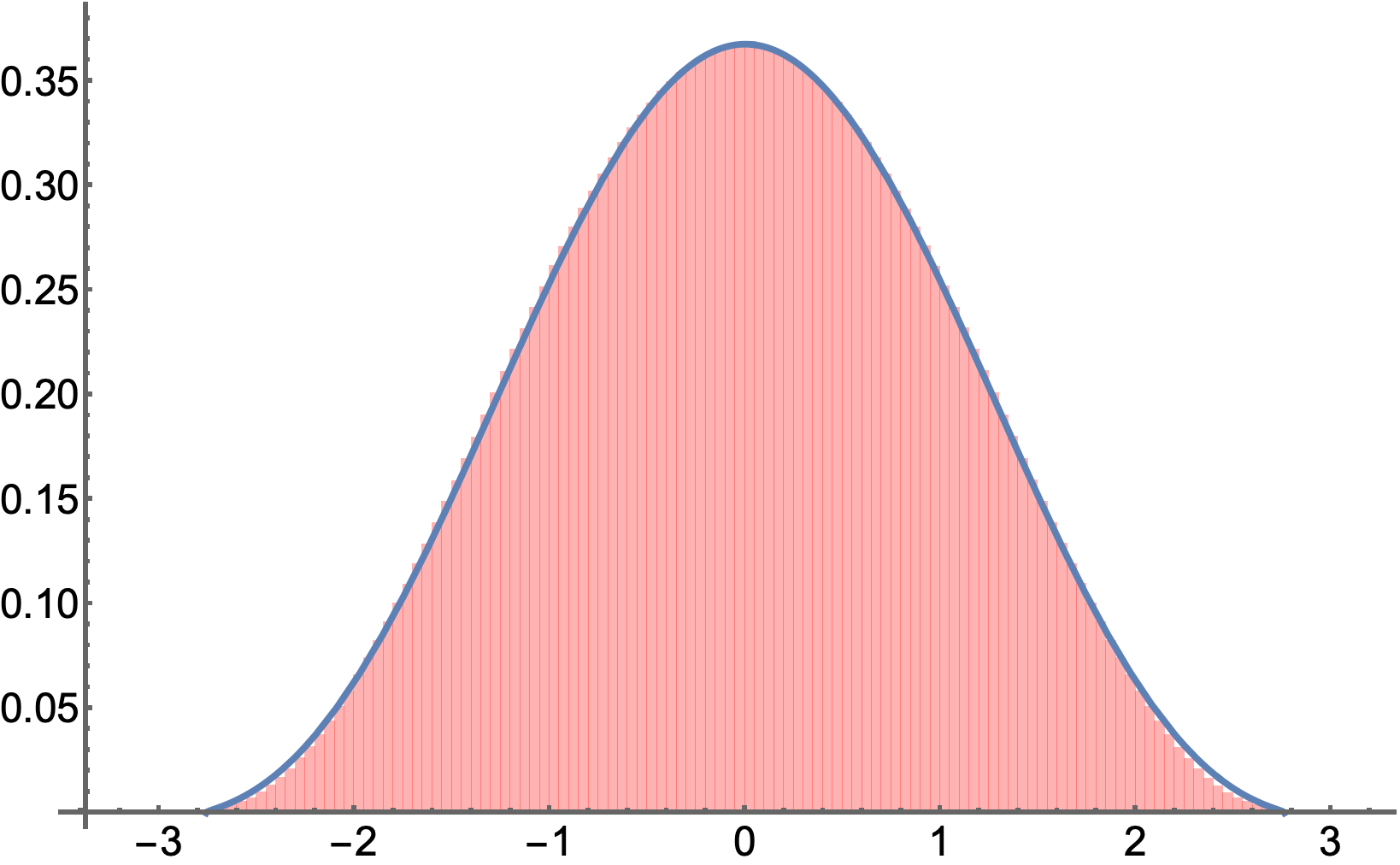}
        \caption{$N=15$ $d=5$ }
        \label{fig:fullN13d9}
    \end{subfigure}

    \caption{\justifying A fit of the whole spectra of models with $N=15$, $p=2$ and various $d$.}
    \label{allfit}
\end{figure}

In the previous section,  we provide an analytical expression describing the behavior of the tail. Here we can also compare the tail part of the spectrum from the numerical computation with that predicted by the analytic results. We list in Table \ref{tab:1} the theoretic and numeric values of the critical temperature $T_c$, which can be read from the reciprocal of the exponent of the exponential tail in~\eqref{eq:122}. 

Let us emphasize that our numerics has $\lam=16/15 \sim O(1)$, it is inappropriate to use the analytic expression of the ``ground" state energy $E_0$~\eqref{eq:104} because the loop expansion in $\lambda$ is not appropriate.  Instead, we will simply use the theoretic value of the double-scaled SYK model with $q \ra \bar q$ replacement, which leads to 
\be 
\tilde{E_0}=-2/\sqrt{(1-1/d)(1-e^{-\lam})}
\ee
We can compare this with the numerical results. Examples of the fitting to the tail part is shown in Fig.~\ref{fullspec15}. We also list the comparison between the theoretical value and the numerics in Table \ref{tab:2}.  We observe that for small $d$ the agreement is quite well, as $d$ increases the tail region becomes shorter so the fitting is less reliable.

In this computation, we sampled over 5000 samples for the $N=15$ model. In this case, $\lam>T_+$ for $d\geq2$ and the theoretical prediction of the spectrum \eqref{201-rho} applies. To extract the parameter $T_c$ and $E_0$, we fit the tail part of the numerical result by \eqref{201-rho}. A technical detail is that we should select the appropriate range of data to fit; we should not include the data points on the very end of the numerical spectrum since they are subject to large fluctuations and we also should not include the data that are deep into the bulk of the spectrum and do not belong to the tail.  In practice, since there is a prior no way to determine the best range that minimizes the error due to the above factors, we simply scan over and fit different ranges of data with the ansatz and choose the range that gives the result closest to the theoretical value. This gives the best-fitting result and should be considered as the most optimal bound of the fitting result. This is the most information about the tail we can extract from the numerical results. Further notice that in doing this, we carried out a few consistency checks, such as the fitting range should be below the theoretical predicted $\tilde{E_0}$. Another aspect to note is that we do not fit $\tilde E_0$ for $d=3$ in Table \ref{tab:2} because the theoretical prediction of the tail is a simple exponential decay and $\tilde E_0$ dependence is only through an overall coefficient.

Notice that strictly speaking, this analytic prediction works exactly in the $N\to \infty$ and $\lam\ra 0$ limit. However, as we can see explicitly that at $d=2,3$ the agreement between this prediction and the numerical results is already quite well even though $N=15$ and $\lam=16/15$ are not close to an ideal case. On the other hand, this matching becomes worse for larger $d$. This is precisely as expected since the current analysis of the tail is only at the 1-loop order and assuming $d\ll N$. There are two possible corrections that come to play a role as $d$ increases. The first is from higher loop orders, which modifies the pole of the partition function. As we argued in Section \ref{1-loop_d9} that higher loops are crucial to explain the tail of the spectrum for $d>9$, since our $\lam$ is not small enough, we should expect higher order corrections start to play a role at even smaller $d$'s. The second source of corrections come from  finite $N$ effects, possibly in orders of $d/N$, as our $d$ is not relatively much smaller than $N$. This finite $N$ correction is beyond the double scaling limit and requires new computational techniques to analyze it. Nevertheless, the numerical values for both $T_c$ and $|\tilde E_0|$ monotonically decrease as $d$ increases, which is qualitatively compatible with our expectation.

\begin{table}
    \centering
    \begin{tabular}{|c|c|c|}
    \hline
        $T_c/\lam^{1/2}$ & $d=2$ & $d=3$ 
        \\
    \hline
        Theory  & 0.1809 & 0.07717 
        \\
    \hline
        Numerics &  0.1836 &  0.07709 
        \\
    \hline
       Error &  1.5\% &  0.1\% 
       \\
    \hline   
    \end{tabular}
    \caption{The critical temperature $T_c$ for different $d$ and comparison of theoretical and numerical values.}
    \label{tab:1}
\end{table}

\begin{table}
    \centering
    \begin{tabular}{|c|c|c|}
    \hline
        $\tilde E_0$ & $d=2$ & $d=3$ 
        \\
    \hline
        Theory  & -3.4926 &  -3.0247 
        \\
    \hline
        Numerics & -3.5606 & NA 
        \\
    \hline
        Error & 1.9\% & NA 
        \\
    \hline
    \end{tabular}
    \caption{The ``ground" energy $\tilde E_0$ for different $d$. For $d=3$ the parameter $E_0$ does not appear in the  fitting ansatz~\eqref{eq:122}, so the above table does not have the data for $d=3$.}
    \label{tab:2}
\end{table}

\begin{figure}[ht]
    \centering
    \begin{subfigure}{0.45\linewidth}
        \centering
        \includegraphics[width=\linewidth]{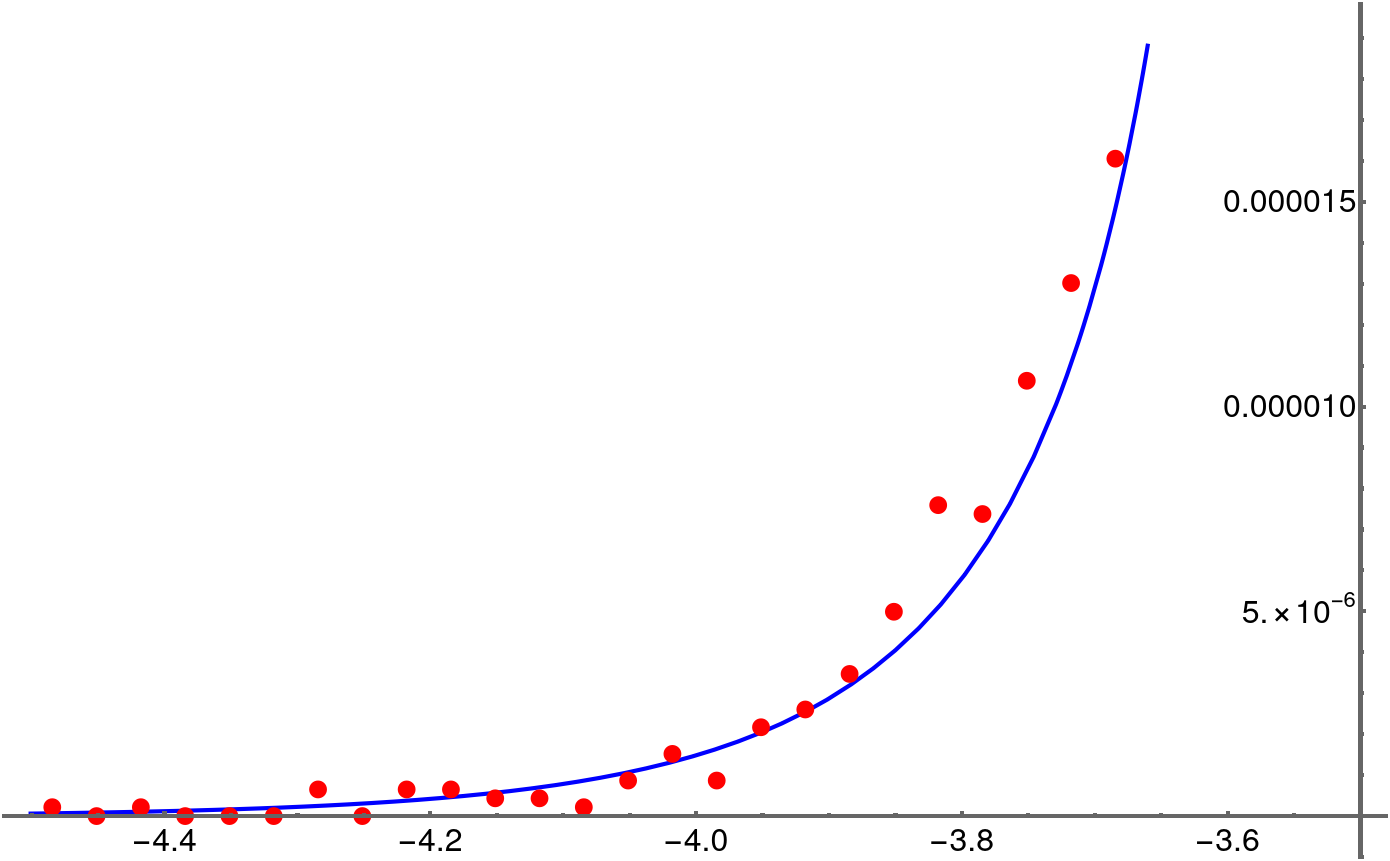}
        \caption{$d=2$}
        \label{fig:d2fit}
    \end{subfigure}
    \hfill
    \begin{subfigure}{0.45\linewidth}
        \centering
        \includegraphics[width=\linewidth]{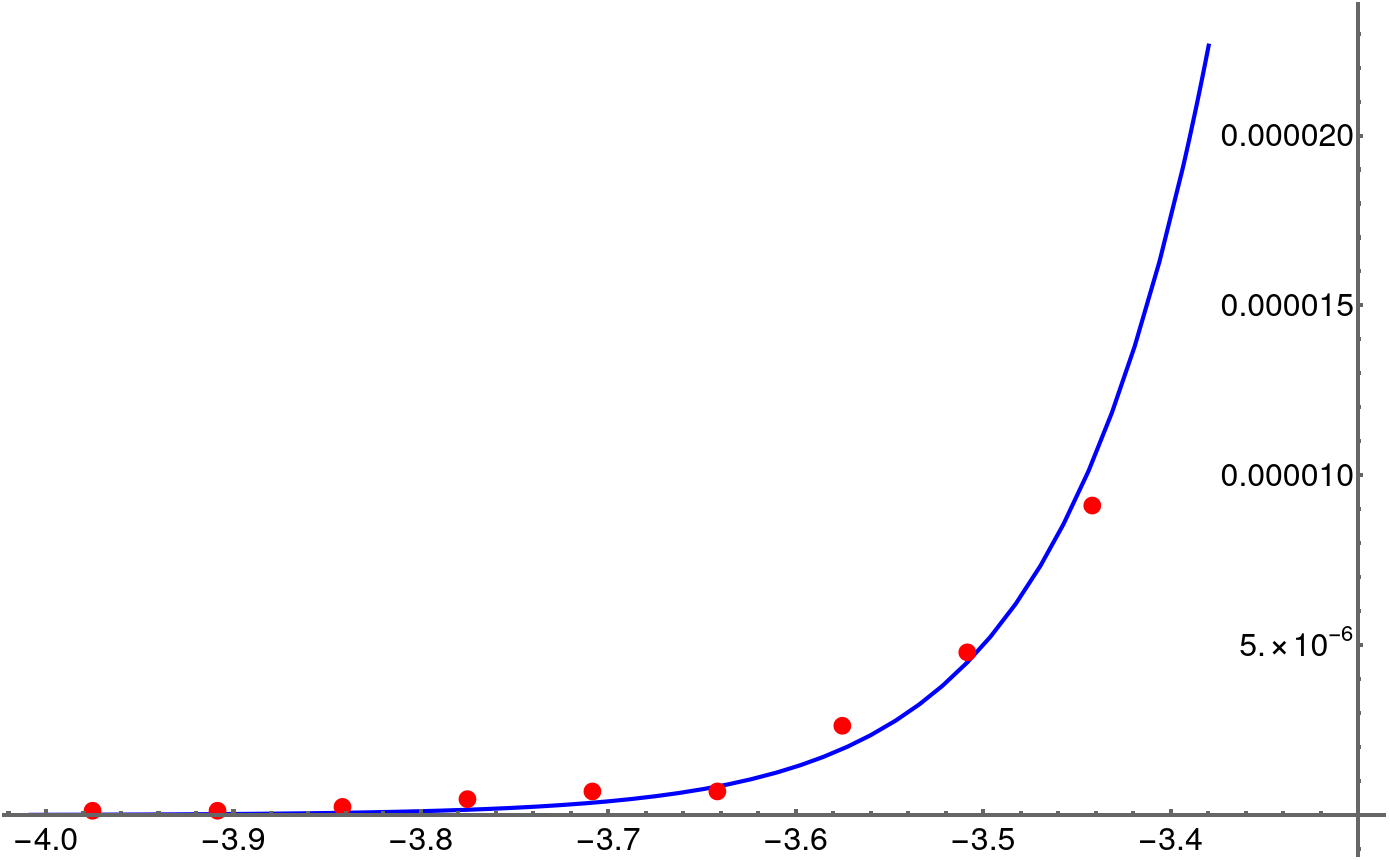}
        \caption{$d=3$}
        \label{fig:d3fit}
    \end{subfigure}
    \caption{\justifying A fit of the tails of the spectra of models with $d=2,3$.}
    \label{fullspec15}
\end{figure}

\subsection{The spectral form factor} \label{sec:SFF}

An independent diagnosis of quantum chaos additional to the shape of the spectrum is the spectral form factor (SFF) \cite{Cotler:2016fpe}
\be 
\text{SFF}(\b,t)=\E[Z(\b+it)Z(\b-it)]
\ee
In our dcSYK model, we can also compute the spectral form factors numerically for theories with various $d$. 

In Fig.~\ref{SFFs}, we demonstrate the SFF for the dcSYK models with $N=12$ for $d=1$ and $d=2$, as well as the regular SYK model with random couplings normalized according to~\eqref{5-syk}, which can be considered as the $d\to \infty$ limit of the dcSYK model. In each of the plots, the dashed lines are SFF for individual samples in the computation, and the solid curve is the average over all the samples. From the results, we observe that for $d=1$ the SFF does not contain a ramp and plateau even at infinite temperature $\b=0$. On the contrary, in Fig. \ref{SFF24d2b0} for $d=2$ we observe the clear appearance of ramps and plateaus in the averaged SFF at $\b=0$. As we decrease the temperature but still above the critical temperature $T_c$ (in the setup of Fig.~\ref{SFFs}, $T_c=0.187$, $\beta_c=5.35$ for $d=2$), the ramp becomes less steep comparing with the high temperature result. For example, as shown in Fig.~\ref{SFF24d2b2} the slope of the ramp at $\beta=2$ is $k=0.45$ which is smaller than the slope of the ramp at $\beta = 0$, which is $k=0.77$. As we decrease the temperature further below $T_c$ in Fig.~\ref{SFF24d2b10}, we observe big fluctuations in samples even at earlier time, this is qualitatively different from the regular SYK model as well as the dcSYK model at higher temperatures. Additionally, we do not observe increasing ramp in the SFF, in fact, if we fit the region from $10^2$ to $10^3$ in SFF, we observe a ``down" ramp with a negative slope $k=-0.10$. In comparison, the SFF of the regular SYK model~\eqref{4-syk} at the same low temperature $\b=10$ still exhibits clear ramp and plateau structure. 

Since SFF reflects the feature of spectral correlation or level spacing statistics in the ensemble of energy eigenstates around the scale of the temperature, the comparison between Fig.~\ref{SFF24d2b2} and \ref{SFF24d2b10}~indicates two distinct sectors of eigenstates with different statistics effectively engaging the SFF at different temperatures. The energy scale (above the ground state) that separates these two sectors should be the same order as $T_c$, which is compatible with our spectrum analysis. 

As we discussed in Section \ref{sec:5}, the critical temperature $T_c$ suggests the existence of a thermodynamical critical temperature $T_{t.c.}$ at the same order. Though our analysis in the main part of this paper is about the averaged spectrum, here the numerics of SFF give a further support for a phase transition below a temperature of the same order as $T_c$, and there should exists a significant difference of spectral correlations in the two phases. This is another strong evidence that the dcSYK model is quantum chaotic in the long time scale above this critical temperature. We will report further studies to better understand the low temperature phase in future.

\begin{figure*}[t]

    \begin{subfigure}[t]{0.195\linewidth}
        \centering

\includegraphics[width=\linewidth]{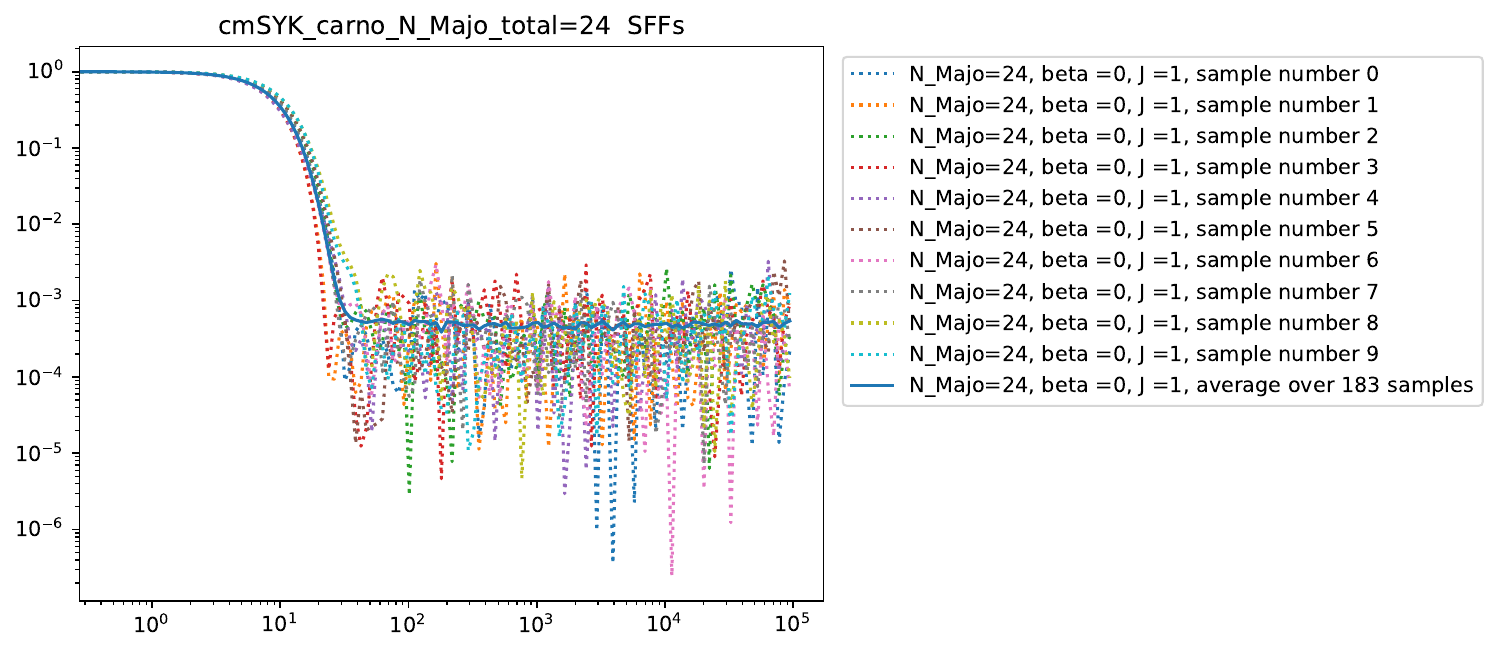}
        \caption{$d=1, \beta=0$, $h=4.9\times 10^{-4}$ }
        \label{fig:SFFd1N24}
    \label{SFFd1}
    \end{subfigure}
    \hfill
    \begin{subfigure}[t]{0.195\linewidth}
        \centering
     \includegraphics[width=\linewidth]{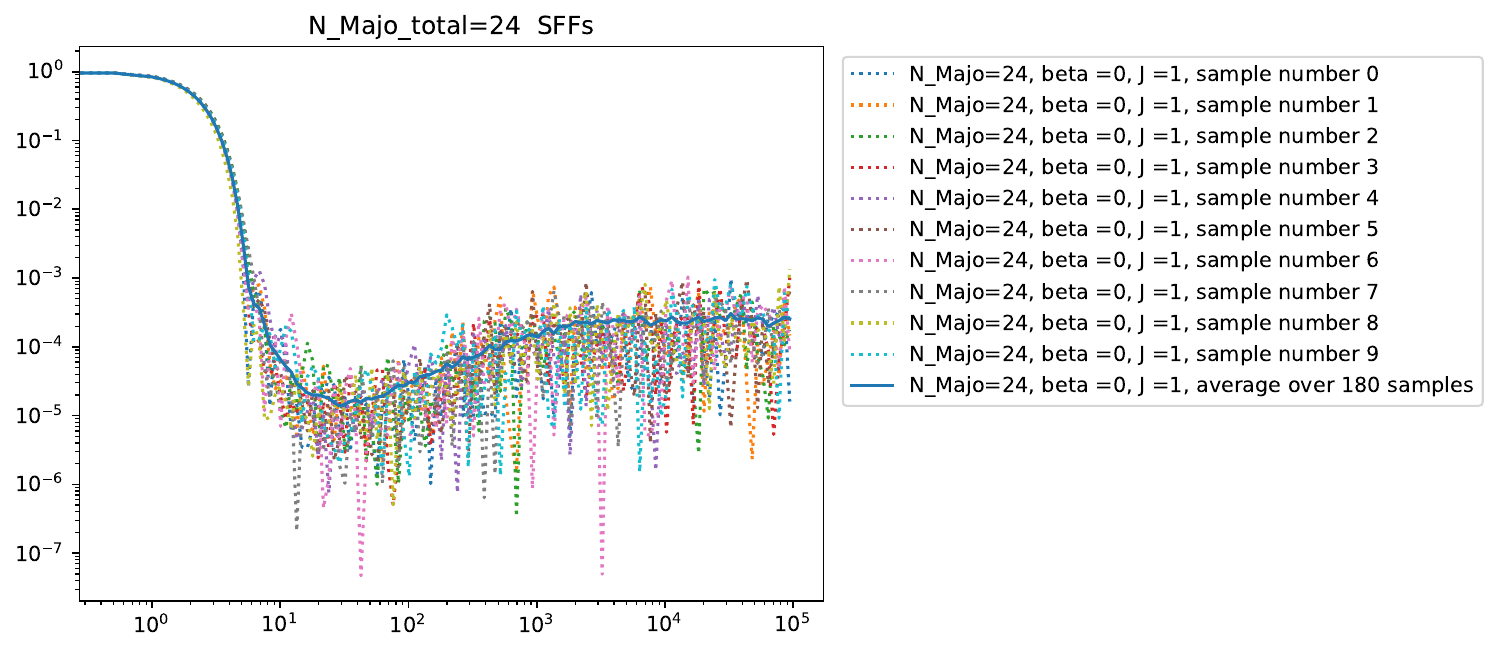}
        \caption{$d=2, \beta=0$, $k=0.74, h=2.4\times 10^{-4}$ }
        \label{SFF24d2b0}
    \end{subfigure}
    \hfill
    \begin{subfigure}[t]{0.195\linewidth}
        \centering
        \includegraphics[width=\linewidth]{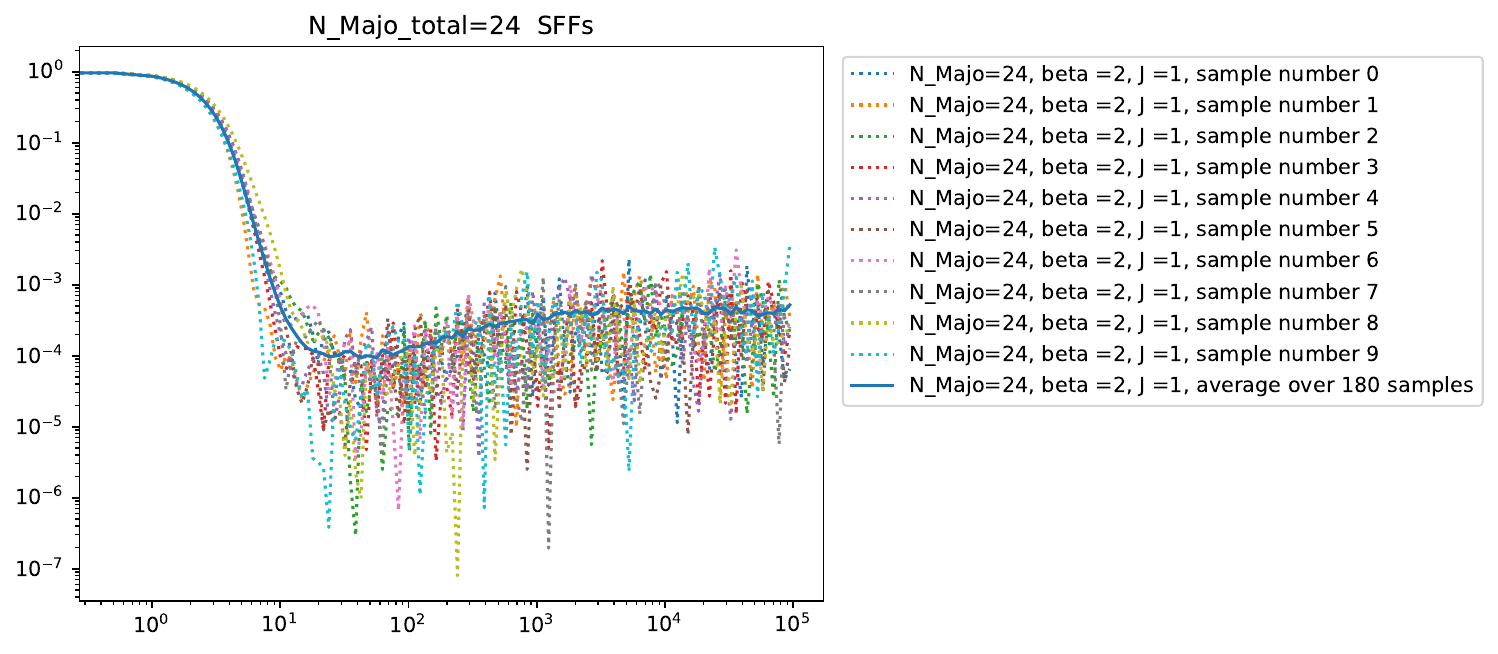}
        \caption{$d=2, \beta=2$, $k=0.45, h=4.3\times 10^{-4}$ }
        \label{SFF24d2b2}
    \end{subfigure}
    \hfill
    \begin{subfigure}[t]{0.195\linewidth}
        \centering
        \includegraphics[width=\linewidth]{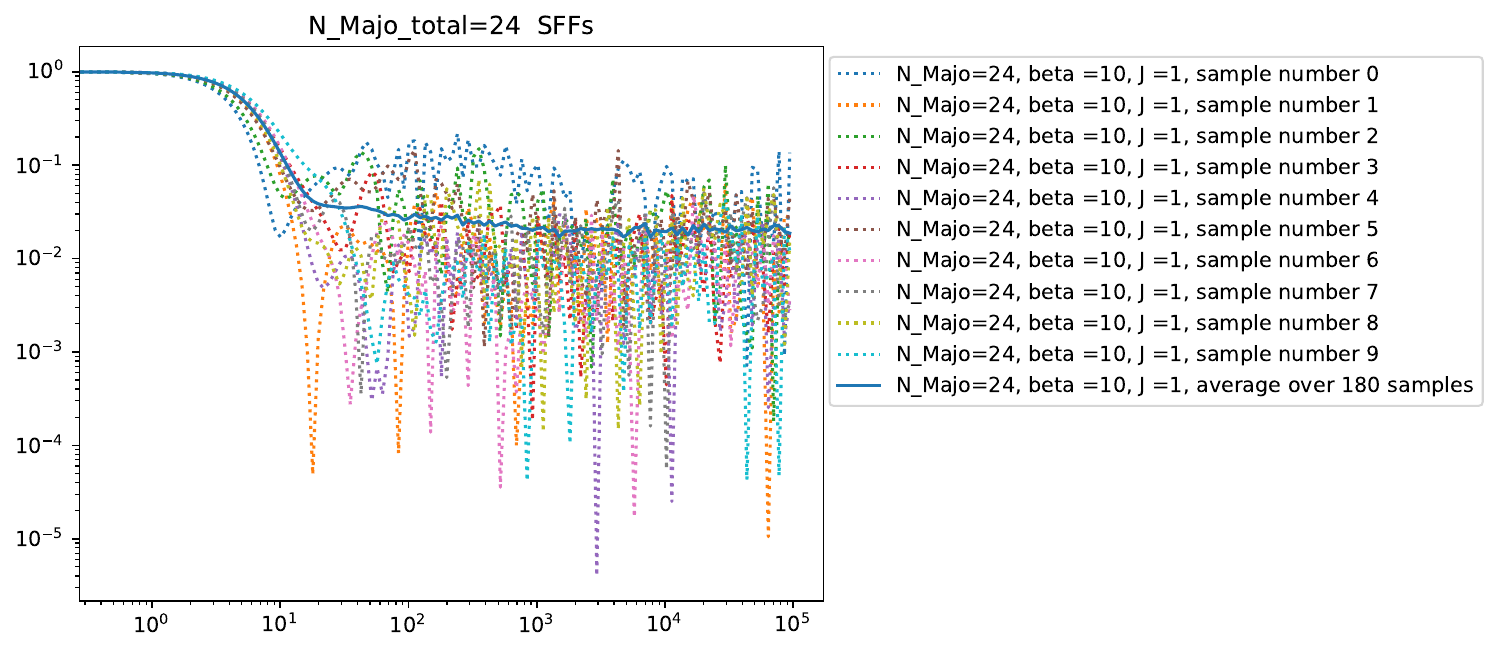}
        \caption{$d=2, \beta=10$, $h=2.0\times 10^{-2}$ }
        \label{SFF24d2b10}
    \end{subfigure}
    \hfill
    \begin{subfigure}[t]{0.195\linewidth}
        \centering
        \includegraphics[width=\linewidth]{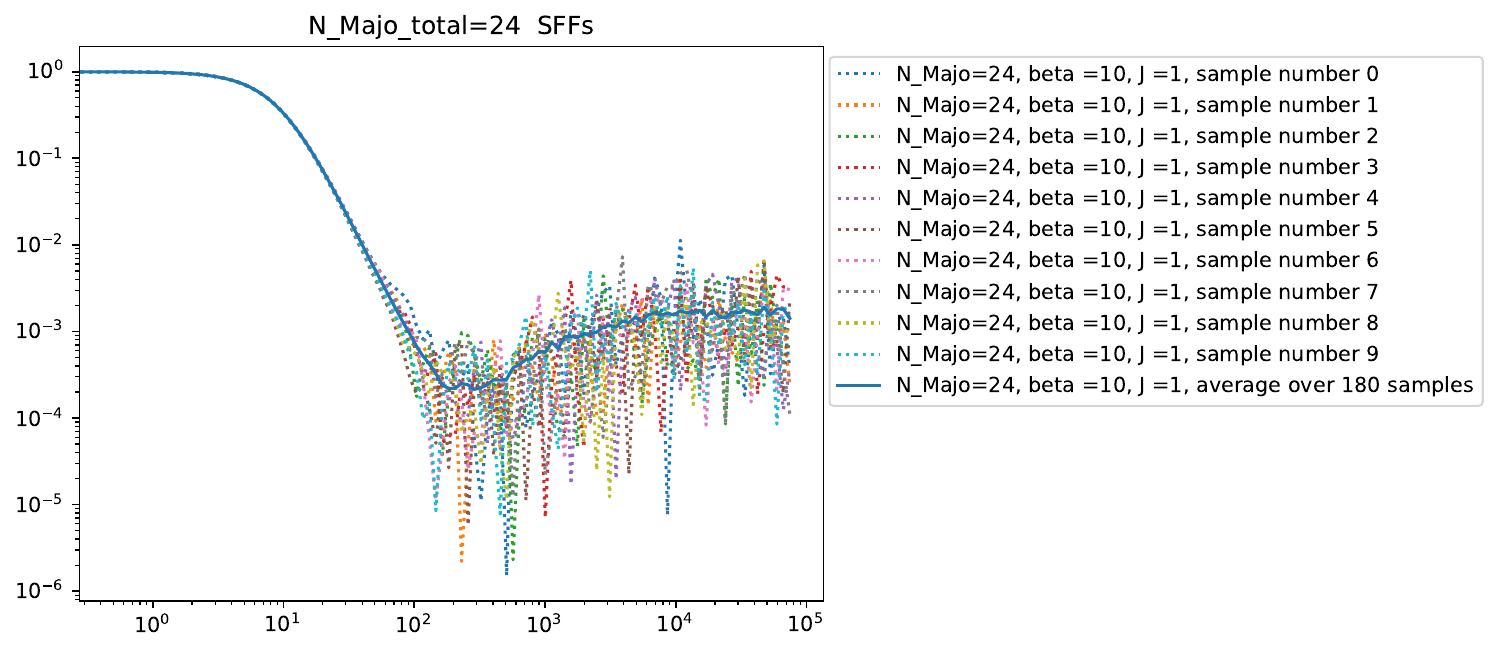}
        \caption{$d\to \infty$, $\beta=10$, $k=0.84, h=1.6\times 10^{-3}$ }
        \label{SFF24regb10}
    \end{subfigure}
    \caption{\justifying (a)-(d) The SFF of models $d=1$ and $d=2$ for different temperatures. (e) The SFF of regular SYK model \eqref{4-syk} at low temperature $\b=10$. In all cases we have $N=12$, and the solid blue curve is the averaged SFF. In the captions, $k$ is the slope of the ramp if it exists, and $h$ is the height of the plateau.}
    \label{SFFs}
\end{figure*}

\section{Conclusion and discussions} \label{sec:concl}

In this work, we have studied a variant of the $p$-local SYK model by combining $d$ copies of the integrable commuting SYK models, which we call the d-commuting SYK model. Though all terms in each commuting SYK Hamiltonian are commutative, each commuting SYK Hamiltonian does not commute with each other because they are constructed by different basis. Since the total Hamiltonian is the sum of these commuting SYK components, the integrability breaks down for $d>1$. 

We study the spectrum of this model in the double-scale limit with $\lam=4p^2/N$ fixed and $N\ra\infty$. The spectrum is non-compact for generic $d$ and becomes compact only when $d\ra\infty$. Regarding its shape, it is close to the compact form of $d\ra \infty$ above some energy scale but has an exponential suppressed tail below it. Therefore, this model can be regarded as an interpolation between integrability and chaos. For the canonical ensemble with temperature $T$, we can regard the system as chaotic for $T>T_c$ and non-chaotic for $T<T_c$.

We compute $T_c$ in two cases. The first case is $\lam\ra\infty$ where we can use free probability to find the exact form of the spectrum. The critical temperature $T_c=1/(2d)$ is read from the exponent of the exponential tail $e^{(E-E_0)/T_c}$ for $E<E_0$ for some $E_0$. The exponential tail is from a non-perturbative effect in $d$. The second case is $\lam\ra 0$ in the Schwarzian limit where we use the coarse-grained path integral to find the approximate spectrum in the IR. At the 1-loop level, we find the spectrum has an exponential tail for $d<9$. For $d>9$ we speculate the tail is due to higher loop effects. In both cases, we observe $T_c$ decreases as $d$ increases, which indicates a bigger chaotic regime. The thermodynamical meaning of the critical temperature $T_c$ is different for the two cases. For $\lam\ra\infty$, the thermodynamical critical temperature for a phase transition is much lower than $T_c$; for $\lam \ra 0$, it is of the same order as $T_c$. 

There have been other discussions about integrability to chaos transition in the context of SYK-like models. They generically fall into two classes. One class involves a direct coupling between an integrable model and a chaotic model, and the transition is a direct consequence of the competition between the two constituent models~\cite{Peng:2017kro}. The second class does not have such explicit coupling and the transition is purely due to the strong interaction within the single Hamiltonian~\cite{Peng:2018zap,Chang:2021wbx,Chang:2021fmd} and the transition appears at a special value of a parameter of the model. The theory we consider in this paper is another example of the second class, where the transition appears at $d=1$.

To justify the 1-loop computation of the coarse-grained path integral for $d<9$, we simulate the dcSYK model for $N=15$ and $p=2$, and draw the spectrum by direct diagonalization. Fitting the data, we find a nice match of both critical temperature $T_c$ and the ``ground" energy $\tilde E_0$ with the 1-loop computation for $d=2,3$. The fit becomes worse as $d$ gets larger due to both higher loop and the finite $d/N$ corrections. Numeric spectral form factor at different temperatures also demonstrate a strong evidence that the dcSYK model has a phase transition at some temperature $\sim O(T_c)$. The model is quantum chaotic only above this temperature in long time scales.

\paragraph*{Applications to simulation of holography} 

The existence of the critical temperature has a useful implication for future quantum simulation experiments for holographic systems (e.g. SYK model) under the slogan of ``quantum gravity in a lab" initiated by Susskind in \cite{Susskind:2017ney}. However, the regular SYK model has $O(N^p)$ numbers of terms and is extremely hard to implement on a quantum simulator. On the other hand, the sparse SYK model \cite{Xu:2020shn} containing randomly picked $O(N)$ terms is much simpler and argued to have a holographic description as well. Along this line, one of the most interesting simulation projects is the teleportation protocol of traversable wormholes \cite{Gao:2016bin,Gao:2019nyj,Schuster:2021uvg}. In a recent simulation \cite{Jafferis:2022crx}, an
attempt was made with a small $N=7$ SYK model with only five terms in the Hamiltonian that is learned to produce an arguable signal of traversable wormholes \cite{Kobrin:2023rzr,Jafferis:2023moh} on Google's Sycamore quantum chip. In this simulation, all terms in the learned Hamiltonian are commutative to each other and the model is integrable though it is slightly different from our construction \eqref{1-dcsyk} for $d=1$. The analysis in \cite{Gao:2023gta} shows that for small $N$ one cannot clearly distinguish the simulation signals from the traversable wormhole mechanism in the chaotic regular SYK model or the thermalization process in the integrable commuting SYK model (see also \cite{Lykken:2024ypy}). 

More generally, to simulate an authentic signal of traversable wormholes in a holographic system, besides increasing $N$, we need to break integrability by including adequate numbers of non-commuting terms and make sure the system is in a state exhibiting chaotic behavior. Note that any Hamiltonian in the simulation can be decomposed in the form of \eqref{1-dcsyk} where each $\tilde H_a$ only contains commuting terms, which may be different from our concrete construction in \eqref{1-dcsyk} though. For a decomposition with minimal number $d$ of groups, one can compute the averaged non-commutativity parameter $q=e^{-\lam}$ among groups, and our double-scaled analysis gives a typical estimate of the critical temperature $T_c$ in terms of $d$ and $\lam$. For temperatures above $T_c$ we can regard the system as having a convincing signal for traversable wormholes or other broader phenomena of holography. Roughly speaking, $T_c$ should play a role as the benchmark for future quantum simulations of quantum gravity. In this sense, we expect the d-commmuting SYK model to have a kind of universality of random models, which are parameterized by $\lam$ and $d$. It would be interesting to have a stringent test of this universality proposal in future works.

There are a few future directions. 
\begin{enumerate}
    \item Though we argue in Sec. \ref{sec:5} that 
there is a thermodynamical critical temperature for a phase transition for which self-averaging breaks down, it is interesting to study the property of the low-temperature phase. Note that the commuting SYK model is essentially $p$-spin SK model, which has 1-step replica symmetry breaking (RSB) at the thermodynamical critical temperature \cite{gross1984simplest}, and the low-temperature phase is a spin glass. It is perhaps natural to expect an RSB for the dcSYK model (probably unless $d\ra\infty$) if the temperature is low enough \footnote{See also a recent interesting variant of the SYK model without RSB \cite{Biggs:2023mfn}.}. This requires a careful study and generalization of the $n$-replica chord diagrams developed in \cite{Berkooz:2020fvm}.
   \item From the shape of the spectrum we speculate the chaotic regime above the critical temperature $T_c$. However, to justify this statement, we need to study the dynamics of the dcSYK model at different temperatures. Correlation functions can be represented in the language of chord diagrams and the coarse-grained path integral should also work when we include matters. It is interesting to see how the $h\neq 2$ modes contribute to the correlation functions at different temperatures in the $q\ra 1$ regime. In particular, if there is an RSB phase transition, the chord diagram techniques need to be updated to account for replicas, and it is worth exploring if quantum chaos exists in the RSB phase like \cite{Anous:2021eqj}.
   
   \item Recently, the sparse SYK model draws a lot of interest (e.g. \cite{Xu:2020shn, Garcia-Garcia:2020cdo,Orman:2024mpw}) because it reduces the difficulty of simulation by randomly choosing only $O(N)$ terms from the $O(N^p)$ terms in the regular SYK model. As explained in Sec. \ref{sec:2C}, the dcSYK model can be regarded as a sparse version of the regular SYK model. We can easily increase the sparseness of this model by sparsifying each commuting component $\tilde H_a$ to $O(N)$ terms such that the total number of terms in the Hamiltonian $\tilde H$ is of order $O(N)$. Assuming each term in $\tilde H_a$ appears with probability $p_0$, we need to rescale the variance of the coupling constant $J^a_{i_1\cdots i_p}$ by $1/p_0$. Following \cite{Garcia-Garcia:2020cdo}, it is easy to see that the spectrum is unchanged in the double-scale limit. However, as pointed in \cite{Orman:2024mpw}, when the sparseness is too high, the model has accidental degeneracy and quickly becomes non-chaotic. Since each commuting component is non-chaotic and increasing $d$ tends to quantum chaos, it is interesting to study in more detail how the critical sparseness depends on the interplay between $p_0$ and $d$.
   \item Another diagonose of quantum chaos is the out-of-time-order correlators (OTOC). Since our model demonstrates a chaotic-integrable transition, it is interesting to check if we can observe similar transition in OTOC. On the other hand, since there are examples of OTOC overestimating the chaotic behavior of special models with unstable fixed points~\cite{Xu:2019lhc,Rozenbaum:2019nwn,Hashimoto:2020xfr,Trunin:2023xmw,Trunin:2023rwm}, it is also interesting to test this in our model. 
\end{enumerate}

\begin{acknowledgments}
We thank Michael Winer, Anna Biggs, Roland Speicher, Robert de Mello Koch, Gregory Korchemsky for stimulating discussions. PG is supported by the US Department of Energy under grant DE-SC0010008. CP is supported by NSFC NO.~12175237, and NO.~12247103 the Fundamental Research Funds for the Central Universities, and funds from the Chinese Academy of Sciences. 
\end{acknowledgments}

\appendix

\section{Overlap of a generic pair of $X^a_I$ and $X^b_{I'}$} \label{app:1}

For each pair $a\neq b$, we can first redefine
$\psi_{i}$ appropriately such that the relative overlapping is equivalent
to $X_{I}^{1}$ and $X_{I'}^{2}$. To see this, consider $X_{I}^{a}$ and $X_{I'}^{a+b}$ with $b>0$.
We can first redefine $\psi_{2i}\ra\psi_{2i-2(a-1)}$ to shift $X_{I}^{a}\ra X_{I}^{1}$
and $X_{I}^{a+b}\ra X_{I}^{1+b}$. We first assume $N$ is coprime
with any number between $2$ and $d-1$. Then, we redefine $\psi_{2i-1}\ra\psi_{2bi-(2b-1)}$
and $\psi_{2i}\ra\psi_{2bi-(2b-2)}$. Since $b$ takes values between
$1$ and $d-1$, by our coprime assumption this redefinition is a
permutation with one cycle. Since random overlapping does not care
about the explicit numbering of Majorana fermions, this problem is
equivalent to the overlapping between $X_{I}^{1}$ and $X_{I'}^{2}$.

For a generic case where $N$ is not coprime with a number between
$2$ and $d-1$. The second redefinition does not lead to the overlapping
problem between $X_{I}^{1}$ and $X_{I'}^{2}$ but $b$ copies of
overlapping problem with $N\ra N/b$. To be precise, assume $N/b\in\Z$
and there exists $i,i'=1,\cdots,2N$ such that
\begin{align}
2bi-(2b-1)=2bi'-(2b-1)+2kN\nn\\
\implies i=i'+kN/b,\quad k\in\{0,\cdots,b-1\}
\end{align}
Therefore, after the second redefinition, the Majorana fermions in
each $\mX_{i+kN/b}^{1+b}$ for $i=1,\cdots,N/b$ form a cycle, and
there are $b$ cycles in total. Correspondingly $\mX_{i}^{1}$ can
be grouped into $b$ cycles and different cycles have no chance to
overlap with each other. 

In this case, we need to split $p$ into
$p_{1},\cdots,p_{b}$ with probability $P(\{p_{i}\})$, and for each
group of $2N/b$ Majorana fermions, we compute the overlap probability
for $s_{i}$ such that $\sum_{i}p_{i}=p$ and $\sum_{i}s_{i}=s$.
As we assume $d$ is much smaller than $N$, in the double scale limit
the distribution of $p_{i}$ is highly peaked at $p_{i}=p/b$. Define
$\lambda_{i}=p_{i}^{2}/(N/b)$, and we have
\begin{align}
P(s)=&\sum_{p_{i}s_{i}}P(\{p_{i}\})\prod_{i}\f{(2\lambda_{i})^{s_{i}}}{s_{i}!}e^{-2\lambda_{i}}\nn\\
=&\sum_{p_{i}}P(\{p_{i}\})\f{(2\sum_{i}\lambda_{i})^{s}}{s!}e^{-2\sum_{i}\lambda_{i}}\nn\\
\app&\f{(2\lambda_{0})^{s}}{s!}e^{-2\lambda_{0}}
\end{align}
where in the last step we used large $N$ approximation $\sum_{p_{i}}P(\{p_{i}\})\ra\int dp_{i}\d(p_{i}-p/b)$.
This shows that the overlapping probability is universal in large
$N$ limit. Nevertheless, if $d$ is very large and carefully chosen such that $b$ scales with $N$ or we define $\mX_{i}^{a}$ alternatively with order $N$ numbers of relative cycles between two different $a$'s,
the delta function approximation for $P(\{p_{i}\})$ does not hold
and we have to consider the correction to this probability distribution.

\section{Numeric study of the 1-loop effect for $d>9$}
\label{app:2}

Let us first observe that there is a nontrivial subleading shift of
ground energy if we just include the first eigenvalue in $K_\pm$. By the naive large $d$ approximation of
the chord diagram, the change is a simple correction to $q\ra \bar q$, which leads to the ground energy \eqref{eq:106-1}. Comparing with (\ref{eq:104}), we see that the subleading corrections
are different by 
\begin{equation}
E_{0}-\tilde{E}_{0}=-\f 2{\sqrt{\lambda(1-1/d)}}\left(\f{\lambda}4+O(\lam/d)\right)\label{eq:107}
\end{equation}
The first term does not vanish in large $d$ limit, and it is from
the 1-loop correction of $h\neq2$ piece. Indeed, this discrepancy comes from the 1-loop correction of $h\neq 2$ modes. If we only had the $h=2$ piece, the ground energy was given by \eqref{156-eq}, which matches with (\ref{eq:106-1}) in large $d$ limit. This is
reasonable because in the double-scaled SYK we only have a $h=2$ piece. 

\begin{figure}
\begin{centering}
\includegraphics[height=4cm]{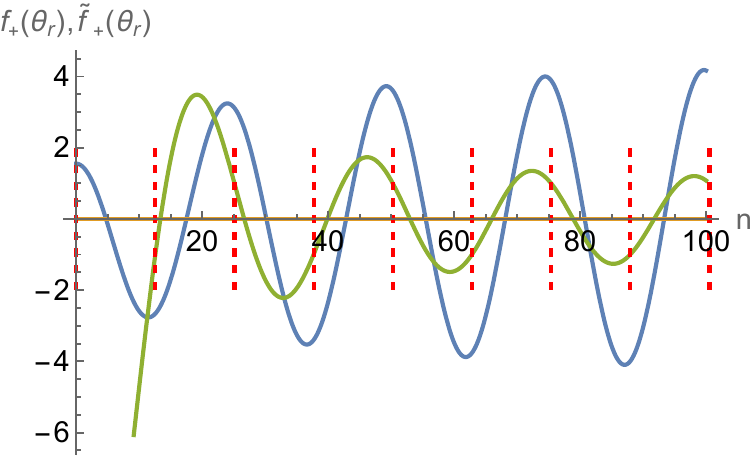}
\includegraphics[height=4cm]{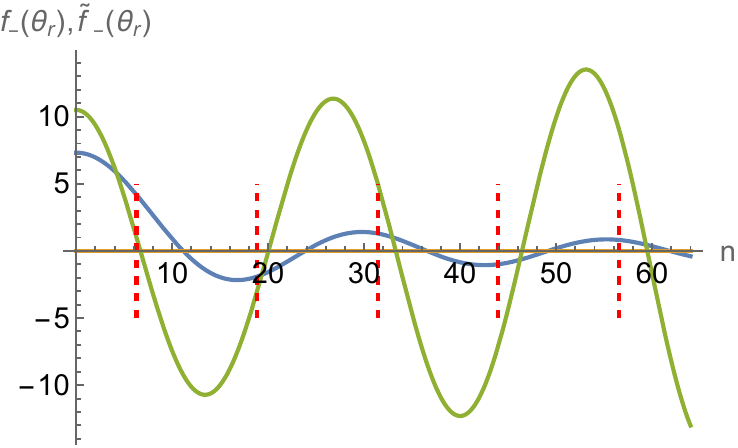}
\par\end{centering}
\caption{\justifying For low temperature and $h<1/2$, more and more eigenvalues in $\tilde{\protect\mM}_{\pm}$
deviates from $\protect\mM_{\pm}$. In this plot, we choose $h=0.3$
and $\protect\t_{r}=0.01$; and the blue/yellow curve is the real/imaginary
part of the exact eigenfunction $f_{\pm}(\protect\t_{r})$ and the
green curve is the approximate eigenfunction $\tilde{f}_{\pm}(\protect\t_{r})$.
\protect\label{fig:For-low-temperature}}
\end{figure}

Since this discrepancy does not vanish in the large $d$ limit, we must include more exact eigenvalues in $K_\pm$ to fix it. To proceed, let us first analyze the low temperature behavior of the eigenvalues for $h<1/2$. As we drop the temperature, more and more approximate real
eigenvalues in $\tilde{\mM}_{\pm}$ with small magnitude deviates
from the exact eigenvalues in $\mM_{\pm}$ as we can see from Fig.
\ref{fig:For-low-temperature} and compare it with Fig. \ref{fig:1b}
and \ref{fig:2a}. Following our strategy of Sommerfeld-Watson resummation,
we need to choose larger sets $K_{\pm}$ and $\tilde K_{\pm}$ as we decrease
the temperature. 

For low temperatures, the first few exact eigenvalues can be solved
by (\ref{eq:115}). For $\mM_{+}$, the first positive one is solved
by (\ref{eq:117}) but others in $\mM_{\pm}$ are similarly given
by \footnote{For $d=9$ and $h=1/2$, the approximate solution is $p/x=2n-1-2/(c_{n}+\log y)$
with $c_{n}$ a number depending on $n$ for $n\in\Z_{\mp}^{>0}$.
The estimate (\ref{eq:111}) and the numeric computation below still
hold.}
\begin{equation}
p/x=2(n-h)-\f{2\G(h+1/2)(y/4)^{1-2h}}{\G(3/2-h)\G(2h-n)\G(n)\sin h\pi}\label{eq:110-1}
\end{equation}
where $n\in\Z_{\mp}^{>0}$ for $\mM_\pm$ respectively. These exact eigenvalues, if included in $K_{\pm}$, each gives a finite
contribution to the determinant 
\begin{equation}
I_{\pm}\sim\log\cos^{2}(\pi h/2)+O(y^{1-2h})\label{eq:111}
\end{equation}
On the other hand, the approximate eigenvalues are more singular as
we can see from Fig. \ref{fig:For-low-temperature} that they are
quite close to the red dashed lines. These approximate eigenvalues
can also be solved perturbatively in low temperature as
\begin{equation}
p=\begin{cases}
4\pi n-8ny/\g & \tilde{\mM}_{+}\\
4\pi n-2\pi-4(2n-1)y/\g & \tilde{\mM}_{-}
\end{cases}\label{eq:112}
\end{equation}
where $n\in\Z^{>0}$ in both cases. By \eqref{eq:109} their contributions to determinant are
\begin{align}
I_{+}\sim&-\log\sin^{2}(2ny/\g)\\
I_{-}\sim&-\log\sin^{2}((2n-1)y/\g)\label{eq:113}
\end{align}
At leading order of $y$, they both scale as $-2\log y$. However,
this approximation breaks down quickly because as $n$ increases $ny/\g$
is no longer a small number. 

Therefore, we need to consider the accumulated effect of these eigenvalues,
which may give correction to the $\log y$ term. From Fig. \ref{fig:For-low-temperature}
we see that in each interval separated by $4\pi\Z$ or $4\pi\Z+2\pi$
for $f_{+}$ or $f_{-}$, there is only one exact eigenvalue and one
approximate eigenvalue. For these pair of eigenvalues, the approximate
one is always smaller than the exact one and tends to the latter as
$n$ goes large. Therefore, we can consider the approximation (\ref{eq:112})
works until $p$ hits the exact value (\ref{eq:110-1}), which gives
a natural estimate of the size of $\tilde K_{\pm}$ and $K_{\pm}$. Within
this regime, the sum of (\ref{eq:113}) and (\ref{eq:111}) is 
\begin{equation}
I_{+}+I_{-}\sim\sum_{n}\log\f{\sin^{2}(ny/|\g|)}{\cos^{2}(\pi h/2)}
\end{equation}
in which we should include $(\pi(1-h)/2)/(y/|\g|)\sim O(\b_{r})$
terms. This indicates a linear in $\b_{r}$ term from the accumulation
of eigenvalues, which gives a correction to the ground energy. 

\begin{figure*}
\begin{centering}
\subfloat[\label{fig:6a}]{\begin{centering}
\includegraphics[height=3.4cm]{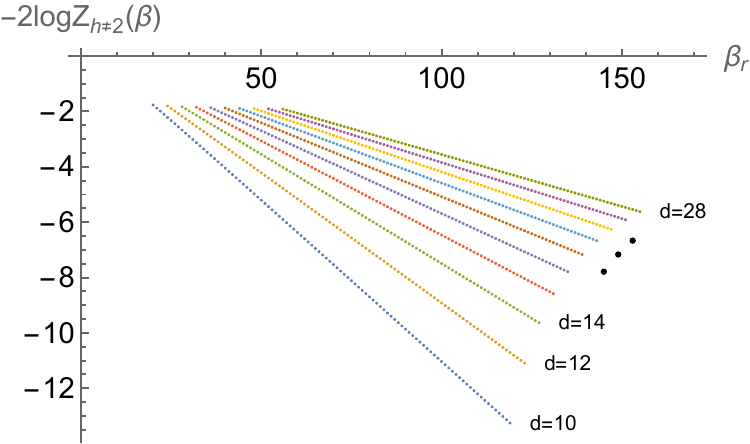}
\par\end{centering}
}\subfloat[\label{fig:6b}]{\begin{centering}
\includegraphics[height=3.4cm]{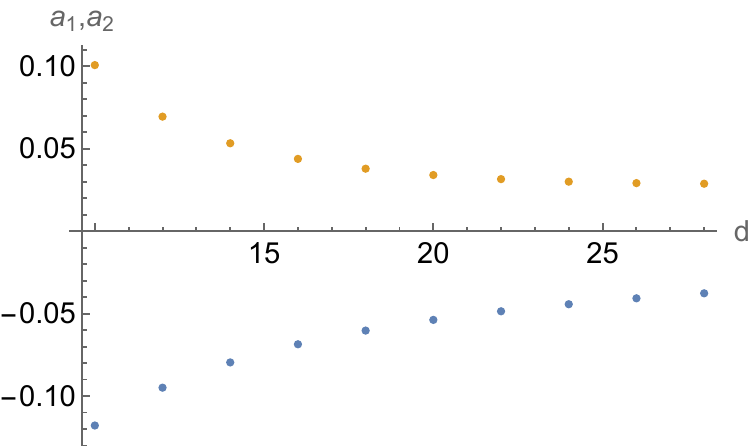}
\par\end{centering}
}\subfloat[\label{fig:6c}]{\begin{centering}
\includegraphics[height=3.4cm]{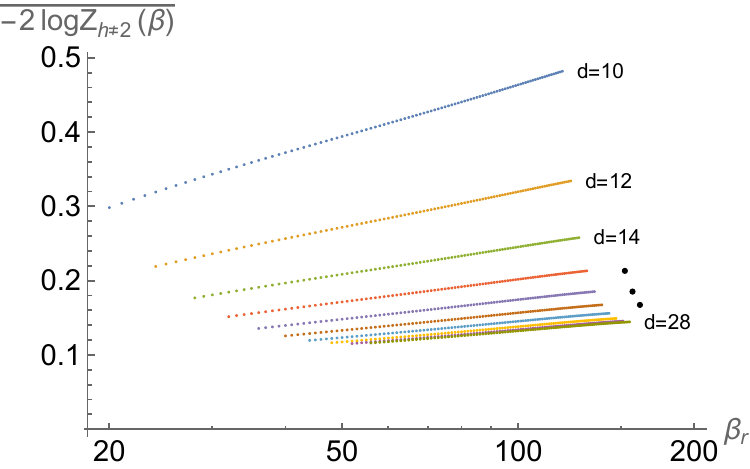}
\par\end{centering}
}
\par\end{centering}
\caption{\justifying (a) Numeric plot of $-2\log Z_{h\protect\neq2}(\protect\b)$ for $d=10$
to $d=28$. (b) The fit parameters $a_{1}$ (blue) and $a_{2}$ (yellow)
for $d=10$ to $d=28$. (c) Numeric plot of $\overline{-2\log Z_{h\protect\neq2}(\protect\b)}$,
which has constant and linear fit parts subtracted, for $d=10$ to
$d=28$. Note that the horizontal axis is in log scale.}
\end{figure*}

Numeric computation justifies our estimation. We can also consider
a large size of $K_{\pm}$ and $\tilde K_{\pm}$ and plot the 1-loop partition
function for the $h\neq2$ modes as a function of temperature. For
all terms discussed before, we have
\begin{align}
&-2\log Z_{h\neq2}(\b)=  \log\left|1-\f{x\tan(x/2)}{2(d-1)}\right|^{d-1}+x\tan\f x2 \nn\\
&+\f{d-1}2\left(\sum_{p\in K_{+}}\log|\sin^{2}\f p4|+\sum_{p\in K_{-}}\log|\cos^{2}\f p4|\right)\nonumber \\
 & -\f{d-1}2\left(\sum_{p\in K_{+}'}\log|\sin^{2}\f p4|+\sum_{p\in K_{-}'}\log|\cos^{2}\f p4|\right)\label{eq:115-2}
\end{align}
The result is shown in Fig. \ref{fig:6a}, in which we choose the
size of the $K_{\pm}$ to include the first 100 pairs of points. The
highest temperature (the smallest $\b_{r}$) is chosen to right above
$T_{0}$, which is a monotonically decreasing function of $d$, and
the lowest temperature (the largest $\b_{r}$) is chosen such that
the difference between the largest eigenvalues of $K_{\pm}$ and $\tilde K_{\pm}$
is about 0.001 \footnote{In principle, choosing larger $K_{\pm}$ and $K_{\pm}'$ allows us
to probe lower temperatures. However, Mathematica will loose accuracy
for the associated Legendre functions for too large $n$ and makes
the location of eigenvalues less reliable. }.

For different $d$, we see that their contributions (\ref{eq:115-2})
has a clear linear behavior with negative slope, whose absolute value is
indeed much smaller than $1/\sqrt{1-1/d}$, the naive expectation from \eqref{eq:107}.
The negative slope means that the ground energy is shifted to the
left. To find the subleading behavior, we fit the data using $a_{0}+a_{1}\b_{r}+a_{2}\log\b_{r}$
and find that the subleading $\log\b_{r}$ works quite well in low
temperature limit (see Fig. \ref{fig:6c}). Given this ansatz, the
low temperature partition function including all tree and 1-loop corrections
is
\begin{equation}
Z(\b)\sim\f 1{\b^{3/2+a_{2}/2}}e^{-\b(\tilde{E}_{0}+\f 12a_{1}\sqrt{\lambda})+c/\b}
\end{equation}


%

\end{document}